\tikzset{>=stealth, shorten >=1pt}
\tikzset{every edge/.style = {thick, ->, draw}}
\tikzset{every loop/.style = {thick, ->, draw}}
\newcommand{\ptime}{{\sc PTime}\xspace}
\newcommand{\pspace}{{\sc PSpace}\xspace}
\newcommand{\exptime}{{\sc ExpTime}\xspace}
\newcommand{\Subject}[1]{\paragraph*{#1.}}
\newcommand{\St}{~|~}
\newcommand{\tuple}[1]{\langle #1  \rangle}
\newcommand{\pair}{\tuple}
\newcommand{\Nat}{\ensuremath{\mathbb{N}}\xspace}
\newcommand{\PNat}{\Nat^+\xspace}
\newcommand{\HD}{\ensuremath{\mathrm{HD}}\xspace}
\newcommand{\A}{{\mathcal{A}}}
\newcommand{\B}{{\mathcal{B}}}
\newcommand{\C}{{\mathcal{C}}}
\newcommand{\D}{{\mathcal{D}}}
\newcommand{\G}{{\mathcal{G}}}
\newcommand{\M}{{\mathcal{M}}}                      
\newcommand{\K}{{\mathcal{K}}}
\newcommand{\T}{{\mathcal{T}}}
\newcommand{\N}{{\mathcal{N}}}
\renewcommand{\P}{{\mathcal{P}}}
\renewcommand{\S}{{\mathcal{S}}}
\newcommand{\R}{{\mathcal{R}}}
\newcommand{\Ghost}[1]{\textsf{#1-TokenGhost}}
\newcommand{\restr}[2]{\left.#1\right|_{#2}}
\newcommand{\content}{\text{content}\xspace}
\newcommand{\trans}[3]{#1\xrightarrow[]{#2}#3}
\newcommand{\Func}[1]{{\mathsf{#1}}}
\newcommand{\Delay}{\Func{Delay}}
\newcommand{\Frac}{\Func{Frac}}
\newcommand{\SigmaShift}{\Func{Delay}}
\newcommand{\reject}{\Func{Reject}}
\newcommand{\LimInf}{\Func{LimInf}}
\newcommand{\LimSup}{\Func{LimSup}}
\newcommand{\DSum}{\Func{DSum}}
\newcommand{\Sim}{Sim}
\renewcommand{\phi}{\varphi}
\newcommand{\eve}{\pmb{E}}
\newcommand{\op}{\textsf{op}}
\newcommand{\push}{\textsf{push}}
\newcommand{\pop}{\textsf{pop}}
\newcommand{\noop}{\textsf{noop}}
\definecolor{Green}{rgb}{0.13, 0.55, 0.13}
\newcommand{\strat}{\mathsf{s}}
\renewcommand{\epsilon}{\varepsilon}
\crefname{enumi}{}{}
\newcommand{\TableRef}[1]{{\small \cref{#1}}}
\title{History-Determinism vs Fair Simulation}
\titlerunning{History-Determinism vs Fair Simulation}
\author{Udi Boker}{Reichman University, Herzliya, Israel \and \url{https://faculty.runi.ac.il/udiboker/}}{udiboker@runi.ac.il}{0000-0003-4322-8892}{Israel Science Foundation grant 2410/22}
\author{Thomas A.  Henzinger}{Institute of Science and Technology Austria (ISTA), Klosterneuburg, Austria \and \url{https://pub.ista.ac.at/~tah/}}{tah@ista.ac.at}{0000-0002-2985-7724}{}
\author{Karoliina Lehtinen}{CNRS, LIS, Aix-Marseille Univ. \and \url{https://lehtinenkaroliina.wordpress.com/}}{lehtinen@lis-lab.fr}{0000-0003-1171-8790}{ANR QUASY 23-CE48-0008-01}
\author{Aditya Prakash}{University of Warwick, Coventry, UK \and \url{https://apitya.github.io/}}{aditya.prakash@warwick.ac.uk}{https://orcid.org/0000-0002-2404-0707}{Chancellors' International Scholarship from the University of Warwick and Centre for Discrete Mathematics and Its Applications (DIMAP)}
\authorrunning{U. Boker, T. A. Henzinger, K. Lehtinen, and A. Prakash} 
\keywords{History-Determinism} 
\begin{document}

\maketitle

\begin{abstract}
An automaton $\A$  is history-deterministic if its nondeterminism can be resolved on the fly, only using the prefix of the word read so far. This mild form of nondeterminism has attracted particular attention for its applications in synthesis problems. An automaton $\A$ is guidable with respect to a class $C$ of automata if it can fairly simulate every automaton $\B$ in $C$, whose language is contained in that of $\A$. In other words, guidable automata are those for which inclusion and simulation coincide, making them particularly interesting for model-checking. 

We study the connection between these two notions, and specifically the question of when they coincide. For classes of automata on which they do, deciding guidability, an otherwise challenging decision problem, reduces to deciding history-determinism, a problem that is starting to be well-understood for many classes.

We provide a selection of sufficient criteria for a class $C$ of automata to guarantee the coincidence of the notions, and use them to show that the notions coincide for the most common automata classes, among which are $\omega$-regular automata and many infinite-state automata with safety and reachability acceptance conditions, including vector addition systems with states, one-counter nets, pushdown-, Parikh-, 
and timed-automata.

We also demonstrate that history-determinism and guidability do not always coincide, for example, for the classes of timed automata with a fixed number of clocks.

\end{abstract}





\section{Introduction}
Language inclusion between automata is a key problem in verification: given an automaton representing a program and another representing a specification, language inclusion of the former in the latter captures precisely whether all executions of the program satisfy the specification. Unfortunately, in the presence of nondeterminism, inclusion is algorithmically hard. For instance, for regular automata it is \pspace-hard on both finite and infinite words.

Fair simulation is a more syntactic approximation of inclusion, defined by the simulation game~\cite{HKR02}. In this game, one player, in the role of the spoiler, builds, transition by transition, a run in one of the automata, say $\A$, while the other, in the role of the duplicator,  chooses at each turn a matching transition in the other automaton, say $\B$. The second player's task is to build a run that is accepting if the first player's run is accepting. If the duplicator has a winning strategy, then $\B$ is said to simulate $\A$, which, in particular, implies that $\A$'s language is included in $\B$'s language. Simulation, due to its local and syntactic nature, is generally easier to check than inclusion: for instance it is in \ptime~ for nondeterministic B\"uchi automata. As a result, automata for which language inclusion and simulation coincide are particularly well-suited for model-checking. We call such automata \textit{guidable}, after a similar notion used previously as an alternative to determinism for tree automata~\cite{CL08}. Despite their clear usefulness for model checking, guidable automata have so far been mostly used as a tool, but not studied much in their own right, with the notable exception of~\cite{NS21}.

Guidability is not easy to decide: it is contingent on an automaton simulating a potentially infinite number of language-included automata. We would like to have, whenever possible, a characterisation that is more amenable to algorithmic detection. 

Deterministic automata are of course always guidable, and so are \textit{history-deterministic} automata. These are mildly nondeterministic automata, in which nondeterministic choices are permitted, but can only depend on the word read so far, rather than the future of the word. These automata have received a fair bit of attention recently due, in particular, to their applications in synthesis problems~\cite{BL23}. In general, they offer an interesting compromise between the power of nondeterministic automata and the better algorithmic properties of deterministic ones. In particular, they can simulate all equivalent, or language-contained, automata as they only need the history to resolve nondeterministic choice in the best possible way---in other words, they are guidable. 
 In fact, at first it might appear that history-determinism and guidability should coincide; indeed, this is the case if we consider guidability with respect to all labelled transition systems \cite[Theorem 4]{HLT22}. However, there are also classes of automata for which this is not the case.
 

Guidability and history-determinism coinciding on a class $C$ of automata is equivalent to the description \textit{``for every automaton $\A\in C$ that is not history-deterministic, there is some $\A'\in C$ that is language-included in $\A$ but that $\A$ does not simulate'' (D).} Then it is easy enough to hand-pick automata to build classes where guidability and history-determinism do not coincide (for example, a class of inclusion-incomparable automata that are not all history-deterministic). However, as we will see, there are also more natural classes of automata, such as timed automata with a bounded number of clocks, for which guidability and history-determinism differ.

The characterisation (D) is  too abstract to be much use for analysing the usual automata classes we are interested in. We therefore prove that several more concrete criteria  (\cref{cl:Criteria}) imply that guidability and history-determinism coincide, and use each of these in a comprehensive analysis of standard automata classes.
Roughly, each of the criteria describes some sufficient closure properties which guarantee the existence of $\A'$ from description (D). If some automaton can simulate another automaton that is sufficiently difficult to simulate (for example a deterministic one, since they simulate all equivalent automaton), then it must be history-deterministic; as a result, a class of automata having sufficient closure properties (such as closure under determinisation), implies that guidability and history-determinism coincide. The challenge is to identify, for a variety of different classes of automata $C$, an automaton that is sufficiently difficult to simulate, while remaining in $C$.

In order to discuss our sufficient criteria in more detail, we need to start with a couple of key notions, the first of which is the \textit{1-token game} and the second is the \textit{1-token ghost}.

History-determinism of an automaton is tricky and expensive to decide directly~\cite{HP06}; As an alternative, Bagnoal and Kuperberg used $k$-token games as potential characterisations for history-determinism~\cite{BK18}. Roughly, they resemble a (fair) simulation-like game, played on a single automaton, where one player, Eve (in the role of Duplicator), must build transition-by-transition a run on a word dictated letter-by-letter by Adam, who, after each of Eve's choices, also builds $k$ runs transition-by-transition. Eve then wins if her run is accepting whenever Adam's run is accepting. Bagnol and Kuperberg showed that for B\"uchi automata, the $2$-token game indeed characterises history-determinism, which means that deciding history-determinism for B\"uchi automata is in \ptime~\cite{BK18}. Since then, the $1$- or $2$-token games have been shown to characterise history-determinism for various automata classes, including coB\"uchi~\cite{BKLS20b}, $\DSum$, $\LimInf$, and $\LimSup$ automata~\cite{BL22}.
These games contrast with the \textit{letter game}, a game which always characterises history-determinism, but which is often challenging to solve directly~\cite{HP06}.

In this work, we make heavy use of token-games, this time to understand the connection between history-determinism and guidability. 
In particular, we extend token games to be played over two automata (\cref{def:TokenGamesTwoAutomata}), separating between Eve's and Adam's automata, and define, given an automaton $\A$, that an automaton $\A'$ is a  \textit{$1$-token ghost} of $\A$ if $\A'$ is language-equivalent to $\A$ and Eve wins the $1$-token game between $\A'$ and $\A$.
For some classes of automata such a ghost is easy to construct, while for other classes it might not exist due to lacking closure properties. 

With these notions, we can now state our criteria.

\begin{theorem}\label{cl:Criteria}
The notions of history-determinism and guidability coincide for a class $C$ of labelled transitions systems (LTSs) if at least one of the following holds:
\begin{enumerate}
     \item \label{Item:Determinisation} 
     \emph{Determinisation.} $C$ is closed under history-determinism, i.e., for each nondeterministic LTS in $C$, there is a language-equivalent history-deterministic (or deterministic) LTS in $C$ as well.
     (\cref{cl:WhenNdet=Hdet})
    \item  \label{Item:OneTokenGhost}
    \emph{$1$-token ghost.} $1$-token games characterise history-determinism in $C$, and $C$ is closed under $1$-token ghost. (\cref{cl:TokenCriteria})
    \item  \label{Item:SafetyProduct}
    \emph{Strategy ghost.}  For every $\A\in C$ that is not history-deterministic, there is a deterministic LTS $\B$ over the alphabet of transitions of $\A$, such that $\B$ recognises the plays of a winning strategy of Adam in the letter game on $\A$, and $\B$, projected onto the alphabet of $\A$, has a $1$-token ghost in $C$.  (\cref{cl:CompositionCriterion})
\end{enumerate}
\end{theorem}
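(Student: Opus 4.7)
The plan is to prove each criterion by contrapositive, via description (D): assuming $\A\in C$ is not history-deterministic, I will exhibit an $\A'\in C$ with $L(\A')\subseteq L(\A)$ that $\A$ does not fairly simulate. In every case the argument is compositional: if $\A$ simulated $\A'$, composing Eve's simulation strategy with an auxiliary strategy supplied by the criterion would yield a winning Eve strategy either in the letter game on $\A$ or in the $1$-token game on $\A$, contradicting non-history-determinism.

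For Criterion (\cref{Item:Determinisation}), take $\A'$ to be the language-equivalent history-deterministic LTS in $C$ promised by closure. If $\A$ simulated $\A'$, Eve would win the letter game on $\A$ by feeding Adam's letters into $\A'$'s HD resolver to obtain a virtual $\A'$-run, accepting on every word in $L(\A)=L(\A')$, and lifting it to an accepting $\A$-run via the simulation strategy. For Criterion (\cref{Item:OneTokenGhost}), take $\A'$ to be the $1$-token ghost of $\A$; composing the simulation strategy with the ghost's winning strategy in the $1$-token game $(\A', \A)$ yields a winning Eve strategy in the $1$-token game on $\A$, which under the characterisation hypothesis makes $\A$ history-deterministic.

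For Criterion (\cref{Item:SafetyProduct}), let $\tau$ be a winning Adam-strategy in the letter game on $\A$, let $\B$ be its recognising deterministic LTS over $\Delta_\A$, let $\pi(\B)$ denote its projection onto the alphabet of $\A$, and let $\A'$ be the $1$-token ghost of $\pi(\B)$ in $C$. The inclusion $L(\A')=L(\pi(\B))\subseteq L(\A)$ follows because every $\tau$-consistent play has its label in $L(\A)$ by $\tau$ winning. If $\A$ simulated $\A'$, composition with the ghost strategy would give Eve a winning strategy in the $1$-token game $(\A, \pi(\B))$. Using this composed strategy in the letter game on $\A$ against $\tau$, Eve also plays the virtual Adam $\pi(\B)$-token, choosing at each step the $\pi(\B)$-transition whose underlying $\Delta_\A$-letter is her own current $\A$-transition---well-defined because $\B$ is deterministic over $\Delta_\A$. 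Eve's resulting $\A$-run is then $\tau$-consistent, hence accepted by $\B$, so the virtual $\pi(\B)$-token run is accepting, and by the $1$-token winning condition so is Eve's $\A$-run---contradicting $\tau$ being winning for Adam.

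The main obstacle is Criterion (\cref{Item:SafetyProduct}), where the subtlety lies in the simultaneous, self-referential choice of Eve's $\A$-transition (which the $1$-token strategy wants to produce given the virtual token) and the virtual $\pi(\B)$-token (which we want to derive from that $\A$-transition via $\B$). Since Eve controls both sides of this virtual composition, this is resolved by treating the two as a joint move, with correctness hinging precisely on $\B$ being deterministic over the transition alphabet $\Delta_\A$ rather than merely on $\pi(\B)$ being an automaton over $\Sigma$. The other two criteria reduce to routine composition of well-understood simulation and token-game strategies.
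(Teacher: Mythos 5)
Your proposal is correct and follows essentially the same route as the paper: criteria 1 and 2 are proved by composing Eve's simulation strategy with the HD resolver, respectively the ghost strategy (the paper's \cref{cl:WhenNdet=Hdet}, \cref{cl:Delay-ssimulation-implies-gk} and \cref{cl:TokenCriteria}), and criterion 3 by reconstructing the virtual $\B_\Sigma$-token from Eve's own $\A$-transitions via the determinism of $\B$ over $\Delta_\A$, exactly as in \cref{cl:CompositionCriterion}. The only cosmetic difference is that you first package the simulation-plus-ghost composition as a winning strategy for Eve in $G_1(\A,\B_\Sigma)$ and then play it against $\tau$, whereas the paper carries out the three-way composition directly inside the letter game; the content is the same.
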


We prove the criteria of \cref{cl:Criteria} in \cref{sec:criteria}, and use them in \cref{sec:Yes} to show that the notions of history-determinism and guidability coincide for numerous classes of automata, listed in~\cref{table:table1}.
We also provide counter-examples of classes for which history-determinism and guidability do not coincide, as listed in ~\cref{table:table1}, and elaborated on in \cref{sec:No}.
We restrict our analysis to automata over infinite words, which are better behaved in this context, and discuss how to adapt our techniques for finite words in~\cref{sec:conclusions}.

\newcolumntype{C}[1]{>{\centering\let\newline\\\arraybackslash\hspace{0pt}}m{#1}}

\begin{table*}[ht!]
    \centering
    \def\arraystretch{1.2} 
    
    \begin{tabular}    { | C{7cm} | C{4.4cm}| } 
    \hline
    Automata Class &  HD\,=\,Guidability by  \\ 
    \hline\hline
    
%
   $\omega$-regular &  \emph{Determinisation\newline or Strategy ghost}\\
	\hline
   Fixed-index parity (e.g., B\"uchi), Weak & \emph{Strategy ghost}
   \TableRef{cl: hd-guid parity} 
	\\ \hline
	
   Linear & \emph{Strategy ghost}  ~\TableRef{cl:LinearHdGuidability}
   	.  (Not ghost-closed \TableRef{cl:LinearNoDelay})  
    \\	\hline
    
	 Safety \& Reachability\newline pushdown automata, VASS, timed, one-counter automata, one-counter net, Parikh &  \emph{$1$-token ghost}
	\newline \TableRef{cl:timedhd-guid,cl:hd-guid-pushdownparikhvass}
	\\   \hline
	
	 VPA with any $\omega$-regular acceptance condition &  \emph{Strategy ghost} \TableRef{cl:hd-guid-vpa} 
	\\   \hline

	

	
	\multicolumn{2}{|l|}{Classes for which \HD $\neq$ guidability:}\\
	\multicolumn{2}{|l|}{- B\"uchi automata with a bounded number of states.~~\TableRef{cl:BoundedStates}}  \\
	\multicolumn{2}{|l|}{- Timed automata with a bounded number of clocks.~~\TableRef{cl:1-clock}}  \\

	\hline \hline
	\multicolumn{2}{|l|}{Notable classes for which we leave the question \HD $=$ \hspace{-12pt} ? guidability open:}\\
	\multicolumn{2}{|l|}{- PDA/OCA/OCN/Timed automata with general $\omega$-regular acceptance}  \\
	\hline
	
\end{tabular}
    \caption{History-determinism vs guidability
    \label{table:table1}}
\end{table*}

A practical corollary of our result is that guidability is decidable, with the complexity of deciding history-determinism, for $\omega$-regular automata (\exptime~for parity automata, \ptime for B\"uchi and coB\"uchi), safety and reachability timed automata (\exptime) and visibly pushdown automata (\exptime). For details on the complexity of these procedures, we refer the reader to a recent survey~\cite{BL23}.

\section{Preliminaries}
We use $\Nat$ and $\PNat$ to denote the set of non-negative and positive integers respectively. We use $[i..j]$ to denote the set $\{i,\ldots,j\}$ of integers, and $[i]$ for the set $[1..i]$.
An alphabet $\Sigma$ is a non-empty set of letters. A finite or infinite word is a finite or infinite sequence of letters from $\Sigma$ respectively. We let $\epsilon$ be the empty word, and $\Sigma_{\epsilon}$ the set $\Sigma\cup \{\epsilon\}$. The set of all finite (resp. infinite) words is denoted by $\Sigma^{*}$ (resp. $\Sigma^{\omega}$). A \emph{language} is a set of words. 

\subsection{Labelled transition systems} 
A \emph{labelled transition system} (LTS) $\A=(\Sigma,Q, \iota, \Delta,\alpha)$ consists of a potentially infinite alphabet $\Sigma$, a potentially infinite state-space $Q$, an initial state $\iota\in Q$, a labelled transition relation $\Delta\subseteq Q\times\Sigma_{\epsilon}\times Q$, and a set of accepting runs $\alpha$, where a run $\rho$ is a (finite or infinite) sequence of transitions starting in $\iota$ and following $\Delta$.
We may write $\trans{q}{\sigma}{q'}$ instead of $(q,\sigma,q')\in\Delta$ for $\sigma \in \Sigma_{\epsilon}$. Given a finite run $\rho = q_0 \xrightarrow{\sigma_1} q_1 \xrightarrow{\sigma_2} \cdots \xrightarrow{\sigma_k} q_k$ on the word $v = \sigma_0 \cdot \sigma_1 \cdot \sigma_2 \cdots \sigma_k \in \Sigma_{\epsilon}^{k+1}$, we write $q_0 \xRightarrow{v,\rho} q_k$ to denote that $\rho$ is a transition sequence on the word $v$ that starts at $q_0$ and ends at $q_k$.
An LTS is \emph{deterministic} if for every state $q$ and letter $\sigma \in \Sigma$, there is at most one transition $q \xrightarrow{\sigma} q'$ from $q$ on the $\sigma$, and there are no transitions on $\epsilon$.


A word $w\in\Sigma^{\omega}$ is accepted by an LTS $\A$ if there is an accepting run of $\A$ on $w$. The language $L(\A)$ of $\A$ is the set of words that it accepts. An LTS $\A$ is contained in an LTS $\B$, denoted by $\A\leq\B$, if $L(\A) \subseteq L(\B)$, while $\A$ and $\B$ are equivalent, denoted by $\A \equiv \B$ if $L(\A) = L(\B)$. 
 %
 
A \emph{transducer} is like an LTS without acceptance condition and $\epsilon$-transitions but with an output labelling: it is given by a tuple $(\Sigma_I, \Sigma_O, Q,\iota, \Delta, \gamma)$, where $\Sigma_I$ and $\Sigma_O$ are the input and output alphabets, respectively, $\Delta\subseteq Q\times\Sigma_I\times Q$ is the transition relation, and $\gamma:\Delta\to\Sigma_O$ is the output function. 
A \emph{strategy} in general is a deterministic transducer. It is finite memory if the transducer has a finite state space.

\subsection{Automata} 
We briefly recall the automata types considered here, which we assume to operate on infinite words unless stated otherwise, and leave the formal definitions to the appendix.

LTSs are represented concisely by various automata. An automaton $\A$ induces an LTS $\B$, whose states are the configurations of $\A$, and whose runs are the same as $\A$'s runs. If $\A$'s states and configurations are the same, as is the case with $\omega$-regular automata which we define below, then $\B$ is identical to $\A$, but with the acceptance condition given by a set of runs (as opposed to an $\omega$-regular conditions). If the configurations of $\A$ contain additional data, as is the case for example with pushdown automata, then $\B$ and $\A$ have different states. Notice that $\A$ is deterministic iff $\B$ is.  
The acceptance condition of $\A$ induces the acceptance condition on $\B$. 

 In a parity condition, $\alpha$ assigns priorities in $\mathbb{N}$ to either states or transitions, and a run is accepting if the highest priority that occurs infinitely often along it is even. An $[i,j]$-parity automaton, for $i<j$ two natural numbers is a parity automaton whose priorities are in $[i,j]$. A parity automaton is said to be a \emph{weak automaton} if there is no cycle in the automaton containing both an even and odd priority. In a reachability condition, some states are labelled final; a run accepts if it reaches a final state. In a safety automaton, some states are labelled safe; a run accepts if it remains within the safe region. 

In a nondeterministic $\omega$-regular automaton $(\Sigma,Q,\iota,\delta,\alpha)$, $\Sigma$ and $Q$ are finite, and the acceptance condition $\alpha$ is based on the set of states (or transitions) visited infinitely often along a run.
A timed automaton (TWA) $(\Sigma, Q, \iota, C,\delta,\alpha)$ has a set of clocks $C$ and its transitions are guarded by inequalities between the clock values and can reset clocks. 
It reads timed words, which consist of letters of  a finite alphabet $\Sigma$ paired with delays from $\mathbb{R}$. A timed automaton recognises a timed language, for example ``at some point an event occurs twice exactly one time-unit apart.''

We also handle pushdown automata, one-counter automata, vector addition systems with states, one-counter nets and Parikh automata with reachability and safety acceptance. We define these classes of infinite state systems uniformly in~\cref{sec:infinite-state} as classes of finite state automata with transitions that modify an infinite \textit{content space}. A visibly pushdown automaton (VPA) is a pushdown automaton without $\varepsilon$-transitions, in which the input alphabet is partitioned into $\pop,\push$ and $\noop$ letters that induce only transitions that $\pop,\push$ and have no effect on the stack respectively.

\section{History-determinism, simulation and related games}\label{sec:games}
Different simulation-like games that capture either a relationship between automata or properties of a single automaton are at the heart of our technical developments. In this section we go over the various games---both known and newly defined---that will be played throughout this article, and which allow us to connect guidability and history-determinism.
These games are all based on Adam (the spoiler) building a word letter by letter, and potentially a run in an automaton over that word, while his opponent Eve (the duplicator) tries to build a single accepting run transition by transition. The differences between these games are based on whether they are played on one or two automata, whether Adam picks transitions, and if so, whether he does it before Eve. The winning condition is similar in all cases: Eve's run must be accepting whenever Adam's run is accepting, or if Adam does not have a run, then whenever the word built by Adam's moves is in the language of a specified automaton. This results in three styles of games: (i) simulation games, played on two automata, in which Adam plays before Eve, and each builds a run in their respective automaton; (ii) token-games, which can be played on one or two automata,  in which Adam first declares the letter, and then Eve plays her transition before Adam plays his; and (iii) the letter game, played on a single automaton, in which Adam only chooses letters and does not build a run at all.\\

Fair simulation between two  LTSs (or automata) is captured by the simulation game defined below:
\begin{definition}[Simulation game]
	Consider LTSs $\A = (\Sigma, Q_A, \iota_A,\Delta_A, \alpha_A)$ and $\B = (\Sigma, Q, \iota_B, \Delta_B, \alpha_B)$. 
	The simulation game $\Sim(\B,\A)$ between $\B$ and $\A$ is a two player-game played between Adam and Eve with positions in $Q_A \times Q_B$ which starts at the position $(p_0,q_0) = (\iota_A,\iota_B)$. At round $i$ of the play, for $i \geq 0$, when the position is $(p_i,q_i)$:
	\begin{itemize}
		\item Adam picks  $\sigma_i \in \Sigma$ and a transition (or transition sequence, in the presence of $\epsilon$-transitions) $p_i \xRightarrow{\sigma_i,\rho_i} p_{i+1}$ in $\A$;
		\item Eve picks a transition (or transition sequence, in the presence of $\epsilon$-transitions)  $q_i \xRightarrow{\sigma_i,\rho'_i} q_{i+1}$ in $\B$; they proceed from $(p_{i+1},q_{i+1})$.
	\end{itemize}
	An infinite play produces a run $\rho_A$ in $\A$ consisting of transitions chosen by Adam and a run $\rho_E$ in $\B$ of transitions chosen by Eve, both on $\sigma_0 \sigma_1 \sigma_2 \cdots $. We say that Eve wins the play if $\rho_E$ is accepting or $\rho_A$ is rejecting.   
\end{definition}
If Eve has a winning strategy in this game, we say that $\B$ simulates $\A$, denoted by $\Sim(\B,\A)$. It is easy to observe that if $\B$ simulates $\A$, then $L(\A)\subseteq L(\B)$.
An LTS $\A$ is \emph{guidable} with respect to a class $C$ of LTSs if $\A$ simulates every LTS $\A'$ in $C$ that satisfies $L(\A')\subseteq L(\A)$.

The following \emph{letter game} based definition of history-determinism, was introduced by Henzinger and Piterman~\cite{HP06}, and coincides with Colcombet's notion of translation strategies~\cite{Col13}.

Given an LTS $\A$, the \textit{letter game} on $\A$, denoted by $\HD(\A)$ is similar to the simulation game except that instead of playing transitions in an automaton, Adam just chooses letters and builds a word $w$, letter by letter, which should be in the language of $\A$, while Eve tries to build a run of $\A$ over $w$. More precisely, the letter game starts with Eve's token at the initial state $\iota$, and proceeds in rounds. At round $i$, where Eve's token is at $q_i$:
\begin{itemize}
	\item Adam chooses a letter $\sigma_i$ in the alphabet $\Sigma$ of $\A$;
	\item Eve chooses a transition  $\trans{q_i }{\sigma_i}{q_{i+1}}$ (or a transition sequence $q_i \xRightarrow{\sigma_i,\rho_i}q_{i+1}$ in the presence of $\epsilon$-transitions) of $\A$ over $\sigma_i$; Eve's token moves to $q_{i+1}$.
\end{itemize}
In the limit, a play consists of the word $w=\sigma_0\sigma_1\cdots$ and the run $\rho=\rho_0 \cdot \rho_1 \cdot \rho_2 \cdots$. Eve wins the play if $w\notin L(\A)$ or $\rho$ is accepting. We say that $\A$ is history-deterministic (\HD) if Eve has a winning strategy in the letter game over $\A$.

Token games are known to characterise history-determinism on various classes of automata~\cite{BK18,BL22,BL23}. We generalise token games to be played on two LTSs below, which makes them more akin to a variation of simulation. This will help us relate simulation and history-determinism in~\cref{sec:criteria}. We only use the $1$-token version here.

\begin{definition}[$1$-token games over two LTSs (or automata)]\label{def:TokenGamesTwoAutomata}
	Consider LTSs $\A'$ and $\A$ with initial states $p_0$ and $q_0$ respectively. In the $1$-token game on $\A'$ and $\A$ denoted by $G_1(\A', \A)$, Eve has a token with which she constructs a run in $\A'$, and Adam has a token with which he constructs a run in $\A$. The game proceeds in rounds, and at round $i$ of the play with token positions $(p_i,q_i)$, for each $i \geq 0$:
	\begin{itemize}
		\item Adam chooses a letter $\sigma_i$ in $\Sigma$;
		\item Eve chooses a transition (or a transition sequence, in the presence of $\varepsilon$-transitions) $p_i \xRightarrow{\sigma_i,\rho'_i} p_{i+1}$ in $\A'$;
		\item Adam chooses a transition (or transition sequence, in the presence of $\varepsilon$-transitions) $q_{i} \xRightarrow{\sigma_i,\rho_i} q_{i+1}$; the game proceeds from $(p_{i+1},q_{i+1})$.
	\end{itemize} 
 An infinite play produces a word $w=\sigma_0\dots$, a sequence of transitions $\rho_E$ of $\A'$ chosen by Eve, and a sequence of transitions $\rho_A$ in $\A$ chosen by Adam. Eve wins if $\rho_E$ is accepting or if $\rho_A$ is rejecting.
\end{definition}

A strategy for Eve here is a function $\strat:(\Delta^{+})^* \times \Sigma \to (\Delta')^{*}$, where $\Sigma$ is the alphabet of $\A$ and $\A'$, and $\Delta$ and $\Delta'$ are the sets of transitions of $\A$ and $\A'$, respectively.
When clear from context,  $G_1(\A', \A)$ also denotes the claim that Eve has a winning strategy in the game $G_1(\A', \A)$. 
As an automaton and its induced LTS have the same runs, $G_1(\A',\A)$ holds for automata $\A$ and $\A'$ iff it holds for their induced LTSs. We also write $G_1(\A)$ for $G_1(\A,\A)$.

Note that the $1$-token game and the simulation game differ in one key aspect: in the simulation game, Adam plays first, and Eve can use the information of the transition to inform her choice, while in the $1$-token game, Eve must choose her transition based only on the \textit{letter} chosen by Adam, who plays his transition after Eve. 






\section{Criteria for when History-Determinism = Guidability}\label{sec:criteria}
We now provide criteria which guarantee that history-determinism and guidability coincide for a class of LTSs. In \cref{sec:Yes}, we use these to show the coincidence of the two notions for many standard automata classes.

\subsection{Closure under (history-)determinism}\label{sec:criterion-HD.tex}

A first observation is that if every LTS $\A$ can be determinised within the class $C$, or even if there exists an equivalent \HD LTS $\A'$ within $C$ then $\A$ is \HD if and only if it is guidable.

\begin{restatable}{lemma}{clwhenndethdet}\label{cl:WhenNdet=Hdet}
	History-determinism and guidability coincide for any class $C$ of LTSs in which the languages expressed by history-deterministic (or deterministic) LTSs are the same as languages expressed by nondeterministic LTSs.
\end{restatable}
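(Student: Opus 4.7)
The plan is to prove the two directions of the equivalence separately. The easy direction, history-determinism implies guidability, does not require the closure assumption at all: given an HD LTS $\A$ and any $\A' \in C$ with $L(\A') \subseteq L(\A)$, Eve wins $\Sim(\A, \A')$ by ignoring Adam's chosen transitions, extracting only the letters he plays, and running her HD strategy on $\A$. If Adam's run in $\A'$ is accepting, then the word $w$ he builds lies in $L(\A') \subseteq L(\A)$, so Eve's HD strategy produces an accepting run in $\A$. Otherwise Eve wins vacuously.

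The interesting direction, guidability implies history-determinism, is where the hypothesis enters. Suppose $\A \in C$ is guidable. By the closure assumption, there exists an HD (or deterministic) LTS $\A' \in C$ with $L(\A') = L(\A)$. Since $L(\A') \subseteq L(\A)$ and $\A'\in C$, guidability of $\A$ gives a winning strategy $\strat_{\Sim}$ for Eve in $\Sim(\A, \A')$. Let $\strat_{\HD}$ be a winning strategy for Eve in the letter game on $\A'$.

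The plan is to compose $\strat_{\HD}$ and $\strat_{\Sim}$ into a winning strategy for Eve in the letter game on $\A$, as follows. At each round in which Adam plays a letter $\sigma$:
\begin{enumerate}
    \item Eve queries $\strat_{\HD}$ on $\sigma$ to obtain a transition (or transition sequence) $\rho'$ in $\A'$;
    \item She treats the pair $(\sigma, \rho')$ as Adam's move in a virtual play of $\Sim(\A, \A')$, and queries $\strat_{\Sim}$ to get a transition (or transition sequence) $\rho$ in $\A$;
    \item She plays $\rho$ as her move in the letter game on $\A$.
\end{enumerate}
Consider an infinite play produced by this strategy on some word $w$. If $w \notin L(\A)$, Eve wins by definition. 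Otherwise $w \in L(\A) = L(\A')$, so by the winning condition of the letter game on $\A'$, the run $\rho'_\infty$ assembled by $\strat_{\HD}$ is accepting in $\A'$. Since $\strat_{\Sim}$ is a winning simulation strategy, Eve's assembled run $\rho_\infty$ in $\A$ must then also be accepting. Hence Eve wins the letter game on $\A$, proving $\A$ is HD.

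The main (mild) obstacle is ensuring that the composed strategy is well defined in the presence of $\epsilon$-transitions: $\strat_{\HD}$ may respond to a letter with a whole transition sequence rather than a single transition, and this sequence must be a legal move for Adam in $\Sim(\A,\A')$. This is built into \cref{def:TokenGamesTwoAutomata} and the simulation game definition, which explicitly allow transition sequences, so no additional argument is needed beyond noting that the strategies compose round-by-round. No further machinery is required, so the proof reduces to carefully spelling out the composition described above.
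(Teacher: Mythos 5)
Your proof is correct and follows essentially the same route as the paper's: the easy direction ignores Adam's token and runs the HD strategy on the letters, and the converse composes Eve's letter-game strategy on the equivalent HD LTS $\A'$ with her winning strategy in $\Sim(\A,\A')$, feeding the former's moves as Adam's virtual moves in the latter. Your write-up is just a slightly more explicit spelling-out of the paper's composition argument.
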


The proof is simple: one direction is trivial (\HD always implies guidability) and conversely, if an automaton $\A$ is not \HD, then it cannot simulate any equivalent \HD automaton, implying that $\A$ is not guidable.

Various examples of such classes are provided in \cref{sec:Straightforward}, as summarised in~\cref{table:table1}.
In particular, the general class of all labelled-transition systems~\cite{HLT22}, safety/reachability visibly pushdown automata~\cite{AM04}, as well as finite-state automata on finite words (NFAs), and co-B\"uchi, Parity, Rabin, Streett, and Muller automata on infinite words. Yet, this is not the case for B\"uchi automata or parity automata with a fixed parity index. History-determinism is also strictly less expressive than nondeterminism for pushdown automata, Parikh automata, timed automata and one-counter nets.

\subsection{Via token games}\label{sec:criterion-tokens}

For classes that are not closed under determinisation, we have to find some other type of automaton that is difficult to simulate. To do so, we revisit token games, previously used to help decide history-determinism, to relate history-determinism and guidability.
Recall that we extended the definition of $1$-token games, so that they are played on two automata, rather than one. 
 In the next definition, we use this extended notion of $1$-token game to identify, for each automaton $\A$, an automaton $\A'$ such that Eve wins the the $1$-token game on $\A$ if and only if $\A$ simulates $\A'$.

\begin{definition}[1-token ghost]\label{def:Delay}
	An LTS (or an automaton) $\A'$ is a \emph{$1$-token ghost} of an LTS $\A$, denoted by $\Ghost{1}(\A',\A)$, if $\A'\equiv\A$ and $G_1(\A',\A)$.
\end{definition}

To show that the ghost automaton has the property that $\Sim(\A,\A')$ if and only if Eve wins $G_1(\A,\A)$, we compose the strategies in $\Sim(\A,\A')$ and $G_1(\A',\A)$. 

\begin{restatable}{lemma}{cldelaysimimpliesgk}\label{cl:Delay-ssimulation-implies-gk}
	Consider LTSs $\A$ and $\A'$, such that $\A$ simulates $\A'$ and $\Ghost{1}(\A',\A)$. Then Eve wins $G_1(\A)$.
\end{restatable}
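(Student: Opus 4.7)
The plan is to compose Eve's winning strategies in the two games $\Sim(\A,\A')$ and $G_1(\A',\A)$ that we are given by hypothesis, and show that the composite strategy wins $G_1(\A,\A)$. Let $\sigma_S$ denote Eve's winning strategy in $\Sim(\A,\A')$ and $\sigma_G$ her winning strategy in $G_1(\A',\A)$. The crucial observation, which makes the composition temporally sound, is that $\sigma_G$ only requires Adam's letter (not his $\A$-transition) to produce Eve's next $\A'$-transition, and that $\sigma_S$ requires Adam's letter together with his $\A'$-transition---both of which we can obtain by running $\sigma_G$ on the current letter.

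Concretely, I would describe Eve's strategy in $G_1(\A,\A)$ round by round, maintaining two ``shadow'' plays: one of $G_1(\A',\A)$ and one of $\Sim(\A,\A')$. When Adam plays letter $\sigma_i$ in $G_1(\A,\A)$, Eve feeds $\sigma_i$ into $\sigma_G$ (as Adam's move in the shadow $G_1(\A',\A)$) to obtain an $\A'$-transition sequence $\rho'_i$; she then feeds $(\sigma_i,\rho'_i)$ into $\sigma_S$ (as Adam's move in the shadow $\Sim(\A,\A')$) to obtain an $\A$-transition sequence $\rho_i^E$, which she plays in $G_1(\A,\A)$. When Adam subsequently plays his $\A$-transition sequence $\rho_i^A$ in $G_1(\A,\A)$, Eve relays this as Adam's $\A$-move in the shadow $G_1(\A',\A)$. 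By construction, the three plays agree on: the word $w$; the Eve-$\A'$-run $\rho'_E$ (in the $G_1(\A',\A)$ shadow and the $\Sim(\A,\A')$ shadow); the Eve-$\A$-run $\rho_E$ (in the $G_1(\A,\A)$ play and the $\Sim(\A,\A')$ shadow); and the Adam-$\A$-run $\rho_A$ (in the $G_1(\A,\A)$ play and the $G_1(\A',\A)$ shadow).

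To finish, I would verify the winning condition by a short case analysis. If $\rho_A$ is rejecting, Eve already wins $G_1(\A,\A)$. Otherwise, since $\sigma_G$ is winning in $G_1(\A',\A)$, the run $\rho'_E$ must be accepting; but then, since $\sigma_S$ is winning in $\Sim(\A,\A')$, either $\rho'_E$ is rejecting---impossible---or $\rho_E$ is accepting, so Eve wins $G_1(\A,\A)$.

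The main obstacle, beyond this straightforward composition, is bookkeeping in the presence of $\varepsilon$-transitions: Eve (and Adam) play transition \emph{sequences} rather than single transitions, so one must argue that the shadow plays can still be synchronised letter-by-letter. This is handled by noting that $\sigma_G$ and $\sigma_S$ take as input exactly the same kind of $(\Delta^+)^*\times\Sigma$ history that Eve sees in $G_1(\A,\A)$, so the composition is well-defined verbatim from the formal strategy type given after \cref{def:TokenGamesTwoAutomata}.
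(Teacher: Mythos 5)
Your proposal is correct and follows essentially the same route as the paper's proof: composing Eve's winning strategy in $G_1(\A',\A)$ (the paper's ``imaginary friend Berta'', your shadow plays) with her winning strategy in $\Sim(\A,\A')$, with exactly the same per-round scheduling and the same final implication chain (Adam's $\A$-run accepting $\Rightarrow$ the $\A'$-run accepting $\Rightarrow$ Eve's $\A$-run accepting). Your explicit remark on why the timing is sound and the case analysis of the winning condition only spell out what the paper leaves implicit.
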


\begin{proof}
	Let $s_{sim}$ be a winning strategy of Eve in the simulation game between $\A$ and $\A'$, and $s'$	her winning strategy in $G_1(\A',\A)$.	Eve then has a winning strategy $s$ in $G_1(\A)$: she plays the strategy $s_{sim}$ in $\Sim(\A, \A')$ against her imaginary friend Berta, who plays the strategy $s'$ in $G_1(\A',\A)$ against Adam. In more detail: In each round $i$ of the game $G_1(\A)$, Adam chooses a transition sequence $\rho_{i-1}$ in $\A$ on $\sigma_{i-1}$ (except for the first round) on his token and a letter $\sigma_i$, then Berta chooses the transition sequence $\rho^B_i = s'\left ( \rho_0 \ldots \rho_{i-1}, \sigma_i\right )$ over the letter $\sigma_i$ in $\A'$ on her token in $G_1(\A',\A)$, and then Eve chooses the transition sequence $\rho_i=s_{sim}(\rho^B_0\ldots \rho^B_{i-1})$ in $\A$. 
	
	The run built by Eve with the strategy $s$ is accepting if the run built by Berta is, which is in turn accepting if Adam's run is. Hence, $s$ is a winning strategy for Eve in $G_1(\A)$.
\end{proof}

Then, for classes in which token games characterise history-determinism and which are closed under the ghost relation, guidability and history-determinism coincide.

\begin{definition}
	A class $C$ of LTSs is \emph{closed under $1$-token ghost} if for every $\A\in C$ there exists $\A'\in C$ such that $\Ghost{1}(\A',\A)$.
\end{definition}

\begin{restatable}{lemma}{clTokenCriteria}\label{cl:TokenCriteria}
	Given a class $C$ of LTSs closed under $1$-token ghost for which $G_1$ characterises history-determinism, history-determinism and guidability coincide for $C$. 
\end{restatable}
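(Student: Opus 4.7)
The plan is to prove the two directions separately, with the forward direction (\HD implies guidable) being the easy one and the reverse direction relying crucially on the two hypotheses on $C$.

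For the easy direction, I would recall the standard fact that any \HD LTS simulates every language-contained LTS: given $\B \in C$ with $L(\B) \subseteq L(\A)$, Eve wins $\Sim(\A, \B)$ by feeding the letters Adam plays, together with his chosen transitions in $\B$ (which she projects to letters), into her winning strategy in the letter game $\HD(\A)$. The resulting run in $\A$ is accepting whenever Adam's run in $\B$ is accepting, because then the underlying word lies in $L(\B) \subseteq L(\A)$, so the \HD strategy is required to produce an accepting run. Hence \HD LTSs are always guidable, irrespective of the class $C$.

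For the nontrivial direction, suppose $\A \in C$ is guidable with respect to $C$; I want to show $\A$ is \HD. First I invoke the closure of $C$ under $1$-token ghost to obtain $\A' \in C$ with $\Ghost{1}(\A', \A)$, i.e. $\A' \equiv \A$ and Eve wins $G_1(\A', \A)$. Since $L(\A') = L(\A) \subseteq L(\A)$ and $\A' \in C$, guidability of $\A$ yields $\Sim(\A, \A')$. Now I apply \cref{cl:Delay-ssimulation-implies-gk} with the roles of $\A$ and $\A'$ as stated: from $\Sim(\A, \A')$ and $\Ghost{1}(\A', \A)$ the lemma gives that Eve wins $G_1(\A)$. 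Finally, since by assumption $G_1$ characterises history-determinism on $C$, winning $G_1(\A)$ implies that $\A$ is \HD.

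The main obstacle is essentially pre-packaged into \cref{cl:Delay-ssimulation-implies-gk}, whose proof is the real technical content: composing Eve's simulation strategy with the ghost's $1$-token strategy so that the resulting single-automaton $1$-token strategy wins. Given that lemma and the two closure/characterisation hypotheses, the argument here is a direct chaining of implications. The only subtle point to flag is simply checking that the lemma is applicable in the correct direction, that is, that guidability provides $\Sim(\A, \A')$ rather than $\Sim(\A', \A)$; this is where we use $L(\A') \subseteq L(\A)$, which holds because $\A' \equiv \A$.
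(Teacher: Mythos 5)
Your proof is correct and follows essentially the same route as the paper: the easy direction is the standard fact that \HD implies guidability, and the hard direction instantiates guidability on a $1$-token ghost $\A'$ obtained from the closure hypothesis, then chains $\Sim(\A,\A')$ and $G_1(\A',\A)$ through \cref{cl:Delay-ssimulation-implies-gk} and the $G_1$-characterisation. The only difference is that you spell out the easy direction in more detail than the paper does; nothing is missing.
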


\begin{proof}
Being \HD always implies guidability, so one direction is easy. For the other direction, if $\A$ simulates every LTS $\A'\in C$, such that $\A'\leq \A$, then in particular it simulates an LTS $\A'\in C$, such that $\Ghost{1}(\A',\A)$, as $C$ is closed under the $1$-token ghost. By \cref{cl:Delay-ssimulation-implies-gk}, Eve wins $G_1(\A)$, implying that $\A$ is \HD, as $G_1$ characterises history-determinism in $C$.
\end{proof}


A $1$-token ghost is often easy to build, by delaying nondeterministic choices by one letter (\cref{def:Simpleautomata-SigmaShift}), as shown in \cref{sec:infinite-state} for pushdown automata, one-counter automata, vector addition system with states, one-counter nets and Parikh automata.

For some automata classes, however, showing closure under $1$-token  ghosts is trickier: for VPA the stack action must occur as the letter is read, and for timed 
automata configuration updates are sensitive to the current timestamp. We handle these complications in~\cref{sec:vpa}. 
We can also only use \cref{cl:TokenCriteria} with respect to automata classes for which $1$-token games characterise history-determinism, which is not the case for parity automata or $\omega$-VPA~\cite{BK18}.
\subsection{Via Adam's strategy in the letter game}\label{sec:criterion-composition}

As we will see in detail in \cref{sec:linear}, some classes, such as linear automata, are neither closed under $1$-token ghost nor determinisation, so there is no hope for the above criteria to apply. Our final criterion is an alternative which, instead of requiring all automata to admit a $1$-ghost, builds a difficult-to-simulate automaton from Adam's winning strategy in the letter game. The intuition is that Adam's winning strategy in the letter game on an automaton $\A$ captures behaviour that is difficult for $\A$ to simulate, so if we can turn Adam's strategy into an automaton (which will be language-contained in $\A$ since Adam must play a word in the language of $\A$), then this automaton will not be simulated by $\A$. To build this automaton, we first project an automaton $\B$ recognising Adam's winning plays from his strategy onto the alphabet of $\A$, to obtain an automaton $\B_{\Sigma}$ that recognises the words played by Adam's strategy. Then, by taking the $1$-token ghost of $\B_{\Sigma}$, we obtain an automaton $\B'$ against which the simulation game is essentially the letter game against Adam's strategy. If the resulting automaton is always still in the class $C$, guidability and history-determinism coincide.


\begin{lemma}\label{cl:CompositionCriterion}
	History-determinism and guidability coincide for classes $C$ of LTSs in which, for each $\A\in C$ that is not history-deterministic, there is a deterministic LTS $\B$ over the alphabet of transitions of $\A$ that recognises the plays of a winning strategy of Adam in the letter game on $\A$, and $\B$, projected onto the alphabet of $\A$, has a $1$-token ghost in $C$. 
\end{lemma}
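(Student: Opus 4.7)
The plan is to use the criterion's witness $\B'$ itself as a counterexample to guidability. Since \HD always implies guidability, only the converse needs work: I take some $\A\in C$ that is not \HD and show it is not guidable. As $\A$ is not \HD, Adam has some winning strategy $\tau$ in the letter game on $\A$. By the criterion there is a deterministic LTS $\B$ over the alphabet $\Delta_{\A}$ whose accepting runs encode the plays consistent with $\tau$, together with a $1$-token ghost $\B'\in C$ of the projection $\B_{\Sigma}$ of $\B$ onto the alphabet $\Sigma$ of $\A$. I will show that $\B'$ is language-included in $\A$ but not simulated by $\A$, witnessing that $\A$ is not guidable in $C$.

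Language inclusion is the quick part. By the ghost condition, $L(\B')=L(\B_{\Sigma})$, and every $w\in L(\B_{\Sigma})$ is a word along some play in which Adam follows $\tau$ (obtained by Adam choosing the letters prescribed by $\tau$ and Eve picking some matching $\A$-transitions). Since $\tau$ is winning for Adam, every such $w$ must lie in $L(\A)$---otherwise Eve would immediately win that play. Hence $L(\B')\subseteq L(\A)$.

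The crux is to show $\A$ does not simulate $\B'$, and I do this by contradiction. Assume a winning Eve-strategy $s_{\mathrm{sim}}$ in $\Sim(\A,\B')$ and a winning Eve-strategy $s_{\mathrm{ghost}}$ in $G_1(\B',\B_{\Sigma})$ (the latter by the ghost property). I will compose them, in the style of \cref{cl:Delay-ssimulation-implies-gk}, into an Eve-strategy in the letter game on $\A$ that defeats $\tau$. When (real) Adam plays a letter $\sigma_i$, Eve first feeds $\sigma_i$ to $s_{\mathrm{ghost}}$ to obtain a $\B'$-transition $\rho'_i$, then feeds $\rho'_i$ to $s_{\mathrm{sim}}$---viewing $\rho'_i$ as the imaginary Adam's move in $\Sim(\A,\B')$---to obtain the $\A$-transition $\rho_i$ that she plays in the letter game. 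To keep the auxiliary $G_1$-game consistent, she lets the imaginary Adam in $G_1(\B',\B_{\Sigma})$ reply with the unique $\B_{\Sigma}$-transition obtained by applying $\B$'s deterministic transition on the letter $\rho_i\in\Delta_{\A}$ and projecting back to $\Sigma$.

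The key observation is that when real Adam plays $\tau$, the sequence $\rho_0,\rho_1,\ldots$ Eve plays is by construction a play of $\tau$, hence accepted by $\B$; so the imaginary Adam's $\B_{\Sigma}$-run (its projection) is also accepting. Then the winningness of $s_{\mathrm{ghost}}$ forces Eve's simulated $\B'$-run to be accepting, and the winningness of $s_{\mathrm{sim}}$ in turn forces Eve's $\A$-run to be accepting. Hence Eve beats $\tau$ in the letter game on $\A$, contradicting $\tau$ being a winning strategy for Adam. The main obstacle is thus the synchronised bookkeeping of three parallel games with two layers of imaginary Adams, threading Eve's single real move through all of them consistently; the smaller technical points will be the handling of $\epsilon$-transitions (which turn individual moves into transition sequences in all three games) and the option of completing $\B$ with a rejecting sink if it is not already total on $\Delta_{\A}$, neither of which affects the argument's core.
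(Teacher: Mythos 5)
Your proof is correct and follows essentially the same route as the paper's: you derive the contradiction by composing the assumed simulation strategy in $\Sim(\A,\B')$ with the ghost strategy in $G_1(\B',\B_\Sigma)$, using the determinism of $\B$ to supply the imaginary opponent's $\B_\Sigma$-moves, and thereby build an Eve strategy that defeats $\tau$ in the letter game on $\A$. The language-inclusion step and the bookkeeping of the three interleaved games are likewise the same as in the paper's proof.
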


\begin{proof}
	Consider a winning strategy $\tau$ of Adam in the letter game on $\A$, and let $\B$ be a deterministic LTS that recognises the plays of $\tau$, seen as runs of $\A$.
	Let $\B_\Sigma$ be the projection of $\B$ onto $\Sigma$: it is otherwise like $\B$, except that its alphabet is $\Sigma$ instead of the transitions $\Delta_\A$ of $\A$ and as a result it has additional nondeterminism. Crucially, every transition in $\B$ is still a transition in $\B_\Sigma$. Given a sequence of transitions $t_0t_1\dots t_i \in \Delta_\A^*$, we call $t''_0t''_1\dots t''_i$ its run in $\B$, which is uniquely defined since $\B$ is deterministic. Note that this sequence of transitions is also a run over the word of $t_0t_1\dots t_i$ in $\B_\Sigma$. This also extends to infinite sequences. Since every run accepted by $\B$ is a play winning for Adam in the letter game over $\A$, their projection onto $\Sigma$ must be in $L(\A)$, so $L(\B_\Sigma)\subseteq L(\A)$.
	
	Now, let $\B'$ be the $1$-token ghost of $\B_\Sigma$, witnessed by a strategy $\strat_1$ of Eve in the game $G_1(\B',\B_\Sigma)$. Assume, towards contradiction, that $\Sim(\A,\B')$ via some strategy $\strat_{sim}$.
	We construct a strategy $\strat$ of Eve in the letter game on $\A$ that is winning against $\tau$, in which Eve plays $\strat_{sim}$ against her imaginary friend Berta in $\Sim(\A,\B')$, who in turn is playing $\strat_1$ against Adam in $G_1(\B', \B_\Sigma)$.
	
	In more detail, Adam begins by playing $\sigma_0$ according to $\tau$ in the letter game on $\A$; 
	Berta responds with a transition (or sequence of transitions in the presence of $\epsilon$-transitions) $\rho'_0=\strat_1(\sigma_0)$; and then Eve responds with $\strat(\sigma_0)=\rho_0=\strat_{sim}(\rho'_0)$. 
	On the $i^{th}$ round, when Adam chooses the letter $\sigma_{i}$, after the sequence $\rho_0\dots \rho_{i-1}$ of Eve's moves in the letter game, and the sequence  $\rho''_0,\dots,\rho''_{i-1}$ of transitions in $\B_\Sigma$, which is the $\Sigma$-projection of the unique run of $\B$ on $\rho_0\dots \rho_{i-1}$, viewed as a word over $\Delta_A$, Berta makes the move $\rho'_i=\strat_1(\rho''_0,\dots, \rho''_{i-1},\sigma_i)$ in $G_1(B',\B_\Sigma)$, and then Eve the move $\strat(\sigma_0,\rho_0,\dots,\rho_{i-1},\sigma_i)=\rho_i=\strat_{sim}(\rho'_0,\dots,\rho'_{i-1})$ in $\Sim(\A,\B')$ and in the letter game.
	
	We argue that $\strat$ is winning against $\tau$. Indeed, the run $\rho''_0\rho''_1\dots$ in $\B_\Sigma$ must be accepting since the sequence of transitions $\rho_0 \rho_1\dots$ that Eve plays agrees with $\tau$. Then, since Berta is playing a winning strategy in $G_1(\B',\B_\Sigma)$, the sequence $\rho'_0 \rho'_1\dots$ is also an accepting run over the same word. Since Eve is playing a winning strategy in $\Sim(\A,\B')$, the sequence $\rho_0 \rho_1\dots$ is also an accepting run over the same word. This contradicts $\tau$ being a winning strategy for Adam.
	We conclude that $\A$ does not simulate $\B'$ and is therefore not guidable.
\end{proof}


\cref{cl:CompositionCriterion} can be applied to various automata classes, as summarised in \cref{table:table1}, including $\omega$-regular automata with an $[i,j]$-parity acceptance condition (\cref{sec:Straightforward}), linear automata (\cref{sec:linear}), and visibly pushdown automata (\cref{sec:vpa}).


This concludes our criteria. Concerning the necessity of each criterion, notice that:
\begin{itemize}
	\item The  first criterion (\cref{cl:Criteria}.\ref{Item:Determinisation}) is not subsumed by the others, as demonstrated with the class of all LTSs---it is closed under determinization ~\cite[Theorem 3.4]{BHLST'23}, but  $G_1$ does not characterise history-determinism, which is required for the second criterion, and the letter game need not always be determined, which is required for the third. 
	\item The second criterion (\cref{cl:Criteria}.\ref{Item:OneTokenGhost}) does not imply the first one, as demonstrated by, for instance, safety pushdown automata~\cite[Theorem 4.1]{GJLZ24}. The implication from the second criterion to third criterion is unclear, however, and connects to the case of PDA, where the strategies for the players in letter game are not yet understood~\cite[Section 6]{GJLZ24}. 
	\item Finally, the third criterion (\cref{cl:Criteria}.\ref{Item:SafetyProduct}) is not subsumed by the other two, as evident from the case of linear automata  (\cref{sec:linear}).
\end{itemize}

\section{Automata Classes for which History-Determinism = Guidability}\label{sec:Yes}

\subsection{Straightforward cases}\label{sec:Straightforward}
By \cref{cl:Criteria}.\cref{Item:Determinisation}, history-determinism and guidability coincide for all automata classes  closed under determinisation, including:  regular automata (NFAs); VPAs on finite words; $\omega$-regular automata \cite{McN66}; co-B\"uchi \cite{MH84}; and subclasses of $\omega$-regular automata whose deterministic fragment is $\omega$-regular-complete, such as parity, Rabin, Streett, Muller, and Emerson-Lei.
%
%
%
Some subclasses of $\omega$-regular automata are not closed under determinisation, e.g., B\"uchi automata, but as long as they subsume safety automata, we can build on the fact that Adam's letter-game strategies are recognised by deterministic safety automata, and apply \cref{cl:Criteria}.\cref{Item:SafetyProduct}: since safety automata are determinisable they are closed under $1$-token ghost.

\begin{corollary}\label{cl: hd-guid parity}
	History-determinism and guidability coincide for classes of $\omega$-regular automata with an $[i,j]$-parity acceptance condition, as well as for the class of weak automata.
\end{corollary}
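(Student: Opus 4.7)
The plan is to invoke \cref{cl:Criteria}.\ref{Item:SafetyProduct}, leveraging the fact that both classes in question contain safety automata. Fix $\A \in C$ not history-deterministic, where $C$ is either the class of $[i,j]$-parity automata with $i<j$ or the class of weak automata. Since $\A$ is finite-state, the letter game on $\A$ is an $\omega$-regular game on a finite arena (its arena positions are the states of $\A$; Eve's winning condition is the disjunction of an $\omega$-regular safety/parity condition on her own run with the $\omega$-regular condition $w \notin L(\A)$), hence finite-memory determined. So Adam's winning strategy $\tau$ can be taken to be finite-memory.

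First I would produce the deterministic LTS $\B$ required by Criterion 3 over the transition alphabet $\Delta_\A$: states are the memory states of $\tau$ plus a rejecting sink $\bot$, a transition $m \xrightarrow{t} m'$ is present for each $t \in \Delta_\A$ that is consistent with $\tau$'s prescribed response in state $m$ (with $m'$ the updated memory), and every other $t$ leads to $\bot$, which loops on itself. The accepting runs are exactly the transition sequences that forever agree with $\tau$, which is a safety condition, and projection onto $\Sigma$ gives a nondeterministic safety automaton $\B_\Sigma$ whose language is a subset of $L(\A)$.

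Next, I would produce a $1$-token ghost of $\B_\Sigma$ inside $C$. Since safety automata are closed under the subset construction, $\B_\Sigma$ is equivalent to a deterministic safety automaton $\B'$. A deterministic $\B'$ is trivially its own $1$-token ghost of $\B_\Sigma$: Eve has no choice to resolve, and whenever Adam's run in $\B_\Sigma$ accepts, the word lies in $L(\B_\Sigma)=L(\B')$ so Eve's unique run also accepts. It remains to realise $\B'$ as an element of $C$. For weak automata this is immediate, as a deterministic safety automaton can be given priorities on a single pair of values $0,1$ with no cycle mixing them. For $[i,j]$-parity with $i<j$ we assign priority $i$ to the safe region and $i+1$ to the sink when $i$ is even, and $i+1$ to the safe region and $i$ to the sink otherwise, in each case making the max infinitely recurring priority even precisely on the accepting runs.

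The main thing to verify carefully is the encoding of safety into each target acceptance condition; this is routine once $i<j$ or weakness is available, which is why the statement carves out fixed-index parity and weak automata specifically. Plugging $\B$ and $\B'$ into \cref{cl:Criteria}.\ref{Item:SafetyProduct} yields that every non-\HD automaton in $C$ fails to be guidable, which combined with the trivial direction (\HD implies guidable) establishes the coincidence.
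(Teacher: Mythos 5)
Your proposal is correct and follows essentially the same route as the paper: Adam's finite-memory letter-game strategy yields a deterministic safety automaton over $\Delta_\A$, its $\Sigma$-projection has a (deterministic, hence trivial) $1$-token ghost because safety automata determinise, and this ghost can be re-encoded with priorities in $[i,j]$ (or as a weak automaton), so \cref{cl:Criteria}.\ref{Item:SafetyProduct} applies. The paper leaves these steps as a sketch; your write-up just makes them explicit.
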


\subsection{Uniform infinite state systems}
\label{sec:infinite-state}
In this section, we show that the notions of history-determinism and guidability coincide on the following classes with safety and reachability acceptance conditions: pushdown automata, one-counter automata, vector addition system with states, one-counter nets and Parikh automata. We take a unified approach by defining all of these classes as cases of ``uniform automata classes'', and showing that the two notions coincide for such classes (\cref{thm:Simple-automata-hd-vs-guid}).

These uniform automata classes are specified by a content space $\C$ (e.g., stack contents) and a set $\K$ of partial functions $f:\C \rightharpoonup \C$ that contains the identity function $f_{id}$ that maps each element in $\C$ to itself (e.g., stack updates). The class specified by $\C$ and $\K$  contains all the automata $\A = (\Sigma,Q,\iota,c_0,\Delta,F_A,F_C)$ that have a finite alphabet $\Sigma$, a finite state space $Q$, and finitely many transitions $(q,\sigma,f,q') \in \Delta$, labelled by a letter $\sigma \in \Sigma_{\epsilon} = \Sigma \cup \{\epsilon\}$ and a function $f \in \K$. The automaton $\A$ induces an LTS that has states $(q,c) \in Q \times \C$, with transitions $(q,c) \xrightarrow{\sigma} (q',c')$, such that $(q,\sigma,f,q')$ is a transition in $\A$ and $f(c)=c'$. 

The acceptance semantics of an automaton in such a class is specified by a set of accepting states $F_A \subseteq Q$ and a set of accepting contents $F_C \subseteq \C$. We will often desire some structure on $F_C$, so we impose the restriction that $F_C$ belongs to a set $\S \subseteq \P(C)$ of subsets of $\C$. We call ``$(\C,\K,\S)$-automata'' the class of all automata $\A = (\Sigma,Q,\iota,c_0,\Delta,F_A,F_C)$ as above with $F_C \in \S$. 
Safety automata require an accepting run to have all states in $F_A$ and all content in $F_C$. We distinguish between synchronous reachability that requires an accepting run to reach an accepting state and an accepting content at the same time, and asynchronous reachability that requires an accepting run to just reach an accepting state and an accepting content, not necessarily at the same time.

\begin{definition}
   A class of automata is \emph{uniform} if it can be specified as $(\C,\K,\S)$-automata with either safety, synchronous reachability, or asynchronous reachability acceptance semantic.
\end{definition}

We show that uniform automata classes are closed under $1$-token ghost by explicitly constructing for each automaton $\A$ in the class a $1$-token ghost, called $\SigmaShift(\A)$, inspired by Prakash and Thejaswini~\cite[Lemma 11]{PT23}. For each run in $\A$, we will have a run in $\SigmaShift(\A)$ that lags one transition behind. This one-step lag is implemented by storing the previous letter in the state space of $\A$, in addition to the state of $\A$; transitions are then taken based on the previous letter, while reading the current letter, which is now stored.

\begin{definition}[$\SigmaShift$]\label{def:Simpleautomata-SigmaShift}
Let $\A = (\Sigma,Q,\iota,c_0,\Delta,F_Q,F_C)$ be an automaton in a uniform class $(\C,\K,\S)$. The automaton $\SigmaShift(\A) = (\Sigma,Q',\iota',c_0,\Delta',F'_Q,F_C)$ is the $\SigmaShift$ of $\A$, where 
\begin{enumerate}
    \item The set of states $Q'$ is $(Q \times \Sigma) \cup \{\iota'\}$
    \item The set of transitions $\Delta'$ is given by the union of:
    \begin{itemize}
        \item $\{(\iota',\sigma,f_{id},(\iota,\sigma)) \mid \sigma \in \Sigma\}$
        \item $\{((q,\sigma),\sigma',f,(q',\sigma')) \mid \sigma,\sigma' \in \Sigma, \text{ and } (q,\sigma,f,q') \in \Delta\}$
        \item $\{(q,\sigma),\epsilon,f,(q',\sigma) \mid \sigma \in \Sigma, \text{ and } (q,\epsilon,f,q') \in \Delta\}$
    \end{itemize}
    \item The set $F'_Q$ consists of state of the form $(q,\sigma)$ such that $q \in F_Q$, and $\iota'$ if $\iota \in F_Q$. 
\end{enumerate}
The automaton $\SigmaShift(\A)$ has the same acceptance semantics as $\A$ (safety, synchronous reachability or asynchronous reachability).
\end{definition}

\begin{restatable}{lemma}{cllemtgsimpleautomata}\label{lem:1-tg-simpleautomata}
Given an automaton $\A$ in a uniform automata class $C$, the automaton $\SigmaShift(\A)$ is in $C$ and is a 1-token ghost of $\A$.
\end{restatable}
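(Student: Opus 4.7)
The plan is to verify three things: that $\SigmaShift(\A)$ belongs to the class $C$, that it is language-equivalent to $\A$, and that Eve has a winning strategy in $G_1(\SigmaShift(\A), \A)$.

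Membership in $C$ is by direct inspection of \cref{def:Simpleautomata-SigmaShift}: the construction preserves the content space $\C$, reuses only functions already in $\K$ (the new initial transitions take $f_{id} \in \K$, and all other transitions reuse the $f$'s from $\Delta$), keeps $F_C \in \S$ untouched, and preserves the acceptance semantics (safety, synchronous or asynchronous reachability). The new alphabet, state space $Q' = (Q\times \Sigma) \cup \{\iota'\}$, and transition set $\Delta'$ are finite.

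For language equivalence I would establish a bijection between runs. A run $\iota \xrightarrow{\sigma_1} q_1 \xrightarrow{\sigma_2} q_2 \cdots$ of $\A$ with content sequence $c_0, c_1, c_2, \dots$ corresponds to the run $\iota' \xrightarrow{\sigma_1} (\iota,\sigma_1) \xrightarrow{\sigma_2} (q_1,\sigma_2) \xrightarrow{\sigma_3} (q_2,\sigma_3)\cdots$ of $\SigmaShift(\A)$ on the same word; because the first transition applies $f_{id}$ and every subsequent $\SigmaShift(\A)$-transition applies the function of the previous $\A$-transition, the content sequence of this shifted run is $c_0, c_0, c_1, c_2, \dots$, lagging one step. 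Acceptance then transfers under the projection $(q,\sigma)\mapsto q$ together with the definitions of $F_Q'$ and $F_C$: the multisets of visited states and contents coincide up to one extra visit to $\iota$ and $c_0$, which affects neither safety nor asynchronous reachability, and for synchronous reachability the shifted run accepts at step $k{+}1$ iff the original accepts at step $k$. The $\varepsilon$-transitions of \cref{def:Simpleautomata-SigmaShift} slot into the same bijection without changing the stored letter.

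For Eve's winning strategy in $G_1(\SigmaShift(\A), \A)$, the idea is that she mimics Adam one letter late, using the stored letter in her state as the lag buffer. In round $1$, after Adam declares $\sigma_1$, Eve plays the unique $f_{id}$-transition $\iota' \xrightarrow{\sigma_1} (\iota,\sigma_1)$. In round $i \geq 2$, Eve sits at $(q_{i-2}, \sigma_{i-1})$ and has already observed Adam's round-$(i{-}1)$ transition sequence, which contained the $\sigma_{i-1}$-step $(q_{i-2}, \sigma_{i-1}, f_{i-1}, q_{i-1})$; on hearing Adam's new letter $\sigma_i$, Eve replays this step as $((q_{i-2},\sigma_{i-1}), \sigma_i, f_{i-1}, (q_{i-1},\sigma_i))$, interleaving Adam's $\varepsilon$-transitions from round $i{-}1$ in their original order via the matching $((r,\sigma'),\varepsilon,f,(r',\sigma'))$ transitions of $\SigmaShift(\A)$. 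By construction, Eve's run is exactly the bijective image of Adam's run from the previous paragraph, so whenever Adam's run is accepting so is Eve's.

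The main obstacle I expect is the bookkeeping around $\varepsilon$-transitions, where the stored-letter update in $\SigmaShift(\A)$ must be synchronised with Eve's delayed $\sigma_i$-step so that she replicates Adam's exact state-and-content trajectory one round late. Once the mimicking strategy is set up correctly, the verification of acceptance under each of the three semantics reduces to a routine step-by-step comparison using the bijection of (ii).
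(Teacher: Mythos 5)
Your proposal is correct and matches the paper's own argument: Eve wins $G_1(\SigmaShift(\A),\A)$ by replaying Adam's previous-round transition sequence (including its $\varepsilon$-steps) one round late via the stored letter, and language equivalence follows from the one-step-lag correspondence between runs, with acceptance preserved under all three semantics. The only cosmetic difference is that the paper obtains $L(\A)\subseteq L(\SigmaShift(\A))$ directly from Eve's winning strategy and only argues the converse inclusion by stripping a $\SigmaShift(\A)$-run back to an $\A$-run, whereas you phrase both directions as a single run bijection; the substance is the same.
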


 We show that $G_1$ characterises history-determinism on all uniform automata classes (\cref{lem:G1-for-simple-automata}), by reducing to safety and reachability LTSs~\cite{BHLST'23}. With \cref{lem:1-tg-simpleautomata} and \cref{cl:Criteria}.\cref{Item:OneTokenGhost}, we get that history-determinism and guidability coincide on all uniform automata classes.

\begin{restatable}{theorem}{clsimpleautomatahdvsguid}\label{thm:Simple-automata-hd-vs-guid}
History-determinism and guidability coincide for uniform automata classes.
\end{restatable}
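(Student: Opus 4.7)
The plan is to combine the two supporting lemmas already stated, \cref{lem:1-tg-simpleautomata} and \cref{lem:G1-for-simple-automata}, and then invoke the criterion \cref{cl:TokenCriteria}. Fix a uniform class $C$ specified by $(\C,\K,\S)$ together with one of the three allowed acceptance semantics. To apply \cref{cl:TokenCriteria}, I need to establish (i) closure of $C$ under $1$-token ghost, and (ii) that $G_1$ characterises history-determinism on $C$.

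For (i), I would appeal directly to \cref{lem:1-tg-simpleautomata}, which already produces a witness $\SigmaShift(\A)$ for every $\A\in C$. What remains is to verify that the construction of \cref{def:Simpleautomata-SigmaShift} respects the class parameters: the content space $\C$, the family $\K$ of update functions, and the accepting-content set $F_C\in\S$ are all preserved verbatim; the modifications affect only the finite state space (the product with $\Sigma$, plus a fresh initial state), the finite transition relation, and the set of accepting states. Since $F_C$ is left untouched, $F_C\in\S$ still holds, so $\SigmaShift(\A)$ is again a $(\C,\K,\S)$-automaton with the same acceptance semantic, which establishes closure.

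For (ii), the strategy is to reduce to the fact, proved in \cite{BHLST'23}, that $G_1$ characterises history-determinism on safety LTSs and on reachability LTSs. Each $(\C,\K,\S)$-automaton $\A$ induces an LTS on state space $Q\times\C$; under safety semantics this induced LTS is literally a safety LTS (the safe region is $F_Q\times F_C$ and accepting runs are those that never leave it), while under synchronous reachability it is a reachability LTS with target $F_Q\times F_C$. In both cases the characterisation from \cite{BHLST'23} transfers, and since the game $G_1$ played on an automaton and on its induced LTS coincide (by the remark following \cref{def:TokenGamesTwoAutomata}), the result lifts to the automaton level. Combined with (i), the criterion \cref{cl:TokenCriteria} then immediately yields the theorem.

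The main obstacle I expect is the asynchronous reachability case of (ii): here the induced LTS has a conjunction of two independent reachability requirements --- eventually visit an $F_Q$-state and eventually visit an $F_C$-content --- which is not literally one of the canonical LTS conditions handled in \cite{BHLST'23}. I would resolve this via a standard bookkeeping product that augments each state with two Boolean flags recording whether an $F_Q$-state and an $F_C$-content have already been seen, turning the acceptance into a plain reachability condition whose target is the set where both flags are set. Since the augmentation adds no nondeterministic choice, the history-deterministic status and the $G_1$-winning status of $\A$ are preserved, and the characterisation from \cite{BHLST'23} then applies uniformly across all three acceptance semantics.
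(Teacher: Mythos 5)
Your proposal is correct and follows essentially the same route as the paper: it combines \cref{lem:1-tg-simpleautomata} (closure under $1$-token ghost via $\SigmaShift$) with the $G_1$-characterisation of history-determinism reduced to safety/reachability LTSs from \cite{BHLST'23}, and then applies \cref{cl:TokenCriteria}. Even your treatment of asynchronous reachability via a product with two Boolean flags is exactly the construction the paper uses in its proof of \cref{lem:G1-for-simple-automata}.
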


It now suffices to represent various automata classes as uniform ones to show that guidability and history-determinism coincide on them. For pushdown and one-counter automata, vector addition systems and one-counter nets, as well as for Parikh automata, the contents are the counter or stack contents, while the update functions are their increments, decrements, pops and pushes. The update partial functions also implement which parts of the contents can be used to enable transitions: for example, for pushdown automata, the partial functions are either defined for all contents where the stack is empty, or undefined for all such contents; for Parikh automata, the contents do not influence which transitions are enabled, so the functions are fully defined. (Formal definitions are given in \cref{def:PDA,def:VASS,def:Parikh-automata}.)
 
\begin{corollary}\label{cl:hd-guid-pushdownparikhvass}
	History-determinism and guidability coincide for the classes of pushdown automata, one-counter automata, vector addition systems with states, one-counter nets with safety and reachability acceptance conditions, and for Parikh automata with safety, synchronous reachability and asynchronous reachability acceptance conditions.
\end{corollary}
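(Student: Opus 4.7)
The plan is to invoke \cref{thm:Simple-automata-hd-vs-guid} directly, by presenting each listed automata class as a uniform $(\C,\K,\S)$-automata class in the sense given just above. Once the encodings are in place the conclusion is immediate: each class is then a uniform class with either safety, synchronous reachability, or asynchronous reachability acceptance, so \cref{thm:Simple-automata-hd-vs-guid} yields the coincidence of history-determinism and guidability.

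The encodings are natural. For pushdown automata over stack alphabet $\Gamma$, I would take $\C = \Gamma^{*}$ with $c_0 = \epsilon$, and let $\K$ contain the identity, the total push functions $w \mapsto w\gamma$ for each $\gamma \in \Gamma$, and pop partial functions that are undefined on the empty stack (or further restricted to a given top symbol if the PDA inspects it). For one-counter automata, $\C = \Nat$ and $\K$ consists of the identity, $+1$, $-1$ (partial on $0$), and the zero-test modelled as the identity restricted to $\{0\}$; one-counter nets use the same $\K$ without the zero-test. For VASS of dimension $d$, I set $\C = \Nat^{d}$ and let $\K$ contain the translations $x \mapsto x + v$ for each $v \in \mathbb{Z}^{d}$, each restricted to $\{x : x + v \in \Nat^{d}\}$ so that firing a transition automatically enforces non-negativity. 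Parikh automata of dimension $d$ have $\C = \Nat^{d}$ and $\K$ consisting of the total translations $x \mapsto x + v$ for $v \in \Nat^{d}$, since no runtime constraint is placed on the content. The accepting content family $\S$ is chosen to match the standard notions (e.g., empty stack or full stack for PDAs, counter value zero for OCAs/OCNs/VASS, and semi-linear sets for Parikh automata).

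The only point that needs care, and the main conceptual obstacle, is the encoding of content-dependent transition enabling---stack-emptiness and top-symbol tests for PDAs, zero-tests for OCAs, and non-negativity checks for VASS and OCNs---which is handled by taking \emph{partial} update functions whose domain is exactly the set of contents on which the transition was originally enabled, rather than by separate guards. With this choice the induced LTS of the uniform automaton and the induced LTS of the standard automaton coincide, and the safety / synchronous-reachability / asynchronous-reachability modes of the uniform framework match the usual acceptance conventions for each class as stated in the corollary. I expect no further obstacle, since once the encodings are fixed \cref{thm:Simple-automata-hd-vs-guid} closes the argument mechanically.
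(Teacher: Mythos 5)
Your proposal is correct and follows essentially the same route as the paper: the paper also proves this corollary by presenting each class as a uniform $(\C,\K,\S)$-class (stack or counter contents as $\C$, pushes/pops/increments as partial functions whose domains encode the enabling conditions such as non-emptiness or non-negativity, and $\S$ matching the acceptance convention) and then invoking \cref{thm:Simple-automata-hd-vs-guid}. The only cosmetic differences are in the bookkeeping of the encodings (the paper uses $\bot\Gamma^*$ with a bottom symbol, treats one-counter automata and nets as the singleton-stack-alphabet and $1$-dimensional special cases, and takes $\S=\{\C\}$ for VASS/OCN so that acceptance is state-based), none of which affects the argument.
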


\paragraph*{Non-uniform classes}
The class of visibly pushdown automata is not uniform, as there are additional constraints on transitions, namely the kind of function that changes \content depends on the letter seen. Timed automata also do not constitute a uniform class, since the alphabet is infinite as it consists of all timed letters, and the clock valuations are updated according to both the transition (resets) and the delay of the input letter. In \cref{sec:linear}, we consider linear automata: these are B\"uchi automata that have no cycles apart from self-loops. Linear automata also does not form a uniform class, since they restrict the state-space. 
In what follows, we give alternative constructions of $1$-token-ghosts for these classes. The case of linear automata is trickier, as we show that it is not closed under $1$-token ghost. We therefore use, in~\cref{sec:linear}, a more involved argument that allows us to use \cref{cl:Criteria}.\cref{Item:SafetyProduct}.

\subsection{Visibly pushdown and timed automata}\label{sec:vpa}
Visibly pushdown automata over infinite words ($\omega$-VPAs) are neither (history-) determinisable, nor does $G_1$ characterise history-determinism on them.
Nevertheless, we can use \cref{cl:Criteria}.\cref{Item:SafetyProduct} to show that history-determinism and guidability coincide for this class.
 
\begin{restatable}{theorem}{clhdguidvpa}\label{cl:hd-guid-vpa}
	History-determinism and guidability coincide for the class of visibly pushdown automata with any $\omega$-regular acceptance condition.
\end{restatable}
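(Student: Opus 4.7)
The plan is to apply the Strategy ghost criterion (Item~\ref{Item:SafetyProduct} of \cref{cl:Criteria}). Let $\A$ be an $\omega$-VPA that is not history-deterministic, so Adam has a winning strategy $\tau$ in the letter game on $\A$. The goal is to produce a deterministic LTS $\B$ over the transition alphabet $\Delta_\A$ that recognises the plays consistent with $\tau$, and to ensure that the projection $\B_\Sigma$ of $\B$ onto $\Sigma$ admits a $1$-token ghost within the class of $\omega$-VPAs.

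First, I would argue that $\tau$ can be chosen so that it is implementable by a deterministic visibly pushdown transducer reading transitions of $\A$. The letter game on $\A$ is played on the configuration graph of $\A$, which is a visibly pushdown system over $\Sigma$; Adam's winning condition combines $\A$'s $\omega$-regular acceptance with the requirement that the word produced lies in $L(\A)$. Using standard techniques for solving visibly pushdown games with $\omega$-regular objectives, Adam's winning strategy can be realised by a deterministic VPT whose stack is synchronised with the visibility of $\Sigma$. From this VPT I would read off a deterministic safety $\omega$-VPA $\B$ over $\Delta_\A$, with visibility induced by the letter of each transition, that rejects the moment an input transition's letter deviates from $\tau$'s prescription; then $\B$ recognises exactly the plays consistent with $\tau$.

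Second, projecting $\B$ onto $\Sigma$ gives a nondeterministic safety $\omega$-VPA $\B_\Sigma$; visibility is preserved because each transition's stack action depends only on its letter. Third, I would construct a $1$-token ghost of $\B_\Sigma$ within $\omega$-VPAs by \emph{determinising} it. As a safety $\omega$-VPA, the language of $\B_\Sigma$ is determined by its set of good finite prefixes, which is recognised by a nondeterministic finite-word VPA; the Alur--Madhusudan determinisation~\cite{AM04} for finite-word VPAs yields a deterministic safety $\omega$-VPA $\B'$ equivalent to $\B_\Sigma$, which, being deterministic, is trivially its own $1$-token ghost.

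The main obstacle lies in the first step: extracting a deterministic visibly pushdown transducer realising Adam's winning strategy in the letter game on an $\omega$-VPA with arbitrary $\omega$-regular acceptance. The game graph is the infinite configuration graph of $\A$, and one must show that Adam's winning strategy can be chosen so that its memory requirements are captured by a stack synchronised with the visibility of $\Sigma$. For acceptance beyond parity, one may first translate $\A$ to a parity $\omega$-VPA to enable direct use of parity-game machinery on the resulting visibly pushdown game.
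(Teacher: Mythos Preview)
Your plan is sound and applies the same criterion (\cref{cl:Criteria}.\ref{Item:SafetyProduct}) as the paper, but your route to the $1$-token ghost differs genuinely.

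The paper first proves that the \emph{entire} class of $\omega$-VPAs (with arbitrary $\omega$-regular acceptance) is closed under $1$-token ghost, via a bespoke $\SigmaShift$-style construction: it delays each transition by one step while maintaining a ``semantic stack'' whose topmost symbol is buffered in the control state, so that the visibly-pushdown discipline of the input is respected even though the actual stack update lags one letter behind. This is a stronger intermediate result than the criterion strictly needs. You instead observe that the specific automaton $\B_\Sigma$ arising from Adam's strategy is always a \emph{safety} $\omega$-VPA, and that safety $\omega$-VPAs can be determinised (Alur--Madhusudan on the prefix-closed set of good prefixes, together with K\"onig's lemma to identify ``every prefix has a run'' with ``there is an infinite run''); a deterministic automaton is trivially its own $1$-token ghost. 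This shortcut is correct and more elementary for the present purpose---it mirrors what the paper does for fixed-index parity in \cref{cl: hd-guid parity}---but it does not deliver the general closure of $\omega$-VPAs under $1$-token ghost that the paper establishes along the way.

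Both routes share the step you rightly flag as the main obstacle: realising Adam's winning strategy by a deterministic visibly pushdown transducer. The paper handles this by casting the letter game as a visibly pushdown game with a \emph{stair parity} objective---the condition ``word $\in L(\A)$'' is absorbed via the determinisability of $\omega$-VPLs under stair acceptance---for which pushdown winning strategies are known to exist. Your sketch appeals to ``standard techniques for visibly pushdown games with $\omega$-regular objectives'', but note that the letter-game objective is not directly $\omega$-regular on game positions, since ``word $\in L(\A)$'' refers to the nondeterministic $\A$; the stair-parity machinery is precisely what bridges this, and your suggestion of first translating $\A$ to a parity $\omega$-VPA does not by itself remove the need for it.
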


\begin{proof}[Proof sketch]
First we show that the class is closed under $1$-token ghost. 
Like in the previous cases, we build an automaton that executes the same transitions, but one step later. The technical challenge is executing transitions with a delay, as an $\omega$-VPA must respect the stack discipline of the input alphabet. We overcome this by maintaining a ``semantic stack'' that consists of the actual stack and one additional letter that is embedded in the state space and stores, when necessary, the letter that should have been in the top of the stack.
	
Then, we describe the letter-game for an $\omega$-VPA as a game on a visibly pushdown arena with a ``stair parity'' acceptance condition, to show that Adam's winning strategies can be implemented by $\omega$-VPA transducers. We then turn this transducer into a deterministic $\omega$-VPA recognising the plays that agree with Adam's strategy, and apply \cref{cl:Criteria}.\cref{Item:SafetyProduct}. 
\end{proof}

We turn to safety and reachability timed automata, for which we apply \cref{cl:Criteria}.\cref{Item:OneTokenGhost}, yet with a specially tailored $\Delay$ construction.

\begin{restatable}{theorem}{cltimedhdguid}\label{cl:timedhd-guid}
	For the class of timed automata with safety or reachability acceptance conditions, the notions of history-determinism and guidability coincide.
\end{restatable}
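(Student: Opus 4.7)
The plan is to apply \cref{cl:Criteria}.\cref{Item:OneTokenGhost}. This requires two steps: (i) showing that $G_1$ characterizes history-determinism for safety and reachability timed automata, and (ii) showing that this class is closed under $1$-token ghost via a construction that yields a timed automaton.

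For (i), a safety (respectively reachability) timed automaton induces a safety (respectively reachability) labelled transition system whose configurations are pairs $(q,\nu)$ and whose (infinite) alphabet consists of timed letters. The characterization of history-determinism by $G_1$ for safety and reachability LTSs from \cite{BHLST'23} then transfers directly to timed automata.

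For (ii), the naive $\SigmaShift$ construction (\cref{def:Simpleautomata-SigmaShift}), which buffers the previous letter in the state and plays each transition one letter late, does not work for timed automata: clocks continue to advance between input letters, and so a delayed transition would be guarded against and would reset the wrong clock valuation. To tailor the construction, I would introduce a fresh clock $y$, reset whenever a new letter is read, so that $y$ records the delay of the most recent letter, and I would buffer in the state both the previous letter and Eve's committed transition. The buffered transition, when finally executed on the next letter, should have its guards and resets interpreted as if it were operating on the clock values from $y$ time units ago.

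The main technical obstacle is encoding this ``backward shift by $y$'' within the constraints of standard timed automata, where clocks may only be reset to zero and guards compare clocks to integer constants. I would plan to achieve this by maintaining, in addition to the original clocks of $\A$, a shadow copy of each clock; the guards of the buffered transition are then checked on the shadow clocks, which are kept synchronized so as to reflect the one-letter-delayed view, while resets are propagated to both copies in a coordinated manner controlled by $y$. Once closure under $1$-token ghost is established, \cref{cl:Delay-ssimulation-implies-gk} gives that simulation of the ghost by $\A$ implies that Eve wins $G_1(\A)$, and combined with (i), \cref{cl:Criteria}.\cref{Item:OneTokenGhost} yields that history-determinism and guidability coincide on this class.
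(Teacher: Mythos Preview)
Your high-level plan is exactly the paper's: apply \cref{cl:Criteria}.\cref{Item:OneTokenGhost}, using the $G_1$ characterisation for safety/reachability LTSs and a timed-automaton-specific $\SigmaShift$ that duplicates each clock. The clock-duplication intuition is also right.

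The gap is in how you propose to check guards retroactively. Introducing a fresh clock $y$ that records the most recent delay does not help in the guard language of the paper, where clocks are only compared to integer constants: there is no way to test ``$x-y\lhd n$'' and hence no way to evaluate a guard ``as of $y$ time units ago''. The shadow-clock idea by itself does not solve this either: whatever copy of $x$ you keep, by the time you want to execute the delayed transition every copy has advanced by the new delay $d_{i+1}$, so none of them holds the value $\nu_{i-1}+d_i$ against which Adam's guard was actually tested. Your ``coordinated by $y$'' clause is precisely the step that cannot be implemented without diagonal constraints.

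The ingredient you are missing, and that the paper supplies, is to store in the finite state the \emph{region} of the active clock valuation at the moment the previous timed letter arrived. Since guard satisfaction is region-invariant and there are finitely many regions, the delayed guard check becomes a static side-condition on which ghost transitions exist (``$r\models g$''), while the runtime guard is only used to record the next region. The duplicated clocks then serve solely to implement delayed resets: the passive copy, reset every round, is promoted to active whenever the mimicked transition resets that clock. One more point: you should not buffer ``Eve's committed transition''. In $G_1(\A',\A)$ Eve moves in $\A'$ \emph{before} Adam's move in $\A$, so the ghost's state after round $i$ must be compatible with every possible round-$i$ move of Adam; only the letter and the region are stored, and the choice of which $\A$-transition to mimic is made at round $i{+}1$, after Adam has revealed his move.
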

\begin{proof}[Proof sketch]
	The goal is to simulate such an automaton $\A$ with a delay, as in~\cref{def:Simpleautomata-SigmaShift}.
	Yet, the difficulty is that delaying a clock-reset by a step will affect the value of the clock for future comparisons, and there is no delaying of the passage of time. Hence a naive construction would end up recognising the timed language of words in which timestamps are shifted by one. 
	We overcome the difficulty by duplicating in the $1$-token ghost construction each clock of $\A$, using one copy for comparisons in guards and the other to simulate retroactive resets. In addition, the state-space stores the effect of the previous delay, by remembering the corresponding region, that is, how the timestamp compares to existing clocks and constants. With this construction, and the $G_1$-characterisation of history-determinism for safety and reachability automata, we complete the proof.
\end{proof}

\subsection{Linear automata}\label{sec:linear}
A \emph{linear} (also called \emph{very weak}) automaton is a B\"uchi automaton in which all cycles are self loops. (In linear automata, the acceptance condition does not really matter, since over an automaton with only self loops, all the standard $\omega$-regular acceptance conditions coincide.)

First, observe that linear automata are not closed under (history-)determinisation. (The standard B\"uchi automaton over the alphabet $\Sigma=\{a,b\}$ recognizing the language of finitely many $a$'s is linear.)
We show that they are also not closed under $1$-token ghost, by proving that the linear automaton depicted in \cref{fig:Linear} admits no $1$-token ghost in the class.

\begin{restatable}{theorem}{cllinearnodelay}\label{cl:LinearNoDelay}
	The class of linear automata is not closed under  $1$-token ghost.
\end{restatable}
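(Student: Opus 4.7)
The plan is to argue by contradiction: assume that the linear automaton $\A$ from \cref{fig:Linear} admits a linear $1$-token ghost $\A'$, and construct a winning strategy for Adam in $G_1(\A', \A)$, contradicting $\Ghost{1}(\A',\A)$. The structural fact I would exploit throughout is that in a linear automaton every strongly connected component is a single state with at most a self-loop, so Eve's token in $\A'$ proceeds monotonically through distinct states; once she leaves a state she can never return, and hence the set of infinite continuations still acceptable from her current state---her \emph{residual language}---shrinks monotonically along any play.

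The main combinatorial ingredient is an incompatibility lemma. Using the branching in \cref{fig:Linear}, I would identify a finite prefix $w$ and two infinite continuations $x_1, x_2 \in \Sigma^{\omega}$ such that $wx_1, wx_2 \in L(\A)$, yet no single state of any linear automaton recognising $L(\A)$ can accept both $x_1$ and $x_2$. This is proved by a pumping argument tailored to linear automata: such a state would have to traverse a chain of self-loops, so accepting both $x_i$ forces self-loops on disjoint sets of letters drawn from $x_1$ and $x_2$, and those self-loops would, in combination, permit runs over jumbled continuations not in $L(\A)$, violating $L(\A')=L(\A)$.

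Given this lemma, Adam's winning strategy in $G_1(\A',\A)$ proceeds in two phases. First, Adam plays the letters of $w$, keeping his own $\A$-token on transitions that are still compatible with both branches of $\A$ (such transitions exist because the branching in $\A$ happens only after reading $w$). By the monotonicity of Eve's residual, after this phase Eve is in some state $s$ of $\A'$ whose residual misses at least one of $x_1, x_2$; say $x_i$. Second, Adam plays the letters of $x_i$ one at a time; since Adam moves his own token \emph{after} Eve in each round (by \cref{def:TokenGamesTwoAutomata}), he can steer his $\A$-run into the branch that accepts $wx_i$, yielding an accepting $\A$-run, whereas Eve's run from $s$ cannot accept $x_i$ and is therefore rejecting.

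The main obstacle I foresee is the incompatibility lemma: cleanly showing that no state of any language-equivalent linear automaton can jointly accept the two diverging suffixes. The argument must carefully exploit the very restrictive cycle structure of linear automata (only self-loops on single letters) and the specific choice of $x_1, x_2$ in \cref{fig:Linear}, so that any putative shared accepting state is forced to have self-loops on ``loop letters'' whose co-presence enables runs over words outside $L(\A)$. The remainder of the argument---the monotonicity of Eve's residual and Adam's two-phase strategy---is then a direct consequence of linearity and of the delay inherent in the definition of $G_1$.
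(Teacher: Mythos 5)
Your plan hinges on an ``incompatibility lemma'' that is in fact false for \emph{every} possible choice of $w,x_1,x_2$, so the argument cannot be completed in this form. First, if $x_1,x_2\in L(\A)$ (the only option when $w\in\{a,b\}^*$ and neither suffix starts with $c$ or $d$), then the state $q_0$ of $\A$ itself---a linear automaton recognising $L(\A)$---accepts both. More generally, for any $w$ with $wx_1,wx_2\in L(\A)$, the residual language $\{x \mid wx\in L(\A)\}$ is one of $L(\A)$, $L(\A)\cup c\{a,b\}^\omega$, $L(\A)\cup d\{a,b\}^\omega$, or $\{a,b\}^\omega$, and each of these is recognised by a linear automaton; gluing a fresh acyclic chain spelling $w$ from a new initial state onto such an automaton, and taking the union with $\A$, yields a linear automaton equivalent to $\A$ with a state reachable on $w$ that accepts \emph{both} $x_1$ and $x_2$. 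So no suffix--suffix incompatibility can be established as a property of all equivalent linear automata; the pumping intuition about self-loops on disjoint letters does not bite here because both branches of $\A$ terminate in self-loops on the same letters $a,b$ at the same accepting state. Relatedly, with a \emph{fixed} finite prefix $w$, the linearity of the hypothetical ghost $\A'$ is never actually used to force Eve into a committed state (and the claimed monotone shrinking of her residual along a play is not a property of linear automata), so nothing prevents Eve from sitting, after $w$, in a state whose residual is the entire residual of $L(\A)$ by $w$.

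What is needed---and what the paper's proof does---is an incompatibility between one state and two different \emph{prefixes}, with the prefix length chosen as a function of Eve's strategy, not a language-theoretic fact about two suffixes. Concretely: Adam plays $(ab)^\omega$ while idling on $q_0$; since $\A'$ is linear and finite, Eve's run along her strategy stabilises, so for some $k$ she occupies the same state $q'$ after $(ab)^{k-1}a$ and after $(ab)^k$. Because in $G_1(\A',\A)$ Adam's token move in a round comes after Eve's, he can replay the first $2k-1$ rounds identically and only then take $q_0\to q_1$ on that last $a$ and continue with $ca^\omega$ into $q_3$; winningness of Eve's strategy then forces an accepting continuation from $q'$ over $ca^\omega$. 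Gluing that continuation onto Eve's run after the $b$-ending prefix $(ab)^k$, which also reaches $q'$, yields an accepting run of $\A'$ on $(ab)^k c a^\omega\notin L(\A)$, contradicting $\Ghost{1}(\A',\A)$. Your two-phase shape can be kept, but the branch point must be found by this stabilisation argument along an unbounded prefix, and the contradiction must combine reachability of $q'$ on a $b$-ending prefix with acceptance of $ca^\omega$ from $q'$---a statement about $\A'$ together with Eve's strategy, not about all linear automata recognising $L(\A)$.
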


\begin{proof}
	Let $\A$ be the linear automaton depicted in \cref{fig:Linear}, and assume toward contradiction that there is a linear automaton $\A'$, satisfying $\Ghost{1}(\A',\A)$, witnesses by a winning strategy $\strat$ of Eve in $G_1(\A',\A)$.
	
	In a play $\pi_1$ of $G_1(\A',\A)$ in which Eve plays along $\strat$ and Adam plays $(ab)^*$ while staying in $q_0$, at some points of time $2k{-}1$ and $2k$, Eve must remain in the same state $q'$ of $\A'$ after Adam chose the letters $a$ and $b$, respectively, since $\A'$ is linear.
	
	In a play $\pi_2$ of $G_1(\A',\A)$ in which Eve plays along $\strat$ and Adam starts with $(ab)^{k{-}1}a$ while staying in $q_0$, Eve reaches, as per the previous claim, the state $q'$ of $\A'$. Then, if Adam continues with the word $ca^\omega$, while moving from $q_0$ to $q_1$ (over the previous $a$) and then to $q_3$ (over $c$), Eve has some accepting continuation run $\rho$ from the state $q'$ over the suffix $ca^\omega$, since $\strat$ is winning for Eve in $G_1(\A',\A)$ and Adam's run is accepting.
	
	Thus, there is an accepting run of $\A'$ on the word $w=(ab)^{k}ca^\omega$, following in the first $2k$ steps the run of Eve in the play $\pi_1$, reaching the state $q'$, and then in the next steps following her accepting continuation in $\pi_2$. Yet, $\A$ does not have an accepting run on $w$, contradicting the equivalence of $\A$ and $\A'$, and thus the assumption that $\Ghost{1}(\A',\A)$.
\end{proof}

\begin{figure}
	\centering
	\begin{tikzpicture}[thick]
		\tikzset{every state/.style = {minimum size =25}}
		
		\node[state] (0) at (0,0) {$q_0$};
		\node[state] (1) at (2,0.65) {$q_1$};
		\node[state] (2) at (2,-0.65) {$q_2$};
		\node[state,accepting] (3) at (4,0) {$q_3$};
		
		\path[-stealth]
		(-.75,0) edge (0)
		(0) edge[loop above] node[right,xshift=0.1cm] {$a,b$} ()
		(0) edge node[above] {$a$} (1)
		(0) edge node[below] {$b$} (2)
		(1) edge node[above] {$c$} (3)
		(2) edge node[below,xshift=-0.1cm,yshift=-0.1cm] {$d$} (3)
		(3) edge[loop above] node[right,xshift=0.1cm] {$a,b$} ()
		;
	\end{tikzpicture}
	\caption{A linear automaton which admits no linear automaton that is a $1$-token ghost of it.}
	\label{fig:Linear}
	
\end{figure}
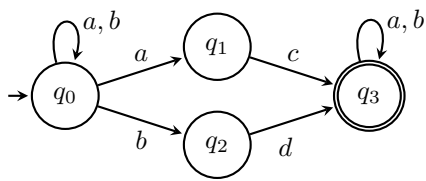

Yet, history-determinism and guidability do coincide for the class of linear automata.
The underlying reason is that when a linear automaton $\A$ is not history-deterministic, Adam's winning strategy in the letter game can be adapted to a linear automaton that does have a $1$-token ghost within the class of linear automata, thus satisfying  \cref{cl:Criteria}.\ref{Item:SafetyProduct}.

\begin{restatable}{theorem}{cllinearhdguidability}\label{cl:LinearHdGuidability}
	History-determinism and guidability coincide for the class of linear automata.
\end{restatable}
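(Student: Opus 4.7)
The plan is to apply criterion~\cref{cl:Criteria}.\cref{Item:SafetyProduct}. Let $\A$ be a linear automaton that is not history-deterministic. Because the letter game on $\A$ is an $\omega$-regular game on a finite arena, Adam has a finite-memory winning strategy $\tau$, implementable as a deterministic transducer. From $\tau$ I would construct a deterministic safety LTS $\B$ over $\Delta_\A$ whose states pair the memory of $\tau$ with Eve's current state in $\A$, and whose accepted runs are exactly the sequences of Eve-transitions whose letters agree at every step with $\tau$'s prescription given the prior history. This is then a deterministic LTS recognising the plays of $\tau$ seen as runs of $\A$, as required by the criterion.

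The substantive step is to show that the projection $\B_\Sigma$ admits a $1$-token ghost in the class of linear automata. The key observation is that in a linear automaton, Eve's state in $\A$ progresses monotonically along the DAG order on $Q_\A$ obtained by contracting self-loops, and changes at most $|Q_\A|-1$ times along any play. This lets us choose $\tau$ so that its memory also progresses monotonically: at each DAG level visited by Eve, $\tau$ uses a fresh, bounded block of memory to exploit the nondeterminism available at that level, after which Eve has either advanced to a strictly later level or become stuck. Consequently, $\B$ has no cycles other than self-loops, and this property is preserved by the projection onto $\Sigma$.

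Finally, I would obtain a linear $1$-token ghost of $\B_\Sigma$ via a delay construction in the spirit of \cref{def:Simpleautomata-SigmaShift}: the ghost $\B'$ augments each state of $\B_\Sigma$ with the previously read letter and shifts every transition by one step. Since $\B_\Sigma$'s only cycles are self-loops, augmenting states with one remembered letter does not introduce new non-self-loop cycles, so $\B'$ is still linear; Eve wins $G_1(\B',\B_\Sigma)$ by replaying Adam's revealed transition one letter behind, using the previous letter stored in her state. The main obstacle will be the careful construction of $\tau$ so that its memory respects the DAG order of $\A$: a generic finite-memory winning strategy for Adam might reuse memory states and thereby introduce non-self-loop cycles into $\B$, so I would build $\tau$ recursively along the DAG order of $Q_\A$, allocating a fresh block of memory at each level to exploit the local nondeterminism and exploiting the fact that Eve can never return to a DAG level she has left.
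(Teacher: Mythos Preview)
Your overall plan matches the paper's: apply \cref{cl:Criteria}.\cref{Item:SafetyProduct} by taking a finite-memory winning strategy for Adam, forming the product with $\A$, projecting to $\Sigma$, and then taking a $\SigmaShift$-style ghost. The paper does exactly this; the only structural difference is that the paper takes an arbitrary finite-memory strategy and \emph{post-processes} the product $\N$ into a linear $\N'$, whereas you propose to design $\tau$ to be ``monotonic'' from the start. These amount to the same thing, since the paper's $\N'$ is precisely such a monotonic strategy.

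There is, however, a genuine gap at the point you yourself flag as the main obstacle. You assert that at each DAG level $q$ of $\A$, Adam can make do with a fresh \emph{bounded} block of memory, after which Eve has either advanced or is stuck. But while Eve sits at a rejecting state $q$ with self-loops, a generic winning strategy for Adam may need to play a non-constant ultimately-periodic word (so the strategy's memory genuinely cycles). To make Adam's memory eventually constant at level $q$ --- which is what ``only self-loop cycles'' requires --- you must argue that after boundedly many letters Adam may switch to repeating a single letter $\sigma$ and still have the resulting word in $L(\A)$. The paper proves exactly this with a pigeonhole argument: letting $v$ be the period of the original strategy at $q$, the sets of $\A$-states reachable on $u\cdot v^i$ eventually repeat, so some accepting state $f$ with self-loops on all letters of $v$ is reachable within $|N_q|\cdot 2^{|Q|}$ steps; thereafter Adam can freeze. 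Your proposal does not supply this argument, and without it ``bounded fresh memory per level'' is unjustified.

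A smaller point: your justification that $\SigmaShift$ preserves linearity is not quite right. ``$\B_\Sigma$ has only self-loop cycles'' alone does not suffice --- a state with self-loops on two distinct letters would yield a non-self-loop cycle in $\SigmaShift$. What is actually needed (and does hold, since $\B_\Sigma$ comes from an Adam strategy) is that every state of $\B_\Sigma$ has outgoing transitions on a single letter; the paper makes this explicit.
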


\begin{proof}[Proof sketch]
	History-determinism implies guidability with respect to all classes. For the other direction, consider a linear automaton $\A = (\Sigma,Q,\iota,\Delta,\alpha)$ that is not \HD, and let $\M=(\Delta, \Sigma, M,m_0,\Delta_M:M \times \Delta \to M,\gamma:M\to \Sigma)$ be a deterministic finite-state transducer representing a finite-memory winning strategy $\strat_M$ of Adam in the letter game.
	
We then build, by taking a product of $\M$ and $\A$, a deterministic safety automaton  $\P$, that recognises the set of plays that can occur in the letter game on $\A$ if Adam plays according to $\strat_M$. 
From $\P$, we take its projection $\N$ onto the alphabet $\Sigma$ of $\A$. $\N$ need not be linear, but we adapt it into a linear $\N'$ that will still correspond to a winning strategy of Adam in the letter game. $\N'$ will thus constitute a projection of a deterministic automaton $\P'$ onto the alphabet $\Sigma$, where $\P'$ is over the alphabet of transitions of $\A$ and recognises the plays of a winning strategy of Adam in the letter game. Once achieving that, we can apply the $\SigmaShift$ construction on $\N'$---it will not introduce, in this case, non-self cycles, since the states of $\N'$ (as the projection of the states of $\P'$), have outgoing transitions only on a single letter.
Hence, we satisfy \cref{cl:Criteria}.\cref{Item:SafetyProduct}, proving the stated claim.

The transformation of $\N$ into $\N'$ is quite technical, and is left to~\cref{app-sec:linear}.
\end{proof}


\section{Automata Classes for which History-Determinism $\neq$ Guidability }\label{sec:No}

In this section we study classes which admit guidable automata that are not history-deterministic. They offer insight into how, in practice, the criteria can fail to hold, and witness that even on arguably natural automata classes, guidability and history determinism do not necessarily coincide.
The main reason for the equivalence between the notions to fail for these classes is a bound on the allowed resources -- the number of states in the first class and the number of clocks in the second.

Our first example of when history-determinism and guidability differ are B\"uchi automata with a bounded number of states, witnessed by the automaton in~\cref{fig:SmallBuchi}.

\begin{restatable}{theorem}{clBoundedStates}\label{cl:BoundedStates}
	For every $n\in\PNat$, history-determinism and guidability are distinct notions for the class of B\"uchi automata with up to $2n$ states.
\end{restatable}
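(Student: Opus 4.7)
The plan is to exhibit, for each $n \in \PNat$, an explicit B\"uchi automaton $\A_n$ with at most $2n$ states that is guidable within the class of B\"uchi automata with at most $2n$ states, yet is not history-deterministic. Based on the referenced \cref{fig:SmallBuchi}, I expect $\A_n$ to consist of a ``hub'' portion whose runs can remain ambiguous indefinitely, together with (up to) $n$ branches, each with its own accepting cycle, such that committing to the correct branch requires information not available to Eve in the letter game on $\A_n$.

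To show $\A_n$ is not \HD, I would describe a winning strategy for Adam in the letter game: Adam keeps Eve in the hub for arbitrarily many rounds, and then plays a letter that forces her off the hub into one of the branches. Whichever branch Eve commits to, Adam continues with a pattern matching a different branch, yielding a word in $L(\A_n)$ on which Eve's run is rejecting. Because Eve's finite-state token carries no knowledge of when Adam will choose to leave the hub, Adam can always react to the branch Eve picked by continuing with an incompatible one.

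For guidability within the class, the key idea is a pigeon-hole argument that exploits the $2n$-state bound on the opposing automaton. In the simulation game $\Sim(\A_n,\B)$ against any B\"uchi automaton $\B$ with at most $2n$ states and $L(\B)\subseteq L(\A_n)$, Adam's token in $\B$ must enter a recurrent strongly connected component within at most $2n$ steps. I would design $\A_n$ so that every recurrent SCC of every such small $\B$ generates a language contained in that of a single branch of $\A_n$ (this is the branch-separation invariant). Eve's simulation strategy then keeps her token in the hub until Adam's token in $\B$ commits to an SCC, at which point she commits to the unique matching branch of $\A_n$; the letters Adam plays thereafter are guaranteed to be consistent with Eve's chosen branch, so her run is accepting whenever Adam's is.

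The main obstacle will be establishing the branch-separation invariant: proving that no $\B$ with only $2n$ states can have a reachable recurrent SCC whose infinite words straddle two branches of $\A_n$. This is precisely what a larger $\B$ could violate --- and indeed must violate for a witnessing non-guidability $\B$ to exist, in line with the general equivalence \HD $=$ guidability on unbounded B\"uchi automata. To make the invariant hold, I would tune the branches of $\A_n$ so that each recognises an $\omega$-regular language with a distinguishing periodic pattern of period larger than $2n$; a $2n$-state SCC then cannot simultaneously admit accepting runs matching two such patterns, giving the required separation. I would first verify the argument for $n=1$ on the 2-state example in \cref{fig:SmallBuchi}, then scale up the hub, the branches, and the alphabet to obtain the family $\{\A_n\}_{n\in\PNat}$.
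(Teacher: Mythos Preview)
Your plan diverges from the paper's construction and has two gaps that would require genuinely new ideas to close.

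First, the assertion that ``Adam's token in $\B$ must enter a recurrent strongly connected component within at most $2n$ steps'' is false. A $2n$-state automaton has at most $2n$ MSCCs, so any run traverses at most $2n$ of them, but it may linger arbitrarily long in each before moving on. Eve therefore cannot wait a bounded number of rounds and then commit; she needs a strategy that reacts to Adam's MSCC changes as they happen. Your hub-plus-branches shape does not support this, since once Eve leaves the hub for a branch she cannot revise her choice when Adam later migrates to a different MSCC. Second, your branch-separation mechanism is in tension with the state budget: you ask each of $n$ branches to recognise a pattern ``of period larger than $2n$'', yet the entire automaton $\A_n$ (hub plus all branches) must fit in $2n$ states, leaving roughly two states per branch---far too few to enforce any large period.

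The paper takes a different route. The automaton $\A_n$ is a \emph{chain} $q_1\to q_2\to\cdots\to q_{2n}$ over $\{a,b\}$: odd states carry self-loops on $a,b$, even states are accepting with a self-loop on $b$ only, and every state has an edge to its successor on $a,b$. The language is ``finitely many $a$'s''. Non-\HD is immediate (Adam plays $a$ whenever Eve sits in an even state, $b$ otherwise). For guidability, the structural fact is that any $\B$ with $L(\B)\subseteq L(\A_n)$ has no MSCC containing both an accepting state and an $a$-transition. Eve's simulation strategy is to advance from an odd state to the next even state exactly when Adam's token is in an accepting MSCC, and otherwise stay put (or move forward when forced by an $a$). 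A $2n$-state $\B$ can pass from a rejecting MSCC (or the initial state) into an accepting MSCC at most $n$ times, so Eve never exhausts the chain; and if Adam's run is accepting he eventually remains in an accepting MSCC, where only $b$'s are played, so Eve settles in an accepting even state. The argument is thus a counting of MSCC transitions, not a periodicity separation, and the chain (rather than a hub) is what lets Eve revise her commitment up to $n$ times.
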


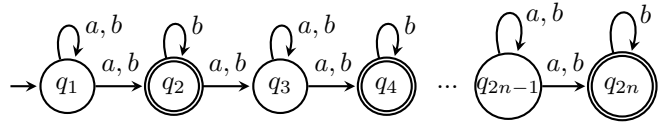
\begin{figure}[h]
	\centering
	\begin{tikzpicture}[thick]
	\tikzset{every state/.style = {minimum size =15}}
	
	\node[state] (i) at (0,0) {$q_1$};
	\node[state, accepting] (1) at (1.4,0) {$q_2$};
	\node[state] (i') at (2.8,0) {$q_3$};
	\node[state, accepting] (1') at (4.2,0) {$q_4$};
	
	\node (m) at (5.0,0) {$...$};
	
	\node[state] (j) at (5.8,0) {$\!\!\!q_{2n{-}1}\!\!\!$};
	\node[state, accepting] (2) at (7.3,0) {$q_{2n}$};
	
	\path[-stealth]
	(-.75,0) edge (i)
	(i) edge node[above] {$a,b$} (1)
	(i) edge[loop above] node[right,xshift=0.1cm] {$a,b$} ()
	(1) edge[loop above] node[right,xshift=0.1cm] {$b$} ()
	(1) edge node[above] {$a,b$} (i')
	(i') edge node[above] {$a,b$} (1')
	(i') edge[loop above] node[right,xshift=0.1cm] {$a,b$} ()
	(1') edge[loop above] node[right,xshift=0.1cm] {$b$} ()
	
	(j) edge[loop above] node[right,xshift=0.1cm] {$a,b$} ()
	(j) edge node[above] {$a,b$}  (2)
	(2) edge[loop above] node[right,xshift=0.1cm] {$b$} ()
	;
	\end{tikzpicture}
	
	\caption{A B\"uchi automaton that accepts words with a finite number of $a$'s. To simulate any equivalent small enough B\"uchi automaton $\B$, Eve moves to the next accepting state once the other automaton is in a maximally strongly connected component with an accepting state. The size constraint on $\B$, and the observation that a such a component can not both have a transition on $a$ and an accepting state guarantees that this strategy wins in the simulation game. However, $\B$ is not history-deterministic.}
	\label{fig:SmallBuchi}
\end{figure}

This counter-example is simple, but quite artificial. We proceed with a class which is, arguably, more natural: timed automata with a bounded number of clocks.

\begin{restatable}{theorem}{clclock}\label{cl:1-clock}
	History-determinism and guidability are distinct notions for the class $\mathbb{T}_k$ of  timed-automata over finite words with at most $k$ clocks, for each $k \in \mathbb{N}$.
\end{restatable}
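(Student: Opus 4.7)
The plan is to exhibit, for each $k$, a specific timed automaton $\A_k \in \mathbb{T}_k$ that is not history-deterministic yet simulates every $\A' \in \mathbb{T}_k$ with $L(\A') \subseteq L(\A_k)$. The argument should parallel the B\"uchi bounded-states case of \cref{cl:BoundedStates}: the resource bound (here, the number of clocks) should be tight enough that the language cannot be recognised by any \HD automaton in $\mathbb{T}_k$, so that every language-included automaton in the class must rely on the same form of nondeterministic commitment that $\A_k$ uses, which $\A_k$ can then mimic in the simulation game.

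For the base case $k = 1$, I would take $\A_1$ over $\{a, b\}$ to be the one-clock automaton that reads a sequence of $a$'s followed by a single $b$, nondeterministically resetting its clock on some $a$ (with the option of re-resetting on later $a$'s) and accepting iff the clock reads exactly $1$ at the $b$. This recognises $L_1 = \{(a,t_1)\cdots(a,t_n)(b,t) \mid \exists i, \; t - t_i = 1\}$ and is not \HD, as witnessed by the family of inputs $(a, 0)(a, 0.5)(b, t)$ with either $t = 1$ or $t = 1.5$: here Eve's correct commitment depends on the future $b$-timestamp. For $k > 1$, I would either pad $\A_1$ with $k - 1$ dummy clocks that never appear in any guard, or, if such padding breaks guidability against the richer class $\mathbb{T}_k$, scale the construction to a language $L_k$ requiring $k$ nested one-unit checkpoints so that every $\A' \in \mathbb{T}_k$ with $L(\A') \subseteq L_k$ must exhaust all its clocks in essentially the same nondeterministic way as $\A_k$.

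The principal obstacle is the guidability argument. Given any $\A' \in \mathbb{T}_k$ with $L(\A') \subseteq L(\A_k)$, I would construct Eve's strategy in $\Sim(\A_k, \A')$ by exploiting the fact that she sees Adam's transition in $\A'$ before playing her own: from Adam's move Eve can read off which clock of $\A'$ was just reset (or is currently holding a critical value), and she aligns $\A_k$'s commitment with the $\A'$-reset that witnesses the one-time-unit gap. The delicate part is handling automata $\A'$ whose acceptance guards allow clock values other than exactly $1$, so that $L(\A') \subseteq L(\A_k)$ is witnessed only indirectly through structural invariants on $\A'$'s reachable region configurations; one must then show, by a region-based case analysis, that some $a$-letter in the input still aligns with $\A_k$'s commitment. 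Ruling out pathological $\A'$ that encode subtle clock arithmetic via finite-state bookkeeping is likely the technical heart of the proof.
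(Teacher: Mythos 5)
There is a genuine gap, and it is located exactly where you place the ``technical heart'': the guidability of your witness automaton, which in fact fails for your base construction. Take $k=1$ and your $\A_1$ recognising $L_1=\{(a,t_1)\cdots(a,t_n)(b,t) \mid \exists i,\ t-t_i=1\}$, where acceptance requires Eve's clock to be exactly $1$ at the $b$, so her last reset must sit on the witness $a$. Now consider the one-clock automaton $\A'$ that resets its clock on the first $a$, then on the second $a$ checks $x=1$ and resets again, and finally accepts a $b$ either with guard $x=0$ or with guard $x=1$. Its language consists of the words $(a,t_1)(a,t_1{+}1)(b,t_1{+}1)$ and $(a,t_1)(a,t_1{+}1)(b,t_1{+}2)$ (zero delays are legal for timed words), and both are in $L_1$ (witness the first, respectively the second, $a$), so $\A'\in\mathbb{T}_1$ and $L(\A')\subseteq L(\A_1)$. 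But in $\Sim(\A_1,\A')$ Adam's first two transitions are identical in both scenarios, Eve's clock value at the $b$ is already fixed by her reset choices on the two $a$'s, and Adam then plays whichever $b$-continuation her commitment cannot match; seeing his $b$-transition first does not help her. So $\A_1$ does not simulate $\A'$ and is not guidable in $\mathbb{T}_1$. This also defeats your proposed strategy of ``aligning with the clock of $\A'$ that was just reset or holds a critical value'': here a single clock of $\A'$ certifies \emph{two} candidate witnesses (its current anchor and the earlier $a$ it has already verified to be exactly one unit before that anchor), while Eve must commit to one. Your $k>1$ fallback (dummy clocks) inherits the same problem, since the counterexample already lives in $\mathbb{T}_1$.

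The paper avoids this by engineering the language so that there is no such slack: $L_k$ asks for $k$ unit-apart $a$-pairs straddling a single separator letter, with all $k$ left endpoints confined to an open unit interval just before the separator; the guidable automaton resets its $k$ clocks freely before the separator and is deterministic afterwards. Eve's simulation strategy is then purely syntactic (reset whatever Adam resets plus every clock of hers matching an integral value of his, preserving equality of fractional parts), and the key lemma is a perturbation argument: if this strategy lost, then at the first occurrence of some matching $a$ all of Adam's clocks would be non-integral, so that $a$ could be shifted by a small $\epsilon$ keeping Adam's run accepting while destroying a ``good pair'', producing a word accepted by $\A'$ with fewer than $k$ pairs and contradicting $L(\A')\subseteq L(\A_k)$. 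Your sketch contains none of these ingredients (separator-based language with exactly $k$ pairs against $k$ clocks, fractional-part mirroring, perturbation/counting of good pairs), and without them the guidability half of the theorem is not established. The non-history-determinism half of your argument is fine.
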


\begin{proof}[Proof sketch]
	We consider the language of infinite words in which there are $k$ event pairs that occur exactly one time-unit apart both before and after the first occurence of a $\$$ letter. 
	Then, the guidable automaton for this language can freely reset its clocks until the $\$$-separator, which allows it to ensure it tracks all delays tracked by a smaller automaton with up to $k$ clocks. Crucially, any automaton that only accepts words in this language must keep track of $k$ clock values when the separator occurs, as otherwise, it will also accept some word in which the second of matching pair of event is shifted a little.
\end{proof}



\section{Conclusions}\label{sec:conclusions}
We have presented sufficient conditions for a class of automata to guarantee the coincidence of history-determinism and guidability, and used them to show that this is the case for many standard automata classes on infinite words. As a result, we get algorithms to decide guidability for many of these classes. Guidable automata allow for simple model-checking procedures, and once guidability check is simple, one can take advantage of it whenever applicable. For example, consider a specification modelled by a B\"uchi or coB\"uchi automaton $\A$. Model-checking whether a system $\S$ satisfies $\A$ is $\pspace$-hard. Using our results, one can check first in $\ptime$ whether $\A$ is guidable, and in the fortunate cases that it is, conclude the model checking in $\ptime$, by checking whether $\A$ simulates $\S$. We have also demonstrated automata classes for which guidability and history-determinism do not coincide.

We believe that our positive results extend to additional automata classes, such as register automata~\cite{KF94}, which behave quite similarly to timed automata. Furthermore, we believe them to extend to additional families of automata classes:
\begin{itemize}
\item \textit{Finite words.} We have focused on automata over infinite words, which in this context, are better behaved. Ends of words bring additional complications to our constructions, but overall we believe our approach to be amenable to the analysis of finite word automata.
\item \textit{Quantitative automata.}  In quantitative automata, transitions carry additional information in the form of weights. As a result, there is an additional difference between the letter game and simulation game, which makes extending our analysis to the quantitative setting particularly relevant. We believe that many of our techniques adapt to that setting.
\end{itemize}

One could argue that for model-checking, the more interesting property is whether a (not necessarily safety) automaton is guidable by just safety automata, since we typically represent specifications by safety automata. Interestingly, this property often coincides with guidability w.r.t. the full class of automata, as demonstrated in by our third criterion (\cref{cl:Criteria}.\cref{Item:SafetyProduct}): if Adam's strategies in the letter game can be translated into automata, these automata are safety ones, and therefore guidability w.r.t.\ safety automata is just as hard as guidability w.r.t.\ the full class of automata with the more complex acceptance conditions.



\bibliography{gfg}

\newpage
\appendix

\section{Detailed Definitions}\label{app-sec:notation}

\paragraph*{Automata} 
Here we give more detailed definitions of the automata we work with. Each type of automaton induces an LTS of which the states are the configurations of the automaton and the transitions enabled at each configuration updates it as expected; runs and acceptance are defined uniformly from the induced LTS.

A pushdown automaton (PDA) $(\Sigma,Q, \iota, \Delta,\Gamma, \alpha)$ consists of a finite set of states $Q$, a finite input alphabet $\Sigma$, an initial state $\iota$, a transition relation $\Delta$ detailed hereafter, a finite stack alphabet $\Gamma$, and an acceptance condition $\alpha$. 
A configuration consists of a state from $Q$ and a string of stack variables in  $\Gamma^*\bot$ ending in $\bot$, a distinguished bottom-of-stack symbol. The \textit{mode} of a configuration $(q,XS)$, for $q\in Q$, $X\in\Gamma$ and $\S\in\Gamma^*\bot$, is $(q,X)$, and of $(q,\bot)$ is $(q,\bot)$, namely the state and topmost stack symbol. The transitions of a PDA are enabled at a mode and take the form $(q,X)\xrightarrow{\sigma\mid \op} q'$ where $X\in \Gamma\cup \{\bot\}$, $\sigma\in \Sigma\cup \{\epsilon\} $, $\op\in \{\push\ Y,\pop,\noop\}$ for $Y\in \Gamma$, and $q'\in Q$. Transitions update the configuration as follows. A $\push\ Y$ transition adds $Y$ to the top of the configuration stack, a $\pop$ transition removes the top of the stack and can not be enabled at a mode $(q,\bot)$ indicating an empty stack, and a $\noop$-transition leaves the stack as is. A transition changes the configuration state as expected. 

A visibly pushdown automaton (VPA) is a pushdown automaton without $\varepsilon$-transitions of which the input alphabet $\Sigma$ is partitioned between $\push,\pop$ and $\noop$ letters, which only occur in $\push$, $\pop$ and $\noop$ transitions, respectively.

%
A timed automaton $\T = (\Sigma, Q,  \iota, C, \delta, \alpha)$ has a finite alphabet $\Sigma$, a set of states $Q$ with an initial state $q_0$, a set of clocks $C$, a transition function $\delta$ and acceptance condition $\alpha$ as usual. 
Timed automata, as the name suggests, process \emph{timed words}. These are sequences of pairs $\{\sigma_i,d_i\}_{i \geq 0}$ of letters $\sigma_i\in\Sigma$ and non-negative time delays in $\mathbb{R}$. A timed word may also get written as $\{\sigma_i,t_i\}_{i\geq 0}$ where $t_i$ is the timestamp of $\sigma_i$, that is, the sum of the delays up to $i$. We then say that the event $\sigma_i$ occurs at time $t_i$, or after the delay $d_i$. A \emph{timed automaton} is a finite automaton equipped with a set of \emph{clocks} $C$ that measure the passage of time. Each transition $(q,g, \sigma, X,q')$ of a timed automaton checks the valuation of these clocks using a \emph{guard} $g$, and resets the values of a subset of clocks $X$ to 0. These guards are Boolean combinations $\B(C)$ of clock constrains of the form $c \lhd n$, where $c$ is the value of a clock, $n$ is a natural number and $\lhd \in \{\leq,<,=,\geq,>\}$.  

A configuration of a timed automaton is a pair in $Q\times \mathbb{R}^C$ consisting of a state and a clock valuation. A transition $(q,g,\sigma,X,q')$ can be taken over a timed letter $(\sigma,d)$ at configuration $(q,\nu)$ if the valuation $\nu+d$, in which all clocks are increased by $d$ from $\nu$, satisfies the guard $g$. The next configuration is then $(q',\nu+d[X])$, where the clocks in $X$ are reset to $0$.
%
We may use $\pair{q,\nu} \xrightarrow{\sigma,d} \pair{q,\nu+d} \xrightarrow{g,X} \pair{q',\nu'}$ to denote the transition we just described, splitting it into a `delay step' and a `reset step'.
The induced LTS of $\A$ has as its states configurations of $\A$ and as transitions those enabled at each configuration, labelled with letter-delay pairs.

We consider timed automata with safety or reachability acceptance conditions, which we denote by $\T = (\Sigma, Q,  \iota, \delta, F)$ where $F$ is the safe or target region respectively. 

\section{Proofs of Section \ref{sec:criteria}}\label{app-sec:criteria}

\clwhenndethdet*

\begin{proof}
\HD always implies guidability: given $\A$ in $C$ that is \HD, it is easy to see that Eve wins $\Sim(\A,\A')$ for all $\A' \leq \A$, as the strategy that chooses transitions in $\A$ according to the letter game on $\A$, ignoring Adam's token in $\A'$, is a winning strategy.

For the other direction, consider $\A\in C$ that is guidable within $C$, and let $\A'\in C$ be an \HD LTS equivalent to $\A$. As $\A$ is guidable, it simulates, in particular, $\A'$. Eve can use her winning strategy $\strat$ in this simulation game, together with her winning strategy $\strat'$ in the letter game on $\A'$ to also win the letter game on $\A$: when Adam provides a letter in the letter game on $\A$, Eve uses the transition (or transition sequence, in the presence of $\epsilon$-transitions) $\rho$ chosen by $\strat'$ in the letter game on $\A'$ on the same letter. She assumes Adam chooses the transition sequence $\rho$ on $\A'$ in $\Sim(\A,\A')$ as well, and responds to Adam in the letter game on $\A$ with the move suggested by $\strat$ in $\Sim(\A,\A')$. If the word generated by Adam is accepting, then the run generated by $\sigma'$ in $\A'$ is accepting, and therefore the run generated in $\A$ in $\Sim(\A,\A')$ by $\sigma$ is also accepting. 
\end{proof}

\section{Proofs of Section \ref{sec:Yes}}\label{app-sec:yes}

\subsection{Uniform infinite state systems}\label{app-sec:uniform}
\cllemtgsimpleautomata*
\begin{proof}
    Consider $\A = (\Sigma,Q,\iota,\Delta,F_Q,F_C)$ and $\SigmaShift (\A)$~=~$ (\Sigma,Q',\iota,c_0,\Delta',F'_Q,F_C)$ as per \cref{def:Simpleautomata-SigmaShift}. 
    We need to show that 
    (i) Eve wins $G_1(\SigmaShift(\A),\A)$, and 
    (ii) $L(\A) = L(\SigmaShift(\A))$.

(i) We give a winning strategy for Eve in $G_1(\SigmaShift(\A),\A)$. 
In the first round, after Adam chooses a letter $\sigma_0$, Eve chooses the unique transition $(\iota',\sigma_0,f_{id},(\iota,\sigma_0))$, and then Adam a transition sequence $\rho_0 =(\iota,c_0) \xRightarrow{\sigma_0, \rho_0} (q_1,c_1)$. 
Eve will then preserve in next rounds the following invariant:
    \emph{Suppose Adam took the transition sequence  $\rho_i =  (q_{i-1},c_{i-1}) \xRightarrow{\sigma_i,\rho_i} (q_i,c_i)$ in round $i$, then Eve's configuration after $i$ rounds is given by $((q_{i-1},\sigma'_{i-1}),c_{i-1})$}.

 The invariant clearly holds after the first round. Suppose that after $i>0$ rounds of play, the invariant holds. Eve preserves the invariant after $i+1$ rounds as follows:

In the $(i+1)^{th}$ round, suppose Adam selects the letter $\sigma_i \in \Sigma$, after choosing the transition sequence $\rho_i = (p^0,c^0) \xrightarrow{\epsilon,f^0} (p^1,c^1) \xrightarrow{\epsilon,f^1} \cdots (p^k,c^k) \xrightarrow{\sigma_{i-1},f^k} (p^{k+1},c^{k+1} \xrightarrow{\epsilon,f^{k+1}} \cdots (p^l,c^l))$ in the $i^{th}$ round, where $(p^0,c^0) = (q_{i-1},c_{i-1})$ and $(p^l,c^l) = (q_i,c_i)$.

By construction of $\SigmaShift(\A)$, it has a transition sequence $\rho'_i = ((p^0,\sigma_{i-1}),c^0) \xrightarrow{\epsilon,f^0}  \cdots ((p^k,\sigma_{i-1}),c^k) \xrightarrow{\sigma_{i},f^k} ((p^{k+1},\sigma_i),c^{k+1}$ $\xrightarrow{\epsilon,f^{k+1}} \cdots ((p^l,\sigma_i),c^l))$ on $\sigma_i$.
We let Eve pick $\rho'_i$ in $(i+1)^{th}$ round, essentially copying Adam's transition sequence in the $i^{th}$ round. Adam selects a transition sequence $(q_i,c_i) \xRightarrow{\sigma_i,\rho_{i+1}} (q_{i+1},c_{i+1})$, and it is clear that the invariant is preserved throughout. This implies that the run Eve constructs in $\SigmaShift(\A)$ is accepting whenever Adam's is accepting. Hence, Eve wins $G_1(\SigmaShift(\A),\A)$.

(ii)  $L(\A) \subseteq L(\SigmaShift(\A))$ is clear, as Eve wins $G_1(\SigmaShift(\A),\A)$. For the other direction of the inclusion, let $w$ be a word accepted by $\SigmaShift(\A)$ via a run $ \rho' = (\iota',c_0) \xrightarrow{\sigma_0,f_{id}} ((\iota,\sigma_0),c_0) \xRightarrow{\sigma_1,\rho'_0} ((q_1,\sigma_1),c_1) \xRightarrow{\sigma_2,\rho'_1} ((q_2,\sigma_2),c_2)\cdots$, such that $w = \sigma_0  \sigma_1\sigma_2 \cdots$.
Then it is easy to see that there is a run $ \rho = (\iota,c_0) \xRightarrow{\sigma_0,\rho_0} (q_1,c_1)$ $\xRightarrow{\sigma_1,\rho_1} (q_2,c_2) \cdots$ of $\A$ on $w$ that is accepting whenever $\rho'$ is.
\end{proof}

\begin{lemma}\label{lem:G1-for-simple-automata}
    Let $\A$ be an automaton in a uniform automata class. Eve wins $G_1$ on $\A$ if and only if $\A$ is history-deterministic. 
\end{lemma}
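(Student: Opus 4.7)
The plan is to prove the two directions separately, with the forward direction being straightforward and the backward direction reducing to a known characterisation of history-determinism on safety and reachability LTSs.

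For the easy direction, assume $\A$ is history-deterministic, witnessed by Eve's strategy $\tau$ in the letter game. I will show that Eve can win $G_1(\A,\A)$ by playing $\tau$ and simply ignoring Adam's token. Indeed, Adam's choice of a transition in $G_1$ happens \emph{after} Eve has already committed to her transition, so Eve's strategy does not need to react to it. If Adam's run is accepting, the underlying word lies in $L(\A)$, and $\tau$ therefore produces an accepting run for Eve.

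For the converse, I plan to exploit the fact, established in~\cite{BHLST'23}, that $G_1$ already characterises history-determinism on arbitrary safety and reachability LTSs (with no finiteness assumption). The induced LTS of an automaton $\A = (\Sigma,Q,\iota,c_0,\Delta,F_Q,F_C)$ in a uniform class has state space $Q \times \C$. In the safety case, the set of safe configurations is exactly $F_Q \times F_C$, giving a safety LTS directly; in the synchronous reachability case, the set of target configurations is $F_Q \times F_C$, giving a reachability LTS directly. In both cases, applying the characterisation from~\cite{BHLST'23} to the induced LTS yields the claim, since the automaton $\A$ and its induced LTS share the same runs and hence the same notions of HD and $G_1$-winning.

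The asynchronous reachability case requires a small reduction: I would augment the induced LTS with two flags in $\{0,1\}^2$ tracking whether a state in $F_Q$ and a content in $F_C$ have already been visited, and set the reachability target to be the configurations where both flags are $1$. This augmented LTS is a reachability LTS to which~\cite{BHLST'23} applies. A strategy for Eve in $G_1$ (or in the letter game) on the induced LTS lifts to the augmented one by tracking the flags, and projects back by forgetting them; the same holds for Adam's strategies, so HD and $G_1$-winning transfer between the two LTSs. The main conceptual point to check is this bidirectional transfer, but since the flags are deterministically updated from the sequence of visited configurations, the translation of strategies is immediate and preserves winning in both games.
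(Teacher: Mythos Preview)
Your proposal is correct and follows essentially the same route as the paper: for safety and synchronous reachability you invoke the $G_1$ characterisation of history-determinism on safety/reachability LTSs from~\cite{BHLST'23} directly on the induced LTS, and for asynchronous reachability you reduce to a reachability LTS by adding two deterministic flags tracking visits to $F_Q$ and $F_C$, exactly as the paper does. Your explicit treatment of the easy direction (HD $\Rightarrow$ Eve wins $G_1$) is fine, though in the paper both directions come for free from the cited characterisation.
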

\begin{proof}
    Let $\A = (\Sigma,Q,\iota,c_0,\delta,F_Q,F_C)$. When $\A$ has safety or synchronous reachability acceptance semantic,  Eve wins $G_1$ on $\A$ if and only if $\A$ is \HD, since the one-token game characterises history-determinism on all safety and reachability LTSs~\cite{BL22,BHLST'23}.

    For the case of $\A$ with acceptance by asynchronous reachability, consider the reachability LTS $\B$, in which the states are given by $(q,c,p_1,p_2)$, where $q$ is a state in $\A$ and $c \in \C$, and $p_1$ and $p_2$ are flags that are either 0 or 1, based on whether an accepting state in $F_A$ or $F_C$ is seen in the run so far. The initial state is given by $(\iota,c_0,0,0)$. The transitions of $\B$ are of the form $(q,c,p_1,p_2) \xrightarrow{\sigma} (q',c',p'_1,p'_2)$, such that $(q, \sigma, f, q')$ is a transition in $\A$, where $f(c) = c'$, $p'_1 = 1$ if $q' \in F_C$ and $p'_1 = p_1$ otherwise, and similarly, $p'_2 = 1$ if $c' \in F_C$ and $p'_2 = p_2$ otherwise. The accepting states of the LTS $\B$ are those in which both the flags are $1$, i.e., states of the form $(q,c,1,1)$ where $q \in Q$ and $c \in \C$.

    It is easy to see that a winning strategy for Eve in $G_1(\A)$ or the letter game on $\A$ can be used to get a winning strategy for Eve in $G_1 (\B)$ or the letter game on $\B$, respectively, and vice versa. As Eve wins $G_1$ on any reachability LTS if and only if the LTS is \HD~\cite{BL22,BHLST'23}, we get that Eve wins $G_1$ on $\A$ if and only if $\A$ is \HD. 
\end{proof}

We represent various classes of automata as uniform automata classes. The class of safety and reachability nondeterministic $\omega$-regular automata is a uniform automata class, as it can be given by a $(\C,\K,\S)$-automata class where $\C = \{X\}$ is a singleton set, $\K$ only consists of the identity function, and $\S = \{\C\}$. Synchronous and asynchronous reachability coincides in this setting. We show that the classes of VASS,  one-counter nets, and pushdown-, Parikh-, and one-counter-automata are also uniform automata classes.

\begin{definition}\label{def:PDA}
    \emph{Pushdown automata} is a uniform automata class given by $(\C,\K,\S)$, where $\C=\bot\Gamma^{*}$ is the set of possible \emph{stack contents} for some set $\Gamma$ of \emph{stack symbols}, where $\bot \notin \Gamma$. Let $\Gamma_{\bot}$ be $\Gamma \cup \{\bot\}$. The set $\K$ consists of the following partial functions from $\C$ to $\C$:
    \begin{enumerate}
        \item \emph{Push transitions:} Map $\gamma X$ to $\gamma X Y$, where $\gamma X \in \bot\Gamma^{*}$, $X \in \Gamma_{\bot}$ and $Y \in \Gamma$.
        \item \emph{Pop transitions:} Map $\gamma X$ to $\gamma$, where $\gamma \in \bot \Gamma^{*}$ and $X \in \Gamma$.
        \item \emph{Noop transitions:} Map $\gamma$ to $\gamma$ for each $\gamma \in \bot\Gamma^{*}$.
    \end{enumerate}
    The set $\S$ either equals $\{\C\}$, in which case the acceptance is only state-based or equals $\{\{\bot\}\}$, the set consisting of only the `empty stack'. Both these formalisms are known to be equivalent for synchronous reachability PDA, while for safety PDA we consider only the former, as the latter coincides with finite state automata. 
\end{definition}

\emph{One-counter automata} are restrictions of pushdown automata where $\Gamma = \{X\}$ is a singleton set in \cref{def:PDA} above. It is clear that imposing this restriction yields another uniform automata class.

\begin{definition}\label{def:VASS}
  \emph{Vector addition systems with states} (VASS) is a uniform automata class given by $(\C,\K,\S)$, where $\C=\mathbb{N}^{*}$ is the set of \emph{vectors}. The set $\K$ consists of partial functions specified by $\alpha \in \mathbb{Z}^{d}$ for some $d>0$ that maps a vector $v \in \mathbb{N}^{d}$ to $v+\alpha$, provided $v+\alpha$ is in $\mathbb{N}^{d}$. The set $\S$ equals $\{\C\}$, and thus acceptance is only based on the accepting states of an automaton in this class. The accepting condition is given by (synchronous) reachability or safety semantics. Note that as synchronous reachability and asynchronous reachability coincide here due to $\S$ consisting of only $\C$, we just use reachability VASS to denote synchronous reachability VASS. 
\end{definition}
\emph{$d$-dimensional} VASS are restrictions of VASS where $\C = \mathbb{N}^d$ for some $d>0$. It is easy to see that $d$-dimensional VASSes are also a uniform automata class. The class of $1$-dimensional VASS is called \emph{one-counter nets} (OCNs).

\begin{definition}\label{def:Parikh-automata}
    \emph{Parikh automata} is a uniform automata class given by $(\C,\K,\S)$ where $\C=\mathbb{N}^{*}$ is the set of \emph{vectors}. The set $\K$ consists of partial functions specified by $\alpha \in \mathbb{N}^d$ for some $d>0$, that maps a vector $v \in \mathbb{N}^d$ to $v+ \alpha$. The set $\S$ is the union of all semilinear $\S_d$, where $\S_d$ consists of all the semilinear sets in $\mathbb{N}^d$. We consider acceptance conditions based on safety, synchronous reachability and asynchronous reachability semantics.
\end{definition}


\subsection{Visibly pushdown automata}\label{app-sec:vpa}

\begin{lemma}\label{cl:VPA-ghost}
The class of $\omega$-VPAs is closed under $1$-token ghost.
\end{lemma}

\begin{proof}
Given an $\omega$-VPA $\A=(Q, \Sigma, \iota,\Delta,\Gamma, \alpha)$, we build an $\omega$-VPA $\A'$ that is a $1$-token ghost of $\A$, by morally delaying the transitions by one step, as in the previous $\SigmaShift$ constructions.
However, there is a difficulty in doing that, as $\A'$ must operate on the stack at the same rhythm as $\A$, rather than one step behind.
In particular, when $\A'$ reads a $\push$-letter, it has to push something onto the stack before it ``knows'' what to push, and when reading a $\pop$- or $\noop$-letter after a block of $\push$-letters, it ``wants'' to push something to the stack, as $\A$ did on the previous step, but it cannot.

To manage these local asynchronicities, we maintain a ``semantic stack'' in $\A'$ that consists of the actual stack and one additional letter that is embedded in the state space and stores, when necessary, the letter that should have been in the top of the stack. 
At the bottom of the semantic stack there are two symbols, $\bot \bot'$, rather than just $\bot$, so that when the semantic stack has a single element, stored in the state-space, $\A'$ is still able to read a $\pop$-letter.

Consider a run $r$ of $\A$ on a word $w$, and let the stack of $\A$ after reading the $i$th letter of $w$ be $\bot x_1 x_2 \cdots x_k$. Then the semantic stack of $\A'$ after reading the $i+1$ letter of $w$ is $\bot \bot' x_1 x_2 \cdots x_k$. If the $i+1$ letter is a $\push$-letter, then the actual stack is the same as the semantic one, and the dedicated state-space letter that ends the semantic stack is $\varepsilon$.  If the $i+1$ letter is a $\pop$- or $\noop$-letter, then the actual stack is $\bot \bot' x_1 x_2 \cdots x_{k-1}$, and the dedicated state-space letter is $x_k$.
More formally, $\A'$ has:
\begin{itemize}
\item States $Q\times (\Sigma\cup \varepsilon) \times (\Gamma \cup \{\varepsilon,\bot'\})$, where $(q,\sigma,X)$ indicates the corresponding state of $\A$, the last input letter (or, initially, $\varepsilon$), and the dedicated state-space for the top of the semantic stack;
\item Alphabet $\Sigma$;
\item Initial state $(\iota,\varepsilon,\bot')$;
\item Stack alphabet $\Gamma\cup \{\bot'\}$ 
\item Transitions as detailed below.
\item Acceptance condition $\alpha$, where treating a state $(q,\sigma,X)$ of $\A'$ as the state $q$ of $\A$. 
\end{itemize}

The transitions of $\A'$ are the following, where from each state we consider the cases of $\sigma$, the previous letter, and $\sigma'$, the current letter, being $\push$-, $\pop$-, or $\noop$-letters:
\begin{itemize}
\item From the initial state:
\begin{itemize}
\item $((\iota,\varepsilon,\bot'),\bot) \xrightarrow{\sigma'\mid \push \bot'} (\iota,\sigma',\varepsilon)$;
\item No pop from an empty stack;
\item $((\iota,\varepsilon,\bot'),\bot) \xrightarrow{\sigma'\mid \noop} (\iota,\sigma',\bot')$.
\end{itemize}

\item After a push, that is, for $\sigma$ a $\push$-letter:
\begin{itemize}
	\item For each transition $(q,X)\xrightarrow{\sigma\mid \push Y}q'$ in $\A$ (where  $X\in \Gamma$):
	\begin{itemize}
		\item $((q,\sigma,\varepsilon),X) \xrightarrow{\sigma'\mid \push Y} (q',\sigma',\varepsilon)$;
		\item $((q,\sigma,\varepsilon),X) \xrightarrow{\sigma'\mid \pop} (q',\sigma',X)$;
		\item $((q,\sigma,\varepsilon),X) \xrightarrow{\sigma'\mid\noop}(q',\sigma',Y)$.
	\end{itemize}

	\item For each transition $(q,\bot)\xrightarrow{\sigma\mid \push Y}q'$ in $\A$:
	\begin{itemize}
		\item $((q,\sigma,\varepsilon),\bot') \xrightarrow{\sigma'\mid \push Y} (q',\sigma',\varepsilon)$;
		\item $((q,\sigma,\varepsilon),\bot') \xrightarrow{\sigma'\mid \pop} (q',\sigma',\bot')$;
		\item $((q,\sigma,\varepsilon),\bot') \xrightarrow{\sigma'\mid\noop}(q',\sigma',Y)$.
	\end{itemize}
\end{itemize}

\item After a pop or noop, that is, for $\sigma$ a $\pop$- or $\noop$-letter:
\begin{itemize}
	\item For each transition $(q,X)\xrightarrow{\sigma\mid \pop}q'$ or $(q,X)\xrightarrow{\sigma\mid \noop}q'$ in $\A$, $X\in \Gamma$, $Y\in\Gamma\cup\{\bot'\}$:
	\begin{itemize}
		\item $((q,\sigma,X),Y) \xrightarrow{\sigma'\mid \push X} (q',\sigma',\varepsilon)$;
		\item $((q,\sigma,X),Y) \xrightarrow{\sigma'\mid \pop} (q',\sigma',Y)$;
		\item $((q,\sigma,X),Y) \xrightarrow{\sigma'\mid\noop}(q',\sigma',X)$;
	\end{itemize}
	 \item For each transition $(q,\bot)\xrightarrow{\sigma\mid \noop}q'$ in $\A$:
	 \begin{itemize}
	 	\item $((q,\sigma,\bot'),\bot) \xrightarrow{\sigma'\mid \push \bot'} (q',\sigma',\varepsilon)$;
		\item No pop from an empty stack;
	 	\item $((q,\sigma,\bot'),\bot) \xrightarrow{\sigma'\mid\noop}(q',\sigma',\bot')$.
	 \end{itemize}
 \end{itemize}
 \end{itemize}

Observe that there is bijection between runs of $\A$ and corresponding runs in $\A'$, where the run in $\A'$ is one letter behind and the stack of $\A$ is simulated by the semantic stack of $\A'$. Therefore the two automata recognise the same language.

Furthermore, $G_1(\A',\A)$ since Eve can first take the dummy transition from the initial state and then copy Adam's transitions to obtain a run that is accepting whenever Adam's run is accepting.
We conclude that  $\A'$ is a $1$-token ghost of $\A$.
\end{proof}

We proceed to show that history-determinism and guidability coincide for the class of $\omega$-VPA. We shall do this using the criterion of \cref{cl:Criteria}.\ref{Item:SafetyProduct}, that requires us to construct a safety-VPA recognising the plays of a winning strategy of Adam in the letter game on an $\omega$-VPA, if the $\omega$-VPA is not history-deterministic. Towards this, we use the so-called \emph{stair-parity VPAs} as an intermediary.

\paragraph*{Stair-parity automata}

While deterministic parity VPAs are not as expressive as nondeterministic parity VPAs, it is nevertheless possible to determinise nondeterministic parity VPAs into deterministic \emph{stair-parity} VPAs. A stair-parity VPA has the same components as a parity VPA, just that the acceptance condition is different for run. 

Given a run $\rho$ of a stair-parity VPA, consider the sequence $\rho'$ obtained by discarding configurations for which a future configuration of smaller stack-height exists. Observe that this sequence $\rho'$ must be infinite. We write that the run $\rho$ satisfies the stair-parity condition if $\rho'$ satisfies the parity condition. A stair-parity VPA $\A$ accepts a word $w$ iff there is a run of $\A$ on $w$ that satisfies the stair-parity condition. 

We use the following two known results on stair-parity VPAs.
\begin{lemma}\label{lemma:VPA-stair-parity}
\begin{enumerate}
    \item Every nondeterministic parity VPA can be determinised to a language-equivalent deterministic stair-parity VPA~\cite[Theorem 1]{LMS04}.
    \item Let $\G$ be a two-player turn-based game on a potentially infinite arena between Adam and Eve whose winning condition for Eve is given by a deterministic stair-parity VPA~$\A$. Then, the winning player has a winning strategy which can be implemented by a deterministic visibly pushdown transducer $\M$, that takes as input the transitions $\Delta_A$ of $\A$. Furthermore, the transition on a \push \ (resp. \pop \ and \noop) letter of $\A$ in $\Delta_A$ is a \push \ (resp. \pop \ and \noop) letter of $\M$~\cite[Theorem 9]{Fri10}.    
\end{enumerate}
\end{lemma}

\clhdguidvpa*

\begin{proof}
	We will use the criterion of~\cref{cl:Criteria}.\cref{Item:SafetyProduct}.	

	First, note that an $\omega$-VPA $\B$ with any $\omega$-regular acceptance condition $\beta$ can be replaced by an $\omega$-VPA $\A$ with a parity condition, such that $\A$ accepts the same language as $\B$, and furthermore, the nondeterministic choices in $\A$ are exactly those of $\B$: we construct $\A$ by taking the product of $\B$ with a deterministic parity automaton whose alphabet is the transitions of $\B$, and which recognises the acceptance condition $\beta$. Such a deterministic parity automaton always exists; see~\cite{Bok18}, for instance. 
	As $\A$ and $\B$ are equivalent and have the same nondeterministic choices, the letter games on them are isomorphic, and so are the simulation games and token games.
	We can thus proceed with the proof, assuming that the considered $\omega$-VPA has a parity acceptance condition.
	
	We will describe the letter-game on an $\omega$-VPA with parity acceptance as a game with a finite arena and an $\omega$-VPA winning condition to show that Adam's winning strategies can be implemented by visibly pushdown transducers. We will then turn this transducer into an $\omega$-VPA recognising the plays that agree with Adam's strategy, to then apply \cref{cl:Criteria}.\cref{Item:SafetyProduct}.

	The arena consists of the product of the state-space of $\A$ with the input alphabet $\Sigma$, so that at every other turn, Adam chooses a letter, and every other turn, Eve chooses a matching transition. It will be technically convenient to only label Eve's turns with the transition chosen by Eve, which includes the letter played by Adam. A play then consists of the sequence of transitions played by Eve, from which we can also read the word read by Adam.
	Adam's winning condition is the intersection between $\{w \St $ the word component of $w$ is in $L(\A)\}$ and the set of Eve's runs that are rejecting or invalid. We argue that this is recognised by an  $\omega$-VPA that is synchronised with $\Sigma$.  
	To check whether Eve's run is rejecting or invalid, we take a copy $\D$ of $\A$ that reads the transitions chosen by Eve and deterministically simulates Eve's run. It then accepts if either Eve chooses an invalid transition ($\pop$ on an empty stack or a transition that is not enabled at the right mode) or if the resulting run fails the parity condition (which we can easily complement). Then, let $\A'$ be the intersection of $\A$ and $\D$. Since $\omega$-VPAs are closed under intersection~\cite[Theorem 6]{AM04}, the result is still an $\omega$-VPA. By construction, it recognises Adam's winning condition.
	
	The letter game on an $\omega$-VPA can therefore be seen as a finite-arena game with an $\omega$-VPA winning condition for Adam, which can be determinised into an equivalent deterministic stair-parity VPA (\cref{lemma:VPA-stair-parity}).
 
	 Then, the game can be seen as a game played on the LTS of a deterministic $\omega$-VPA with a stair parity winning condition. In such games, the winning player has a winning strategy implemented by a deterministic visibly pushdown transducer $\M$, of which the alphabet consists of the transitions $\Delta_\A$ of $\A$, and $\push,\pop$ and $\noop$ transitions of $\A$ are $\push,\pop$ and $\noop$ letter for $\M$, respectively~\cite[Theorem 9]{Fri10}. In other words, projected onto the letters of the transitions, this alphabet is synchronised with $\Sigma$.
	
	We can now interpret this transducer $\M$ as a safety $\omega$-VPA $\B$ that recognises the plays of this winning strategy for Adam in the letter game of $\A$, by interpreting the outputs of the transducer as part of the input. In other words, the automaton $\B$, with alphabet $\Delta_\A\times \Sigma$ has the same state-space as $\M$ and transitions $q\xrightarrow{(t,a)}q'$ if $\M$ has transition $q\xrightarrow{t:a}q'$ that produces $a$ when reading $t$ at state $q$. To deal with the initial output of $\M$, the initial state of $\B$ only has outgoing transitions over $(t,a)$ such that $t$ is a transition of $\A$ over the initial output of $\M$. Then, $\B$, seen as a safety automaton, recognises the set of plays of the letter game consistent with Adam's strategy given by $\M$.
	
This automaton $\B$ is then as required for the application of 
criterion~\cref{cl:Criteria}.\cref{Item:SafetyProduct}, since $\omega$-VPAs are closed under $1$-token ghost from \cref{cl:VPA-ghost}. We thus conclude that guidability and history-determinism coincide for $\omega$-VPAs.
\end{proof}



\subsection{Timed automata}\label{app-sec:timed}


We define a $\SigmaShift$ operation for timed automata and show that $\SigmaShift(\T)$ is a 1-token ghost of the automaton $\T$ (\cref{cor:timed-automata-closed-under-delay}). As the one-token game characterises history-determinism on safety/ reachability timed automata~\cite{BHLST'23}, we get by \cref{cl:Criteria}.\cref{Item:OneTokenGhost} that history-determinism and guidability coincide for them.

The timed alphabet is infinite and therefore unsuitable for a product construction. Hence, we have to store the \textit{region} that the previous timed letter produces in the state space. Furthermore, because time does not wait for our one-letter lag to advance, we duplicate each clock, using one copy for comparisons in guards while the other is reset at each transition, just in case the chosen transition actually resets this clock. The copy that is reset at each transition is upgraded to be the clock used to check guards, while the other one is now reset at each transition. 

We first describe the regions formally. For each clock $x \in C$, let $c_x$ be the largest constant in any guard involving $x$ in $\T$. Let $R_x$ be the set consisting of the following intervals in $\mathbb{R}$: 
\begin{itemize}
	\item The points $i$, for each $i \in \mathbb{N}$ with $0 \leq i \leq c_x$.
	\item The open intervals $(i,i+1)$ for each $i \in \mathbb{N}$ with $0 \leq i < c_x$.
	\item The open interval $(c_x,\infty)$, which consists of all values greater than $c_x$.
\end{itemize} 
We let $\R = \prod_{x  \in C} R_x$ to be the set of regions. For each region $r \in \R$, let $g_r \in \B(C)$ be a guard such that a valuation $\nu$ belongs to the region $r$ if and only if $\nu$ satisfies the guard $g_r$, i.e. $\nu \models g_r$. Finally, we say a region $r$ satisfies a guard $g$, denoted by $r \models g$ if and only if all valuations $\nu \in r$ satisfy $g$. 




As mentioned earlier, for each clock $x$ in $\T$, we will have two clocks in $\SigmaShift(\T)$ that we denote by $(x,0)$ and $(x,1)$.  Thus, the clocks in $\SigmaShift(\T)$ are given by $C \times \{0,1\}$. The states in $\SigmaShift(\T)$, apart from an additional initial state, are of the form $(q,\sigma,r,f) \subseteq Q \times \Sigma \times \R \times \{0,1\}^{C}$, where $f: C \xrightarrow{} \{0,1\}$ is a function. This is used to encode which clock amongst $(x,0)$ and $(x,1)$  is \emph{active} in $\SigmaShift(\T)$ at any given moment,  for each clock $x$ in $\T$. For a valuation $\mu: C \times \{0,1\} \xrightarrow{} \mathbb{R}$ at a state $(q,\sigma,r,f)$ in $\SigmaShift(\T)$, the \emph{active valuation} $\restr{\mu}{f} = \mu (x,f(x))$ is the values of the clocks of the form $(x,f(x))$. The clocks that are not active are called \emph{passive}, and these are of the form $(x,1-f(x))$.

We also extend each guard $g \in \B(C)$ to guards for the active clocks in $C \times\{0,1\}$: Let $g[f] \in \B(C \times \{0,1\})$ be the guard obtained by replacing each instance of a clock $x$ in $g$ by $(x,f(x))$. Finally, to deal with resets of clocks $X\subseteq C$, we need to switch the active and passive clocks in $\SigmaShift(\T)$ amongst the clocks that correspond to $X$. This is done by the function $f^X$ obtained by swapping the active and passive clocks associated to $X$ in $f$. That is, $f^X(c) = f(c)$ if $c \notin X$, and $f^X(c) = 1-f(c)$ if $c \in X$.


Let us denote the active and passive clocks with respect to a function $f$ by $A[f]$ and $P[f]$, respectively.
If the set of clocks reset in the $i^{th}$ transition in $\T$ is $X$, then the following clocks would be reset in the corresponding $(i+1)^{th}$ transition in $\SigmaShift(\T)$:
\begin{enumerate}
    \item The passive clocks $\{(c,1-f(c)) \St c\notin X\}$ not in $X$.
    \item The active clocks $\{(c,f(c)) \St x\in X\}$ of $X$.
\end{enumerate}

We now formally define the construction of $\SigmaShift(\T)$ described above:
\begin{definition}
Given a timed automaton $\T = (\Sigma,Q, q_0, C, \delta, F)$, we define the timed automaton $\SigmaShift(\T)  = (\Sigma,Q',s,C,\delta',F')$, where the set of states $Q'$ consists of 
\begin{itemize}
    \item  The initial state, $s$
    \item Tuples $(q,\sigma,r,f) \in Q \times \Sigma \times \R \times \{0,1\}^C $
\end{itemize}  
The set $\delta'$ consists of the following transitions: 
\begin{itemize}
    \item The transitions $(s,\sigma,g_r[f_0], P[f_0], (q_0,\sigma,r,f_0))$ for each $\sigma \in \Sigma, r \in R$, where $f_0$ is the function $f: C \xrightarrow{} \{0,1\}$ such that $f(x) = 0$ for all $x \in C$.
    \item The transitions $((q,\sigma,r,f),$ $\sigma',g_{r'}[f^X], P[f^X], (q',\sigma',r',f^X))$ such that  $r\models g$, for every transition $(q, \sigma, g, X, q')$ in $\T$.
\end{itemize}
The set $F'$ consists of states of the form $(q,\sigma,r,f)$ where $q \in F$ is an accepting state in $\T$, as well as the state $s$ if $q_0$ is in $F$.
\end{definition}

We show next, via the following two lemmas, that $\SigmaShift(\T)$ is a one-token ghost of $\T$.

\begin{lemma}\label{lemma:timed-automata-g1-sigmashift}
Let $\T = (\Sigma,Q, q_0, C, \delta, F)$ be a safety/reachability timed automaton. Then, Eve wins $G_1(\SigmaShift(\T),\T)$.
\end{lemma}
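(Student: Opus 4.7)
I would describe an explicit winning strategy for Eve that simulates Adam's transitions with a one-round delay, in the spirit of the $\SigmaShift$ construction for uniform automata classes. In the first round, when Adam plays the timed letter $(\sigma_1,d_1)$, Eve takes the initial transition to $(q_0,\sigma_1,r_1,f_0)$, where $r_1$ is the region containing $\mathbf{0}+d_1$; this is allowed because $g_{r_1}[f_0]$ is satisfied by Eve's valuation after the delay. In every subsequent round $i \geq 2$, Eve already knows the transition $(p_{i-2},\sigma_{i-1},g_{i-1},X_{i-1},p_{i-1})$ that Adam chose in round $i-1$, and she plays the corresponding $\SigmaShift$-transition from $(p_{i-2},\sigma_{i-1},r_{i-1},f_{i-1})$ on letter $\sigma_i$, with $r_i$ chosen to be the region containing $\nu_{i-1}+d_i$.

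The core of the argument is the following invariant, proved by induction on $i$: after round $i \geq 1$, with Adam in $(p_i,\nu_i)$ and Eve in $(p_{i-1},\sigma_i,r_i,f_i)$ with valuation $\mu_i$, the active clocks $\mu_i(x,f_i(x))$ equal $\nu_{i-1}+d_i$ and all passive clocks are $0$. The induction step rests on three facts. First, since regions refine all atomic constraints appearing in guards, Adam's guard $g_{i-1}$ being satisfied at $\nu_{i-2}+d_{i-1}\in r_{i-1}$ implies $r_{i-1}\models g_{i-1}$, so the $\SigmaShift$-transition Eve plays is enabled. Second, for $x\notin X_{i-1}$, $f^{X_{i-1}}$ leaves the active copy unchanged; that copy gains $d_i$ from the delay and is not reset (the reset $P[f^{X_{i-1}}]$ touches only passive copies), yielding active value $\nu_{i-2}(x)+d_{i-1}+d_i = \nu_{i-1}(x)+d_i$. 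Third, for $x\in X_{i-1}$, the active copy switches to the twin that was passive at the end of round $i-1$; by the previous invariant that copy had value $0$, it gains $d_i$ from the delay and is not reset, giving $d_i = \nu_{i-1}(x)+d_i$ as desired.

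The acceptance condition then follows from the invariant. For reachability, if Adam reaches $p_i\in F$ at round $i$, Eve's state at round $i+1$ is $(p_i,\cdot,\cdot,\cdot)\in F'$. For safety, if $p_i\in F$ for every $i$, Eve's states from round $1$ onwards are all in $F'$, and $s\in F'$ because $q_0 = p_0 \in F$. Thus Eve's run is accepting whenever Adam's is, and her strategy wins $G_1(\SigmaShift(\T),\T)$.

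The main technical obstacle I expect is the careful bookkeeping around the two copies of each clock and the flipping function $f^X$. Because time elapses continuously while Eve's run lags by a full round, one has to keep straight which copy holds the ``real'' value at each moment during a round, and verify that the combination of $f^{X_{i-1}}$ and $P[f^{X_{i-1}}]$ enacts Adam's round-$(i-1)$ reset exactly one round later without inadvertently corrupting clocks that should have been running all along.
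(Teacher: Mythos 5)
Your proposal is correct and follows essentially the same route as the paper's proof: Eve copies Adam's previous transition with a one-round lag, maintaining exactly the invariant that her active clock copies equal Adam's previous valuation shifted by the current delay while passive copies are zero, with the stored region witnessing enabledness of the delayed transition. The acceptance argument via the state-component correspondence also matches the paper's.
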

\begin{proof}
We show that Eve can copy the $i^{th}$ transition of $\T$ in her $(i+1)^{th}$ transition of $\SigmaShift(\T)$, to establish that she wins the one-token game between $\SigmaShift(\T)$ and~$\T$.

In the first round, after Adam selects a letter $(\sigma_0, d_0)$, Eve's transition in $\SigmaShift(\T)$ is  deterministic: $(s,\sigma_0, g_r[f_0],P[f_0], (q_0, \sigma_0, r_0,f_0)$, where $r_0$ is the unique region of the valuation $\nu: C \xrightarrow{} \mathbb{R}$ which maps all clock values to $d_0$. Then, Adam would select some transition  $(q_0,\sigma_0,g_0,X, q_1)$ in $\T$, resulting in valuation $\nu_0$.

After $i>0$ rounds, suppose Eve's token in $\SigmaShift(\T)$ is in state $p_{i}=(q_{i-1},\sigma_{i-1},r_{i-1},f_{i-1})$, with valuation $\mu_i$, and Adam's token in $\T$ after $i-1$ rounds is at the state $q'$ with valuation $\nu_{i-1}$. We will show that Eve has a strategy to preserve the following invariant: 
\begin{enumerate}
    \item[\textbf{I1}] Adam's state $q'$ after $i$ rounds is $q_{i-1}$, same as the component of $Q$ in $p_i$.
    \item[\textbf{I2}] 
     The active valuation $\restr{\mu_{i-1}}{f_{i-1}}$ equals $\nu_{i-1}+d_{i}$, while all the passive clocks in $\mu_i$ are assigned the value $0$.
    \item[\textbf{I3}] The region $r_{i-1}$ is the unique region of the active valuation $\restr{\mu_{i-1}}{f_{i-1}}$. 
\end{enumerate}       

 It is clear that the invariant holds after the first round. Suppose it holds after $i>0$ rounds. We will show that Eve has a strategy to  preserve the invariant after $i+1$ rounds:

Suppose, after $i$ rounds, Eve's configuration is given by $\pair{p,\mu}=\pair{(q,\sigma,r,f),\mu}$, while Adam's transition in the $i^{th}$ round was $\pair{q,\nu} \xrightarrow{\sigma,d} \pair{q,\nu+d} \xrightarrow{g,X} \pair{q',\nu'}$. By the invariant, we know that $\restr{\mu}{f} = \nu + d$,  and $r$ is the region of $\restr{\mu}{f}$.  
In the $(i+1)^{th}$ round of the play: 
\begin{enumerate}
    \item Adam selects a timed letter $(\sigma',d')$.
    \item Eve the transition 
    $\pair{(q,\sigma,r,f),\mu} \xrightarrow{\sigma',d'}  \pair{(q,\sigma,r,g),\mu+d'}$ $\xrightarrow{g[f^X],P[f^X]} \pair{(q',\sigma',r',f^X),\mu'}$. Note that this transition exists as, by the invariant, $r$ is the region of $\restr{\mu}{f} = \nu+d$, which satisfies the guard $g$ due to Adam being able to take his transition in the $i^{th}$ round.
    \item Adam a transition $\pair{q',\nu'} \xrightarrow{\sigma',d'} \pair{q',\nu'+d'} \xrightarrow{g',X'} \pair{q'',\nu''}$.
\end{enumerate}

It is clear that the invariant $\textbf{I1}$ is satisfied after the $(i+1)^{th}$ round. 

As for the invariant $\textbf{I2}$, by the induction hypothesis, we know that $\restr{\mu}{f}=\nu+d$. We need to show that $\restr{\mu'}{f^X} = \nu'+d'$.
Note that $\nu'+d'= (\nu+d)[X] + d'$, which is, $\nu(c)+d+d'$ if $c \notin X$, and $d'$ if $c \in X$.
 Let us take a look at how the valuation $\mu'$ is obtained. First, we add the duration $d'$ to the valuation $\mu$. This causes all active clocks with respect to $f$ to have the value $\nu(c)+d'+d$, and the passive clocks to have the value $d'$. Then, the function $f^X$ swaps active and passive clocks of $X$ and resets the clocks that are now passive with respect to $f^X$ to obtain $\mu'$. Thus,  we have that $\mu'(c,f^X(c)) = \mu(c,f(c))+d'=\nu(c)+d+d' = [\nu'+d'](c) \text{ if $c \notin X$} $, and $\mu'(c,f^X(c))= d' = [\nu'+d'](x)$ if $c \in X$. This implies that $\restr{\mu'}{f^X} = \nu'+d'$. It is clear that the passive clocks with respect to $f^X$ have value 0, as they were just reset. Thus, the condition \textbf{I2} is also an invariant.

The invariant \textbf{I3} holds as $\restr{\mu'}{f^X}$ must have satisfied the guard $g_{r'}[f^X]$ to have taken the transition, which implies that $r'$ is the unique region of $\restr{\mu'}{f^X}$. 

Having shown that Eve has a strategy which preserves the invariants \textbf{I1}, \textbf{I2} and \textbf{I3}, it is easy to see that such a strategy is winning in $G_1(\SigmaShift(\T),\T)$: By the invariant \textbf{I1} and the construction of the accepting states of $\SigmaShift(\T)$, Eve's run in $\SigmaShift(\T)$ is accepting if and only if Adam's run in $\T$ is accepting.

\end{proof}

We proceed with showing that $L(\SigmaShift(\T)) = L(\T)$.
\begin{lemma}\label{lem:TA-sigmashift-le}
  Given a timed automaton $\T$, we have $L(\SigmaShift(\T)) = L(\T)$.
\end{lemma}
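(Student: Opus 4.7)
The plan is to prove the two inclusions separately. The inclusion $L(\T)\subseteq L(\SigmaShift(\T))$ follows immediately from \cref{lemma:timed-automata-g1-sigmashift}: given an accepting run $\rho$ of $\T$ on a timed word $w$, we let Adam play the letters of $w$ in the $1$-token game $G_1(\SigmaShift(\T),\T)$ while following $\rho$ with his token. Since Eve has a winning strategy in this game, she produces a run $\rho'$ of $\SigmaShift(\T)$ on $w$ that is accepting whenever $\rho$ is. Hence $w\in L(\SigmaShift(\T))$.

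For the reverse inclusion, I will reverse-engineer a run of $\T$ from a given accepting run of $\SigmaShift(\T)$. Consider an accepting run $\rho' = s \xrightarrow{\sigma_0,d_0} p_0 \xrightarrow{\sigma_1,d_1} p_1 \xrightarrow{\sigma_2,d_2} p_2 \cdots$ of $\SigmaShift(\T)$ on $w=(\sigma_0,d_0)(\sigma_1,d_1)\cdots$, where $p_i = (q_i,\sigma_i,r_i,f_i)$ with valuation $\mu_i$. By the construction of $\delta'$, each transition $p_{i-1}\to p_i$ (for $i\geq 1$) is built from a specific $\T$-transition $t_i=(q_{i-1},\sigma_{i-1},g_{i-1},X_{i-1},q_i)$, witnessed by $r_{i-1}\models g_{i-1}$ and $f_i=f_{i-1}^{X_{i-1}}$; analogously, the first transition $s\to p_0$ encodes no $\T$-transition but determines $q_0$ as $\T$'s initial state. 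I would then propose $\rho = \pair{q_0,\nu_{-1}} \xrightarrow{\sigma_0,d_0} \pair{q_0, \nu_{-1}+d_0} \xrightarrow{g_0,X_0} \pair{q_1,\nu_0} \xrightarrow{\sigma_1,d_1}\cdots$ as the candidate run of $\T$ on $w$, with $\nu_{-1}=\mathbf 0$.

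The main technical step is to show that $\rho$ is a valid run, which requires that each guard $g_i$ is satisfied by $\nu_{i-1}+d_i$. I would prove by induction the following invariant, parallel to \textbf{I2}\&\textbf{I3} in \cref{lemma:timed-automata-g1-sigmashift} but in the opposite direction: for every $i\geq 0$, the active valuation $\restr{\mu_i}{f_i}$ equals $\nu_{i-1}+d_i$, and consequently $r_i$ is the unique region of $\nu_{i-1}+d_i$. The base case follows from $\mu_0 = (d_0,d_0,\ldots)$ under $f_0\equiv 0$. The inductive step uses the fact that after the delay $d_{i+1}$ the clocks $(c,f_i(c))$ hold $\nu_{i-1}(c)+d_i+d_{i+1}$ for $c\notin X_i$ and $d_{i+1}$ for $c\in X_i$, which is exactly $\nu_i + d_{i+1}$ after the swap $f_{i+1}=f_i^{X_i}$; the region of this active valuation is $r_{i+1}$ because the $\SigmaShift(\T)$-transition fired with guard $g_{r_{i+1}}[f_{i+1}]$. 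Given the invariant, $r_i\models g_i$ (from the construction) implies $\nu_{i-1}+d_i\models g_i$, so $t_i$ is enabled in $\T$, confirming $\rho$ is a run of $\T$ on $w$.

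Finally, I would check that acceptance is preserved. Since $p_i$ is accepting in $\SigmaShift(\T)$ iff $q_i\in F$ (with $s$ accepting iff $q_0\in F$), the sequence of accepting visits of $\rho'$ coincides exactly with the sequence of visits of $\rho$ to $F$ in $\T$. Thus $\rho$ is accepting whether the acceptance condition is reachability or safety, giving $w\in L(\T)$. The only delicate point I anticipate is getting the indexing between the $\SigmaShift(\T)$-states and the $\T$-configurations right, which is why I would state the invariant explicitly and align it with the $\T$-valuation \emph{before} the $i$-th transition rather than after.
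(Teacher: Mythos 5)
Your proposal is correct and follows essentially the same route as the paper: the inclusion $L(\T)\subseteq L(\SigmaShift(\T))$ is read off from Eve's winning strategy in $G_1(\SigmaShift(\T),\T)$, and the converse is obtained by extracting from an accepting run of $\SigmaShift(\T)$ the shifted run of $\T$, recovering each $\T$-valuation from the active clocks (up to the pending delay) and checking guards via the stored regions. The paper leaves this verification as "easy to see from the construction," whereas you spell it out with an explicit invariant paralleling \textbf{I2}--\textbf{I3}; apart from a harmless notational slip (the clocks holding $d_{i+1}$ after the swap are $(c,f_{i+1}(c))$ for $c\in X_i$, not $(c,f_i(c))$), the details are right.
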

\begin{proof}
 $L(\T) \subseteq L(\SigmaShift(\T))$ is clear by~\cref{lemma:timed-automata-g1-sigmashift}.
 As for $L(\SigmaShift(\T)) \subseteq L(\T)$, consider a sequence $(\pair{p_i,\mu_i})_{i \geq 0}$ of states of $\SigmaShift(\T)$ that forms a run on the timed word $(\sigma_i,d_i)_{i \geq 0}$. Let $q_j$ be the state component of $p_{j+1}$, and let $\nu_j: C \xrightarrow{} \mathbb{R}$ be obtained by subtracting $d_j$ from $\restr{\mu_{j+1}}{f_{j+1}}$, the active valuation of $\pair{p_j,\mu_{j+1}}$, for each $j \geq 0$. Then, it is easy to verify from construction that the sequence $(\pair{q_i,\nu_i})_{i \geq 0}$ is a run in $\T$ on the timed word $(\sigma_i,d_i)_{i \geq 0}$ which is accepting if and only if $(\pair{p_i,\mu_i})_{i \geq 0}$ is an accepting run in $\SigmaShift(\T)$.
\end{proof}

From~\cref{lemma:timed-automata-g1-sigmashift,lem:TA-sigmashift-le}, we can conclude:
\begin{lemma}\label{cor:timed-automata-closed-under-delay}
The classes of safety and reachability timed automata are closed under $1$-token ghost.
\end{lemma}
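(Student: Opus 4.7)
The plan is to instantiate the definition of closure under $1$-token ghost via the $\SigmaShift$ construction that was just defined, and argue that everything required has already been established by the two preceding lemmas. Specifically, given a safety (resp.\ reachability) timed automaton $\T$, I would propose $\SigmaShift(\T)$ as the $1$-token ghost, and then verify the three things required: (i) $\SigmaShift(\T)$ lies in the same class as $\T$; (ii) $L(\SigmaShift(\T)) = L(\T)$; and (iii) Eve wins the $1$-token game $G_1(\SigmaShift(\T),\T)$.

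For (i), I would inspect the construction of $\SigmaShift(\T)$ directly: its state space is the finite set $(Q \times \Sigma \times \R \times \{0,1\}^C) \cup \{s\}$, its clock set is the finite set $C \times \{0,1\}$, and its transitions are the finitely many new tuples specified in the construction, so it is indeed a timed automaton. The acceptance condition is inherited: the accepting states of $\SigmaShift(\T)$ are exactly those whose underlying $Q$-component lies in $F$ (together with $s$ if $q_0 \in F$), so if $\T$ is a safety automaton with safe region $F$ then $\SigmaShift(\T)$ is a safety timed automaton with the corresponding safe region, and symmetrically for reachability. Thus $\SigmaShift$ preserves membership in the classes under consideration.

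For (ii) and (iii), the work has already been carried out: \cref{lem:TA-sigmashift-le} gives $L(\SigmaShift(\T)) = L(\T)$, and \cref{lemma:timed-automata-g1-sigmashift} gives that Eve wins $G_1(\SigmaShift(\T),\T)$. Together these are exactly the conjuncts in \cref{def:Delay}, so $\SigmaShift(\T)$ is a $1$-token ghost of $\T$ inside the class. Since $\T$ was an arbitrary safety (resp.\ reachability) timed automaton, the class is closed under $1$-token ghost.

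The only subtle point worth flagging explicitly is that the construction must preserve the \emph{type} of acceptance condition rather than merely producing some timed automaton with the correct language; this is where the uniform ``copy $F$ through the $Q$-component'' definition of the accepting set pays off, making the closure argument essentially bookkeeping rather than a new construction. Thus there is no genuine obstacle beyond what the two supporting lemmas already overcame (the retroactive-reset handling via duplicated clocks and region-tracking in the state space).
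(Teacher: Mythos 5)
Your proposal is correct and matches the paper's argument exactly: the paper also concludes this lemma directly from \cref{lemma:timed-automata-g1-sigmashift} and \cref{lem:TA-sigmashift-le}, with $\SigmaShift(\T)$ as the witness. Your additional remark that the construction visibly stays within the safety (resp.\ reachability) class is a harmless and accurate piece of bookkeeping that the paper leaves implicit.
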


As $G_1$ characterises history-determinism on safety and reachability timed automata~\cite{BHLST'23}, by \cref{cl:Criteria}.\cref{Item:OneTokenGhost}  history-determinism and guidability coincide:  
\cltimedhdguid*

\subsection{Linear automata}\label{app-sec:linear}
\cllinearhdguidability*
\begin{proof}
	History-determinism implies guidability with respect to all classes. For the other direction, consider a linear automaton $\A = (\Sigma,Q,\iota,\Delta,\alpha)$ that is not \HD, and let $\M=(\Delta, \Sigma, M,m_0,\Delta_M:M \times \Delta \to M,\gamma:M\to \Sigma)$ be a deterministic finite-state transducer representing a finite-memory winning strategy $\strat_M$ of Adam in the letter game.

    Let $\P = (\Delta,M \times Q, p_0, \Delta_P)$ be the deterministic safety automaton that is obtained by the composition of $\M$ with $\A$. It's states are $Q \times M$, we have the transition $(m,q) \xrightarrow{\delta} (m',q')$ if $\gamma(m) = \sigma$, $\delta = q \xrightarrow{\sigma} q'$ is a transition in $\A$, and $m \xrightarrow{\delta} m'$ is the unique transition in $\M$. Note that $\P$ recognises the set of plays that can happen in the letter game on $\A$ if Adam plays according to $\strat_M$. 
    
    Let $\N=(\Sigma,N,n_0,\Delta')$ be the safety automaton that results from taking the projection of $\P$ to the alphabet $\Sigma$ of $\A$. Note that $\N$ need not be linear. For each $q\in Q$, let $N_q$ be states in $\N$ of the form $(m,q)$ for some $m\in M$. Note that there is no cycle in $\N$ that visits $N_q$ and $N_{q'}$, for two different states $q,q'\in Q$, as $\A$ is linear. 
	Hence any cycle in $\N$ must be contained entirely in one of the $N_q$'s. Let $\N_q$ be the transition system that consists of all states in $N_q$ and transitions between them.
	
	We proceed by adapting $\N$ to a linear automaton $\N'$ that will still correspond to a winning strategy of Adam in the letter game, and will thus constitute a projection of a deterministic automaton $\P'$ onto the alphabet $\Sigma$, where $\P'$ is over the alphabet of transitions of $\A$ and recognises the plays of a winning strategy of Adam in the letter game. Once achieving that, we can apply the $\SigmaShift$ construction on $\N'$ -- it will not introduce, in this case, non-self cycles, since the states of $\N'$ (as the states of $\N$), have outgoing transitions only on a single letter.
	Hence, we satisfy \cref{cl:Criteria}.\cref{Item:SafetyProduct}, proving the stated claim.
	
	We elaborate below on how we adapt $\N$ into $\N'$. 
	
	Observe that since $M$ is a winning strategy for Adam in the letter game, every word that $\N$ reads is accepting in $\A$, and every run of $\N$ eventually remains in a component $\N_q$, where $q$ is a rejecting state of $\A$.
	
	Therefore, if a component $\N_q$ contains a cycle, then $q$ must be a rejecting state of $\A$.
	Furthermore, note that every state of $\N$, and thus of $\N_q$, has outgoing transitions on a single alphabet, and that $\N_q$ is deterministic, as $\A$ has at most one self-loop at the state $q$ on any letter $\sigma$.
	Hence, from each state $n$ of $\N_q$, there is at most one infinite word $w_n$ that can be generated within $\N_q$. 
	The following claim will be the key to linearising $\N$ into $\N'$.
	
	{\bf Claim}: there is a number $K \in \mathbb{N}$, such that for every state $n$ that lies within a cycle of $\N_q$ and finite word $u$ over which $\N$ can reach $n$, there is a run $\rho$ in $\A$ on the finite word $u\cdot w[0..K]$ such that $\rho$ ends in an accepting state $f$ of $\A$, and $f$ has self loops on all letters that appear in the word $w[K{+}1.. \infty]$. 
	\begin{proof}[Proof of the claim]
		We will show that the result holds for $K =  |N_q| \cdot 2^{|Q|} $. 
		Let $v$ be the unique shortest finite word whose unique run in $\N_q$ starts at $n$ and ends at $n$. Note that the length of the word $v$ is at most $|N_q|$. 
		For every number $i\in\Nat$, let $X_i$ be the set of $\A$'s states that can be reached upon reading the finite word $uv^i$ in $\A$. 
		
		By the pigeonhole principle, two of the subsets $X_i$ and $X_j$ for $i<j$ must be equal. Let $d$ be the difference between $j$ and $i$. It follows that the sequence $\{X_l\}_{l\geq 0}$ is eventually periodic after $i$ with period at most $d$, that is $X_{i+k} = X_{i+k+d}$ for all $k\geq 0$. As the word $uv^{\omega}$ is accepting, there is an accepting state $f$ of $\A$ reached by some $uv^{l}$ that has a self-loop on all the letters of $v$. This state $f$ must belong to all of $X_l$'s for all $l$ large enough. As the sequence $\{X_k\}_{k \geq i}$ is periodic, it follows that $f \in X_k$ for all $k \geq i$. 
		
		As the length of $v^i$ is at most $N_q \times 2^{|Q|}$, the state $f$ is thus reached upon reading the finite word $w[0 .. K]$, and has self-loops on all letters in the word $w[K{+}1 .. \infty]$.
	\end{proof}
	
	We now use the claim for generating the required automaton $\N'$ from $\N$, and thus completing the proof.
	We shall modify $\N_q$ to have $K+1$ copies of the states in $\N_q$, that is $N'_q = N_q \times [0..K]$. Then, for each transition $n \xrightarrow{\sigma} n'$ in $\N_q$, we have the transition $(n,i) \xrightarrow{\sigma} (n',i{+}1)$ for $i\in [0..K)$, as well as the self-loop $(n,K) \xrightarrow{\sigma} (n,K)$ in the last copy.
	
	Then, in $\N'$ we replace every component $\N_q$ with $\N'_q$, while replacing a transition $n'\to n$ from a state $n'\in N'\setminus N'_q$ to a state $n\in N_q$ by the transition $n'\to (n,0)$, and adding for every transition $n\to n'$ from a state $n\in N_q$ to a state $n'\in N' \setminus N'_q$, the transitions $(n,i)\to n'$, for every $i\in[0..k]$.
	
	The only cycles in $\N'$ are self loops. Furthermore, $\N'$ represents a winning strategy of Adam since:
	i) Every run of $\N'$ that does not eventually remain in $\N'_q$ has a corresponding run in $\N$; and 
	ii) Every run of $\N'$ that eventually remains in $\N'_q$ is rejecting for Eve, as $q$ is a rejecting state of $\A$, and it corresponds to a word in the language of $\A$, by the claim.
\end{proof}


\section{Proofs of Section \ref{sec:No}}\label{app-sec:no}

\subsection{Automata with bounded resources}

\clBoundedStates*

\begin{proof}
	Consider the B\"uchi automaton $\A$ depicted in \cref{fig:SmallBuchi}, accepting words with a finite number of $a$'s.
     $\A$ is not history-deterministic, as Adam can win the letter game by choosing $b$ whenever Eve is in an odd state of $\A$, and $a$ whenever she is in an even state.
	
	We now show  that $\A$ can simulate every B\"uchi automaton $\A'\leq\A$ with up to $2n$ states. Observe that since $\A'$ cannot accept a word with infinitely many $a$'s, its maximally strongly connected components (MSCCs) cannot have both a transition on $a$ and an accepting state. We call an MSCC  accepting if it contains an accepting state, and rejecting otherwise.
	
	We define the following strategy for Eve in the simulation game between $\A$ and $\A'$: If Adam is in an accepting MSCC while she is in an odd state, she moves to the next even state. In other cases, she remains in her state.
	We claim that Eve's strategy is winning in $\Sim(\A,\A')$. Indeed, if Adam's run is accepting, he must eventually end in an accepting MSCC, and by Eve's strategy, so will she, unless she runs out of states. Since $\A'$ can have up to $2n$ MSCCs, Adam can move at most $n$ times from a rejecting MSCC (or from the initial state of $\A'$) to an accepting MSCC. (If he starts in an accepting MSCC, he still needs at least $2n{+}1$ MSCCs in order to have $n{+}1$ such moves to an accepting MSCC). Hence, Eve does not run out of states and wins $\Sim(\A,\A')$.	
\end{proof}

\Subject{Timed Automata with bounded clocks}\label{app-sec:bounded-clocks}
 
\clclock*

\begin{proof} For technical convenience, in this proof we use timestamps rather than delays to write timed words.

Consider the language of finite timed words $L_k \subseteq (\{a\}\times \mathbb{R}_{\geq 0})^{+} \cdot (\{b\} \times \mathbb{R}_{\geq 0}) \cdot (\{a\}\times \mathbb{R}_{\geq 0})^{+} $ over $\Sigma = \{a,b\}$ that contain a subword $w$ of the form $w = (a,t_1) (a,t_2) \cdots (a,t_k) (b,t) (a,t_1{+}1)$ $(a, t_2{+}1) \cdots$ $(a,t_k{+}1),$ such that $t_1 < t_2 < \cdots < t_k < t < t_1{+}1$. 

 We construct a timed automaton $\T$ with clocks $C \! = \! \{c_1,c_2,\cdots\! ,c_k\}$ that accepts $L_k$ and is guidable in $\mathbb{T}_k$, but not \HD. The automaton $\T$ consists of two components: i) a nondeterministic component in which all runs stay in till a $b$ is seen; and ii) a deterministic component in which all runs of the timed automaton go to and stay in after a $b$ is seen.

At a high level, the first component consists of states from where any set of clocks may be non-deterministically reset, upon reading an $a$. Upon reading $b$, the timed automaton moves to the second component. The second component checks deterministically whether any clock value is 1 when an $a$ is read, and it moves to an accepting sink when all clocks have tested positively for 1 at least once. 

We now describe the two components in detail. The first consists of states $(S,1)$, where $S$ is a subset of $C$, and $1$ indicates the first component. The set $S$ tracks which clocks are reset at least once in the run so far.  The state $(\emptyset,1)$ is the initial state in $\T$. 
The transitions on $a$ in the first component are of the form $((S,1),a,g_{\top},X,(S',1))$, where $g_{\top} = \top$ is the guard that accepts all clock valuations, $X$ is the set of clocks that are reset, and $S' = S \cup X$. 

The second component consists of a rejecting sink $\reject$ and states of the form $(S,2)$, where $S$ is a subset of $C$ and $2$ indicates the second component. The set $S$ tracks the clocks that have not positively tested for $1$ at an $a$-transition yet. Naturally, the state $(\emptyset,1)$ is the only accepting state in $\T$. The transitions on $a$ in the second component are:

\begin{enumerate}
    \item[A1.] $((S,2),a,g[y], \emptyset, (S',2))$, where $y\in [k]$ and $g[y] = (c_y = 1) \bigwedge_{z \in [k] \setminus \{y\}} (c_z\neq 1)$. The set $S'$ is then updated to be the set $S \setminus \{c_y\}$. 
    \item[A2.] $((S,2),a,g[2],\emptyset,\reject)$, where $g[2] = \bigvee_{1 \leq x < y \leq k} ((c_x =1) \land (c_y=1))$.
    \item[A3.] $((S,2),a,g[S \neq 1], \emptyset,(S,2))$, where $g[S \neq 1] = \bigwedge_{c \in S} (c \neq 1)$.
    \item[A4.] $(\reject,a,g[\top],\emptyset,\reject)$, where $g[\top] = \top$ is the guard that accepts all clock valuations.
\end{enumerate}
A transition in A1 can be taken if a clock value is $1$. This is the case if an $a$ is being repeated exactly one time-unit apart. Furthermore, the condition that requires other clocks to be not equal to $1$ makes sure that the $a$'s that are repeated occur at distinct times. If there are two clocks that have their values equal to $1$, then the automaton moves to $\reject$ via a transition in A2. The transitions in A3 are self loops, which are taken when none of the clock values are 1. Finally, there are additional self loops on the sink $\reject$ state. Since the guards $g[y]$ for each clock $y \in C$, $g[2]$ and $g[S \neq 1]$ are all disjoint, the second component is deterministic.

Additionally, there are the following $b$-transitions from the first to the second component:
\begin{enumerate}
    \item[B1] $((C,1),b, (g[b]),\emptyset,(C,2))$, where $S = C$ and $g[b] = \bigwedge_{c \in C} ((c$ $>0) \land (c<1))$ is the guard that checks that all clocks have valuation strictly between 0 and 1.
    \item[B2] $((C,1),b, \neg g[b],\emptyset,\reject)$
    \item[B3] $(S,1),b,g[\top], \emptyset, \reject)$, where $S \neq C$
\end{enumerate}
Only the transition B1 above does not immediately move the run to $\reject$, and can only be taken if the guard $g[b]$ is satisfied. Morally, the guard $g[b]$ ensures that $t_1+1>t$ and $t_k<t$ in the description of words belonging to $L_k$ given at the start of the proof.

We shall argue that:
\begin{enumerate}
    \item The timed automaton $\T$ accepts $L_k$.
    \item $\T$ is not \HD 
    \item $\T$ simulates every $\T'\leq \T$, such that $\T'$ has at most $k$ clocks.
\end{enumerate}

\Subject{$T$ accepts $L_k$} Consider a word $w$ in $L_k$, and let $t_1<t_2<\cdots<t_k$ be timestamps of $k$ distinct $a$'s that occur in $w$ which are repeated one time-unit later, with the timed letter $(b,t)$ occurring in between, such that $t_1 + 1 > t>t_k$. We describe an accepting run $\rho$ of $\T$ on the word $w$. The run $\rho$ resets the clock $c_i$ at time $t_i$, for each $i \in [k]$. Thus, after the letter $(a,t_k)$, $\rho$ is at the state $(C,1)$. At the letter $(b,t)$, the transition B1 is taken to the state $(C,2)$ since the guard $g[b]$ would be satisfied: The valuation of $c_i$ is $t-t_i$ for each $i \in [k]$, and $0<t-t_i<t-t_1<1$. The clock $c_i$ then tests positively for 1 at $(a,t_i+1)$ for each $i \in[k]$ (in the transition A1), and at the (first) occurrence of the timed letter $(a,t_k+1)$, $\rho$ moves to $(\emptyset,2)$, and stays there. As the state $(\emptyset,2)$ is accepting, the run $\rho$ is accepting.

Conversely, consider a word $w$ that is accepted by $\T$ via the accepting run $\rho$. Then, all the $k$ clocks must have been reset at least once in $\rho$ in the first component: let $(a,t_i)$ be the last timed letter at which the clock $c_i$ was reset. Without loss of generality, we may assume $t_1 \leq t_2 \leq \cdots \leq t_k$. When the timed letter $(b,t)$ is seen, $\rho$ must take the transition B1 to $(C,2)$ since it would be rejecting otherwise. Since the guard $g[b]$ must be satisfied to take the transition B1, we get that at the timed letter $(b,t)$, the valuation of each clock is strictly between 0 and 1. Thus, $t-t_1 < 1$ and $t-t_k>0$. In the second component, each clock $c_i$ in $C$ for $i \in [k]$ tests positively for $1$ on a timed letter $(a,t_i+1)$ during a transition in A1, since $\rho$ eventually reaches the state $(\emptyset,2)$. Furthermore, whenever the valuation of clock $c_i$ is 1 for the first time, we must have that the value of clock $c_j$ for $j \neq 1$ is not $1$. This implies that $t_1<t_2<\cdots<t_k$, and hence the word $w$ is in $L_k$.     

\Subject{$\T$ is not \HD} We describe a winning strategy for Adam in the letter game on $\T$. Adam starts with the timed letters $(a,\frac{1}{k+2}), (a,\frac{2}{k+2}),$ $\cdots, (a,\frac{k+1}{k+2}), (b,1)$ in the first $k+2$ rounds, regardless of Eve' transitions. After the $k+2$ rounds, Eve's token  can either be at $\reject$, or at the state $(C,2)$. In the former case, Adam gives the letters $(a,1+\frac{1}{k+2}),(a,1+\frac{2}{k+2}),\cdots,(a,1+\frac{k}{k+2})$ in the next $k$ rounds, and wins because the resulting word is in the language. Otherwise, Eve resets all clocks at least once in the run on her token so far. Consider the clock valuations after the transition corresponding to the timed letter $(b,1)$. Then, either  
\begin{enumerate}
    \item There are two clocks that have the same value, say $1-\frac{i}{k+2}$ due to them being last reset at the letter $(a,\frac{i}{k+2})$. If Adam gives letters $(a,1+\frac{1}{k+2}),(a,1+\frac{2}{k+2}),\cdots,(a,1+\frac{k+1}{k+2})$, the resulting word would be in the language, but Eve's token would eventually move to a rejecting sink state; or
    \item All clocks have distinct values. Then there is a timed letter $(a,\frac{j}{k+2})$ such that all clocks that were reset on $(a,\frac{j}{k+2})$ were reset later. In particular, it could also be the case that no clock was reset when the timed letter $(a,\frac{j}{k+2})$ was seen. Let $i \neq j$ be an element in $[k+1]$. Since all the clocks have distinct values, there is a clock $c$ with the value $a,1-\frac{i}{k+2}$ when the transition on the timed letter $(b,1)$ is taken. Suppose Adam gives the timed letters $(a,1+\frac{l}{k+2})$ for each $l \in [k+1] \setminus\{i\}$ in the next $k$ rounds of the letter game. Then the resulting word is accepting, while the valuation of clock $c$ is never 1 at any transition, causing the run in Eve's token to be rejecting.
\end{enumerate}
Thus, Adam has a winning strategy in the letter game, and hence $\T$ is not \HD.

\Subject{$\T$ is guidable in $\mathbb{T}_k$}
Let $\T'$ be a timed automaton in $\mathbb{T}_k$ accepting the language $L'$. We assume for convenience that $\T'$ has exactly $k$ clocks denoted by the set $C = \{c_1,c_2,\cdots,c_k\}$, same as that for $\T$. We give a strategy $\sigma$ for Eve in the simulation game $\Sim(\T,\T')$ between $\T$ and $\T'$ such that if Eve loses $\Sim(\T,\T')$ when playing according to $\sigma$, then $L(\T')$ is not a subset of $L(\T)$. This would imply that $\T$ is guidable in $\mathbb{T}_k$.

Suppose that after $i$ rounds of the simulation game, Eve's token is at the state $(S,1)$ with the valuation $\nu$, while Adam's token is at the state $q$ in $\T'$ with the valuation $\nu'$. In the $(i+1)^{th}$ round, if Adam selects the letter $b$ then Eve's transition in $\T$ is determined. Otherwise, suppose Adam selects the letter $(a,t)$ with the duration being $d$, and the transition $(q,a,g,X,q')$. Let $X'$ be the set of clocks that are integers in the valuation $\nu'+d$. Eve's strategy $\sigma$ takes the transition that resets the clocks $X \cup X'$ in $\T$. As the second component of $\T$ is determined, Eve has at most one transition available to her from states of the simulation game, which is the transition that $\sigma$ chooses. 

We will show that $\sigma$ is a winning strategy, unless $L(\T') \nsubseteq L(\T)$. For each valuation $\nu: C \xrightarrow{} \mathbb{R}$, let $\Frac(\nu) = \langle \Frac(\nu(c_1)), \Frac(\nu(c_2),$ $\cdots, \Frac(\nu(c_k)))\rangle$ be the $k$-tuple that consists of the fractional part of the clock valuations of $\nu$. An inductive argument gives us that the strategy $\sigma$ described above preserves the invariant that after each round of the simulation game, if Adam's token is at $q'$ with valuation $\nu'$ and Eve's token is at the state $(S,1)$ with valuation $\nu$ in $\T$, then $\Frac(\nu) = \Frac(\nu')$.

Suppose that there is a play of the simulation game that agrees with $\sigma$, but which Adam wins. Let $w$ be a timed word accepted by $\T'$ via the run $\rho'$ that Adam constructs in the simulation game; then Eve's run $\rho$ in $\T$ constructed according to $\sigma$ is rejecting. 

We assume that $w$ is of the form $w =(a,t_1) (a,t_2) \cdots (a,t_l) (b,t)$ $(a,t'_1) (a,t'_2) \cdots (a,t'_m)$, as otherwise $L(\T') \nsubseteq L(\T)$. Let $\{\alpha_1,\alpha_2,\cdots,$ $\alpha_n\} \in [l]$ be the largest set such that the timed letters $(a,t_{\alpha_i})$ and $(a,t_{\alpha_i}+1)$ both occur in the word $w$, $t-1< t_{\alpha_i} < t$ for each $i \in [n]$, and $t_{\alpha_i} \neq t_{\alpha_j}$ for all $1 \leq i < j \leq n$. We say that $n$ is the number of \emph{good pairs} in the word $w$. If $n<k$, then $w \notin L_k = L(\T)$, and hence $L(\T') \nsubseteq L(\T)$. Hence, suppose $n \geq k$. We will show that there is a word $\hat{w}$ that is accepted by $\T'$ but has strictly less than $k$ good pairs. This will imply that $L(\T') \nsubseteq L(\T)$.

Let $\nu'$ and $\nu$ be the valuations of the clocks in $\rho'$ and $\rho$ respectively, after the transition corresponding to the timed letter $(a,t_l)$ in the word $w$ is taken. Let Eve's token be at the state $(S,1)$ at this point, and $p$ be the maximum number of clocks in $S$ such that all of them have distinct valuations in $\nu$ of the form $t_l-t_{\alpha_i}$ for some $i$. Without loss of generality, suppose that the valuations of these $p$ clocks are $t_l -t_{\alpha_1},t_l-t_{\alpha_2}, \cdots,t_l-t_{\alpha_p}$, respectively. Note that the run $\rho$ on the suffix $(b,t) (a,t'_1) (a,t'_2) \cdots (a,t'_m)$ is uniquely determined in $\T$ from the state $(S,1)$ with valuation $\nu$ and, by assumption, is rejecting. Hence, $p$ is strictly less than $k$. Let $(a,t'_\beta)=(a,t_{\alpha_i}+1)$ for $\beta \in [m]$ be the first occurrence of the timed letter $(a,t_{\alpha_i}+1)$ in the word $w$, for an $i \in [n] \setminus [p]$. We claim that
\begin{claim}
Each of the clocks in the valuation $\nu''$ after the delay step of the transition on $(a,t'_{\beta})$ in $\rho'$ is non-integral. 
\end{claim}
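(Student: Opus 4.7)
The plan is to argue by contradiction: suppose some clock $c_j$ of $\T'$ has integer value in $\nu''$. Write $t_r$ for the timestamp at which $c_j$ was last reset in $\rho'$ strictly before the transition on $(a,t'_{\beta})$, with the convention $t_r = 0$ if $c_j$ was never reset. The integrality hypothesis then gives $t'_{\beta} - t_r \in \mathbb{Z}_{\geq 1}$. The first step is to show $t_r \leq t_l$; that is, the last reset falls in the prefix, not the suffix. For if instead $t_r = t'_{j'}$ for some $j' < \beta$, then $t < t_r < t'_{\beta}$, and combining this with $t > t_{\alpha_i}$ (so $t'_\beta - t < 1$) gives $t'_{\beta} - t_r \in (0,1)$, contradicting integrality.

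Next I locate $t_r$ relative to $t_{\alpha_i}$. Since $t_{\alpha_i} \in (t-1,t)$ and $t_r \leq t_l \leq t$, a case analysis rules out $t_r > t_{\alpha_i}$, which would push $t'_{\beta} - t_r = 1 - (t_r - t_{\alpha_i})$ into $(0,1)$. Hence $t_r \leq t_{\alpha_i}$ and, writing $t_r = t_{\alpha_i} - k$, the integer $k = t'_{\beta} - t_r - 1$ is non-negative.

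The key step is to use Eve's strategy $\sigma$ to pin down her last reset of $c_j$ during the prefix. Recall that $\sigma$ resets $c_j$ at a letter $(a,t_s)$ exactly when Adam either resets $c_j$ at $t_s$ or has $c_j$ integer-valued after the delay step (i.e., $t_s$ minus Adam's last reset of $c_j$ is integer). When $k=0$, Adam himself resets $c_j$ at $t_{\alpha_i}$, so Eve does too; when $k \geq 1$, at the letter $(a,t_{\alpha_i})$ Adam's value of $c_j$ is exactly $k \in \mathbb{Z}_{\geq 1}$, and Eve's strategy resets $c_j$ there. To rule out any later reset in the prefix, I observe that for $t_{\alpha_i} < t_s \leq t_l$, Adam's value of $c_j$ is $t_s - t_r \in (k, k+1)$ (using $t_l - t_{\alpha_i} < t - t_{\alpha_i} < 1$), which is never an integer, and Adam himself does not reset $c_j$ after $t_r$ by choice of $t_r$. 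Hence Eve's last reset of $c_j$ in the prefix is at $t_{\alpha_i}$, so $c_j \in S$ and Eve's value $\nu(c_j) = t_l - t_{\alpha_i}$.

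This directly contradicts the maximality of $p$: the clock $c_j \in S$ has a value of the form $t_l - t_{\alpha_{i'}}$ with $i' = i$, so after the WLOG relabelling the index $i$ must lie within $[p]$, contradicting $i \in [n] \setminus [p]$. The main technical obstacle is the third paragraph: correctly tracking Eve's resets via the integer-trigger in $\sigma$, and exploiting that the window $(t-1,t)$ containing $t_{\alpha_i}$ has length less than one, in order to rule out extraneous integer offsets both before and after $t_{\alpha_i}$ within the prefix.
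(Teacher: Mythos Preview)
Your proof is correct and follows essentially the same approach as the paper: both deduce from the integrality hypothesis that Eve's last reset of the offending clock must occur at timestamp $t_{\alpha_i}$, giving $\nu(c_j)=t_l-t_{\alpha_i}$ and contradicting the maximality of $p$. The only cosmetic difference is the entry point---you start from Adam's last reset $t_r$ in $\rho'$ and push forward to locate Eve's last reset, whereas the paper starts from Eve's last reset $t_c$ in $\rho$ and relates it back to Adam's; your route is in fact a bit cleaner (and the one edge case in your $k=0$ branch when $t_r=0$ by convention is harmless, since Adam's clock at letter $\alpha_i$ is then $0$, hence integer, so Eve still resets via the $X'$ component of $\sigma$).
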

\begin{proof}[Proof of the claim]
For each clock $c$, let $(a,t_{c})$ be the last timed letter at which the clock $c$ was reset in the run $\rho$. It follows that the clock $c$ was not reset in $\rho'$ after $(a,t_c)$ while the run $\rho$ was in the first component, i.e., at least until the timed letter $(b,t)$ is seen. If the clock $c$ was reset in $\rho'$ after $(b,t)$, then $\nu''(c)\leq t_{\alpha_i} + 1 -t < 1$. As $(a,t'_{\beta})$ is the first occurrence of a timed letter of the form $(\sigma,t_c)$, we know that $\nu''(c)>0$. Thus $\nu''(c)$ is not an integer. 

Hence, suppose that $(a,t_c)$ was the last timed letter at which the clock $c$ was reset in $\rho'$, before $(a,t'_{\beta})$. Then $\nu''(c) = t_{\alpha_i} + 1 -t_c$. If $\nu''(c)$ is an integer, then $t_{\alpha_i} - t_c$ is also an integer. As $\sigma$ resets a clock in $\rho$ in the first component whenever the valuation of the clock is $1$, we get that the clock $c$ must have been last reset at a timed letter of the form $(a,t_{\alpha_i})$ in the run $\rho$. Thus,  $\nu(c) = t_l - t_{\alpha_i}$, which is a contradiction to the fact that none of the clock valuations in $\nu$ equals $t_l - t_{\alpha_i}$. 
\end{proof}
Since each clock has a non-integral valuation in the delay step on the first $(a,t_{\alpha_i}+1)$, we can replace all occurrences of the timed letters of the form $(a,t_{\alpha_i}+1)$ in $w$ with $(a,t_{\alpha_i}+\epsilon)$ for some $\epsilon>0$ such that \begin{enumerate}
    \item $t_{\alpha_i}+\epsilon \neq t_{\alpha_j}$ for any $j \in [n] \setminus \{i\}$,
    \item $w'$ is still accepted in $\T'$ by an accepting run $\rho''$ that visits the same states $\rho'$.
\end{enumerate}
The number of good pairs in $w'$ now is $n-1$. We can iteratively repeat the above procedure on $w'$ to get a word $\hat{w}$ that has exactly $p$ good pairs and is accepted by $\T'$. As $p<k$, however, $\hat{w}$ is not in $L_k$, and hence, $L(\T')$ is not a subset of $L(\T)$.
\end{proof}



\end{document}